\pgfplotsset{compat=1.10}
\tikzstyle{res}=[draw=black, fill=white, minimum size=1.3em, font={\tiny}]
\newcommand{\marksize}{1pt}
\newcommand{\mysize}{0.7cm}
\newcommand{\myscale}{0.78}
\newcommand{\myoscale}{1.2}
\newcommand{\heightbp}{6.5cm}
\newcommand{\scalebp}{1}
\tikzstyle{res}=[draw=black, fill=white, minimum size=1.3em, font={\tiny}]
\tikzstyle{node}=[circle,draw=black, fill=black, minimum size=0.3em, inner sep=0pt, font={\tiny}]
\definecolor{myred}{RGB}{220,43,25}
\definecolor{mygreen}{RGB}{0,146,64}
\definecolor{myblue}{RGB}{0,143,224}
\definecolor{darkgreen}{rgb}{0.0, 0.5, 0.0}
\newcommand{\Cross}{$\mathbin{\tikz [x=1.4ex,y=1.4ex,line width=.2ex, red] \draw (0,0) -- (1,1) (0,1) -- (1,0);}$}%
\def\mybig#1{{\hbox{$\left#1\vbox to23\p@{}\right.\n@space$}}}
\newcommand{\Z}{\mathbb{Z}}
\newcommand{\convk}{\mathrm{conv}^k}
\newcommand{\res}[2]{\mathrm{res}\big(#1,#2\big)}
\newcommand{\dom}{\mathrm{rdom}}
\newcommand{\pen}{\mathrm{pen}}
\newcommand{\ext}{\mathrm{conv}}
\newcommand{\extt}{\mathrm{ext}}
\newcommand{\Pre}[1]{\mathrm{Pre}_{#1}}
\newcommand{\Str}[1]{\mathcal{S}_{#1}}
\newcommand{\Stra}{\mathcal{S}}
\newcommand{\conv}{\mathrm{conv}}
\newcommand{\N}{\mathbb{N}}	
\newcommand{\R}{\mathbb{R}}	
\patchcmd{\SetTagPlusEndMark}{}{}{}{}
\newcommand{\epi}{\textrm{epi}}
\title{Generalized Nash Equilibrium Problems\\ with Mixed-Integer Variables\thanks{A one-page abstract  appeared in the  \emph{Proceedings of the 17th Conference on Web and Internet Economics, 2021}~\cite{WINEGNEP}}}
\newcommand{\myparagraph}[1]{\paragraph{#1}}
\newcommand{\mycup}{\mathop{{\textstyle\bigcup}}}
\newcommand{\mtilde}[1]{\tilde{#1}}
\newcommand{\myhspace}{\hspace{0cm}}
\newtheorem{theorem}{Theorem}
\newtheorem{proposition}{Proposition}
\newtheorem{definition}{Definition}
\newtheorem{lemma}{Lemma}
 \newtheorem{corollary}{Corollary}
\newtheorem{example}{Example}
\author{Tobias Harks and Julian Schwarz}
\affil{\small University of Passau, Faculty of Computer Science and Mathematics, 94032 Passau\\
\href{mailto:julian.schwarz@uni-passau.de}{\{\texttt{tobias.harks,julian.schwarz\}@uni-passau.de}}}
\begin{document}

\maketitle

\begin{abstract}
    We consider generalized Nash equilibrium problems (GNEPs)
    with non-convex strategy spaces and non-convex cost functions.
    This general class of games includes  the important case of games with mixed-integer variables for which only a few results are known in the literature. We present a new approach to characterize equilibria
    via a convexification technique using the Nikaido-Isoda function. To any given instance  of the GNEP, we construct a set of  convexified instances and show that  a feasible strategy profile is an equilibrium for the original instance if and only if it is an equilibrium for any convexified instance and the convexified cost functions coincide with the initial ones. 
    We develop this {convexification approach} along three  dimensions:
   We first show that for \emph{quasi-linear} models, where a convexified instance exists in which  for fixed strategies of the opponent players, the cost function of every player is linear and the respective strategy space is polyhedral, 
  the  convexification reduces the GNEP to a standard (non-linear) optimization problem.
  Secondly, we derive two complete characterizations 
  of those  GNEPs for which the convexification leads to a
  jointly constrained or a jointly convex GNEP, respectively. 
  These characterizations require new concepts related to the interplay of the convex hull operator applied to restricted
  subsets of feasible strategies and may be interesting on their own. {Note that this characterization is also computationally relevant as jointly convex GNEPs
  have been extensively studied in the literature.}
Finally, we demonstrate the applicability of our  results by presenting a numerical study {regarding the computation of equilibria for three classes of GNEPs related to integral network flows and discrete market equilibria.}

\end{abstract} 


\section{Introduction}
The generalized Nash equilibrium problem  constitutes a fundamental class of noncooperative
games with applications in economics~\cite{Debreu54}, transport systems~\cite{Beckmann56} and electricity markets~\cite{Anderson13}.
The differentiating feature of GNEPs compared to classical games
is the flexibility to model dependencies among the strategy spaces of players, that is, 
the individual strategy space of every player depends on the strategies chosen by the rival players.
Examples in which this aspect is crucial appear  for instance in market games where
discrete goods are traded and the buyers have hard spending budgets: effectively, the strategy space of a buyer depends on the market price (set by the seller) as only those bundles of goods remain affordable that fit into the budget. Other examples appear in transportation systems, where joint capacities (e.g. road-, production- or storage capacity)  constrain the strategy space of a player. 
For further applications of the GNEP and an overview of the general theory, we refer to the excellent survey articles of Facchinei and Kanzow~\cite{FacchineiK10} and Fischer et al.~\cite{FISCHER2014}.

While the GNEP is a research topic with constantly increasing interest, the majority of work is concerned with the continuous and convex GNEP, i.e., instances of the GNEP where the strategy sets of players are convex and the cost functions are at least continuous with regard to all players' strategy choices and often convex in the own strategy choice. 
Our focus in this paper is to derive insights into non-convex or discrete GNEPs including GNEPs with mixed-integer variables. As main approach, we  will reformulate a non-convex GNEP via a convexification technique and then identify expressive subclasses of GNEPs which can then be solved by standard optimization problems or be reformulated via more structured convexified GNEPs.
  
Let us introduce the model formally and first recap the standard pure Nash equilibrium problem (NEP).
For an integer $k \in \N$, let $[k] := \{1,\dots,k\}$. Let $N = [n]$ be a finite set of players.
 Each player $i \in N $ controls the variables $x_i\in X_i\subseteq\R^{k_i}$. We call $x = (x_1,\dots, x_n)$ with $x_i \in X_i$ for all $i \in N$ a strategy profile and $X = X_1 \times \dots \times X_n\subseteq \R^k$ the strategy space, where 
 $k := (k_1,\ldots,k_n)$ and $\R^{(l_1,\ldots,l_s)} := \R^{\sum_{i=1}^s l_i}$ for any vector $l \in \N^s$ and $s \in \N$. We use standard game theory notation; for a strategy profile $x \in X$, we write $x = (x_i, x_{-i})$ meaning that $ x_i$ is the strategy that player~$i$ plays in $x$ and $x_{-i}$ is the partial strategy profile of all players except $i$.
The private cost of player~$i\in N$ in strategy profile $x \in X$ is defined by a function
$\pi_i:\R^k\rightarrow\R, x\mapsto \pi_i(x)$.
A (pure) Nash equilibrium is a strategy profile $x^*\in X$ with
\[ \pi_i(x^*)\leq \pi_i(y_i, x_{-i}^*) \text{ for all }y_i\in X_i,\; i\in N.\]
The GNEP generalizes the model by allowing that the strategy sets of every player
may depend on the rival players’ strategies. More precisely, for any $x_{-i}\in \R^{k_{-i}}$
(using the notation $k_{-i}:= (k_j)_{j \neq i}$), 
there is a feasible strategy set $X_i(x_{-i})\subseteq\R^{k_{i}}$.
In this regard, one can think of the strategy space of player $i\in N$ represented by a set-valued mapping $X_i:\R^{k_{-i}}\rightrightarrows \R^{k_{i}}$.
This leads to the notation of the combined strategy space represented by a mapping $X:\R^{k}\rightrightarrows \R^{k}, x \mapsto \prod_{i \in N} X_i(x_{-i})$ with $y\in X(x)\Leftrightarrow y_i\in X_i(x_{-i})$ for all $i\in N$.
The private  cost function  is given by $\pi_i: \mathbb{R}^k \to \mathbb{R}$ for every player $i\in N$. The problem of player $i\in N$ -- given the rivals' strategies $x_{-i}$ -- is to solve the following minimization problem:
\begin{equation}\label{eq: PlayerOpt}
\textstyle {\inf}_{y_i} \pi_i(y_i, x_{-i}) \text{ s.t.: } y_i\in X_i(x_{-i}).
\end{equation} 
A generalized Nash equilibrium (GNE) is a feasible strategy profile $x^*\in X(x^*)$ with
\[ \pi_i(x^*)\leq \pi_i(y_i, x^*_{-i}) \text{ for all }y_i\in X_i(x^*_{-i}),\; i\in N.\]
We can compactly represent a GNEP  by the tuple $I = (N,(X_i(\cdot))_{i\in N},(\pi_i)_{i \in N})$. 
In the sequel of this paper,  we will heavily use the Nikaido-Isoda function (short: NI-function), see \cite{nikaido1955}. 
\begin{definition}[NI-function]
Let an instance $I=(N,(X_i(\cdot))_{i\in N},(\pi_i)_{i \in N})$  of the GNEP be given.
For any two vectors $x,y \in \mathbb{R}^k$, the NI-function is defined as:
\begin{align*}
\textstyle\Psi(x,y) := \sum_{i \in N} \left[ \pi_i(x)-\pi_i(y_i,x_{-i}) \right].
\end{align*}
\end{definition}
By defining $\hat{V}(x) := \sup_{y \in X(x)} \Psi(x,y)$ we can recap the following well-known characterization of a generalized Nash equilibrium, see for instance Facchinei and Kanzow~\cite{FacchineiK10}. 
\clearpage

\begin{theorem}\label{CharGNE}
For an instance $I$ of the GNEP the following statements are equivalent.
\begin{itemize}
\item[1.] $x$ is a generalized Nash equilibrium for $I$.
\item[2.] $x \in X(x)$ and $\hat{V}(x) = 0$.
\item[3.] {$x$ is an optimal solution of $\inf_{x \in X(x)}\hat{V}(x)$ with value zero.}   
\end{itemize}
\end{theorem}
This characterization does not rely on any convexity assumptions 
on the strategy spaces nor on the private cost functions of the players.
Yet, the characterization seems computationally of limited interest as neither the Nikaido-Isoda function itself 
nor the fixed-point condition $x \in X(x)$  seems computationally tractable.

\subsection{Our Results and Organization of the Paper}
Our approach relies on a convexification technique applied to the original non-convex game
leading to a new characterization of the existence of Nash equilibria for GNEPs. 
In particular, we derive for any instance $I$ of the GNEP a set of convexified instances $\mathcal{I}^\conv$. Roughly speaking, the latter set consists of all those instances $I^\conv =(N,(X_i^\conv(\cdot))_{i \in N}, (\phi_i)_{i \in N})$, where for all players $i \in N$ and rivals strategies $x_{-i}$ contained in a certain subset of $\R^{k_{-i}}$, 
 the convexified strategy space $X_i^\conv(x_{-i})$ is given by the convex hull $\conv(X_i(x_{-i}))$ of the original strategy space and  the convexified private cost function $x_i \mapsto \phi_i(x_i,x_{-i})$ is the convex envelope of $x_i \mapsto \pi_i(x_i,x_{-i})$.
Our main result (Theorem~\ref{thm:main}) states that for any $I^\conv \in \mathcal{I}^\conv$, a strategy profile 
$x\in X(x)$ is a GNE for $I$ if and only if it is a GNE for $I^{\conv}$ and the convexified cost functions coincide with the original ones. The proof is based on using the Nikaido-Isoda functions for both games $I^{\conv}$ and $I$.
While the convexified {instances} may admit an equilibrium under certain circumstances, this equilibrium might still not be feasible for the original non-convex game. The advantage of our convex reformulation, however, lies in
the possibility that for some problems, it is computationally tractable to solve a convexified instance while preserving feasibility with respect
to the original game. 
In this regard, we study several
subclasses of GNEPs for which this methodology applies.

  In Section~\ref{sec:darstellung}, we consider
\emph{quasi-linear} GNEPs in which the cost functions of players are quasi-linear and the players' strategy spaces are quasi-polyhedral sets, that is, 
they admit a structure which allows  the convexified private cost functions to be chosen linearly 
for fixed
strategies of the other players. Similarly, the convexified strategy sets can be described by polyhedra whenever the rivals' strategies are fixed.
{By reformulating the $\hat{V}$ function of an associated convexified instance, 
we show in Theorem~\ref{theorem:darstellung} that a quasi-linear GNEP can be modeled as a standard (non-linear) optimization problem. The reformulation uses linear duality of the players' optimization problems and we note that this approach has been used before by Stein and Sudermann-Merx~\cite{stein2016cone} for the special case of linear GNEPs in which the cost functions and strategy sets can be described by linear functions in $x$. }

In Section~\ref{sec: JoinCons}, we study
\emph{jointly constrained} GNEPs which are also called GNEPs with \emph{shared constraints}.  These games have the differentiating feature that the players' strategy sets are  restricted via a shared feasible set $X\subseteq\R^k$. If $X$ is convex, one speaks of a jointly convex GNEP but we do not impose this on $X$ a priori. 
GNEPs with shared constraints have been extensively studied in the literature 
and in this regard we analyze the structure of
original non-convex GNEPs $I$ (not even jointly constrained) for which the set of convexified instances $\mathcal{I}^\conv$ contains a \emph{jointly constrained} or even a \emph{jointly convex} instance.
To this end, we introduce the new classes of \emph{$k$-restrictive-closed} and \emph{restrictive-closed} GNEPs for which we show that they completely characterize whether or not $\mathcal{I}^\conv$ contains a jointly constrained or a jointly convex instance, respectively.  
The property of ($k$-)restrictive-closedness is for example fulfilled for all $\{0,1\}^k$ jointly constrained instances $I$
and thus admits interesting applications.

In Section \ref{sec: Comp}, we present numerical results on the computation of equilibria for 
{three classes of GNEPs related to integral network flows and discrete market equilibria }
which are shown to belong to both classes of restrictive-closed and quasi-linear GNEPs. To find equilibria of an instance $I$,
{ we propose two different different methods based on our convexification result. 
Firstly, we present an approach where our quasi-linear reformulation is plugged into a standard non-convex solver (\textsc{BARON}). 
Secondly, we try to compute an integral GNE of a specific convexified instance $I^\conv \in \mathcal{I}$.  
by implementing different procedures
from the literature for solving a convex GNEP, 
enhanced by a simple rounding procedure in order to obtain an integral equilibrium.
}
Perhaps surprisingly, it turned out that our quasi-linear approach was not only faster (on average) in finding specifically integral GNE
for the original non-convex GNEP but also for computing (not necessarily integral) GNE for the convexified instances.

\subsection{Related Work} 
\myparagraph{Continuous and Convex GNEPs.} GNEPs have been studied intensively in terms of equilibrium existence and numerical
algorithms. It is fair to say, that the majority of works focus on the continuous and convex
case, that is, the cost functions of players are convex (or at least continuous) 
and the strategy spaces are convex.  One major reason for these restrictive assumptions
lies in the lack of tools to prove existence of equilibria. Indeed, most existence
results rely on an application of Kakutani's fixed point theorem which in turn
requires those convexity assumptions (e.g.~Rosen~\cite{Rosen65}).
We refer to the survey articles of Facchinei and Kanzow~\cite{FacchineiK10} and Fischer et al.~\cite{FISCHER2014} for an overview of the general theory. 

{We discuss in the following various approaches for computing GNE for convex and continuous GNEPs. }
Based on reducing the GNEP to the standard NEP, Facchinei and Sagratella~\cite{FacchineiS11} described an algorithm to compute all solutions of a jointly convex GNEP, where the joint restrictions are given by linear equality constraints. 
However, this algorithm does not terminate in finite time whenever there are infinitely many equilibria. 
Dreves~\cite{Dreves14,Dreves17} tackled this problem via an algorithm which computes in finite time the whole solution set for two different types of GNEPs. In \cite{Dreves14}, he investigated affine GNEPs with one-dimensional strategy sets in which the players' optimization problems are convex quadratic problems with a common linear constraint in $x$. 
{
The other type of GNEPs considered by Dreves~\cite{Dreves17} are the linear (not necessarily jointly convex) GNEPs introduced by Stein and Sudermann-Merx~\cite{stein2016cone}. While Dreves investigated the computation of all solutions, Stein and Sudermann-Merx studied the smoothness of a 
certain gap function that arises via a suitable extension of the $\hat{V}$ function. The latter extension is based on a dualization approach regarding
the second part of the NI-function, allowing for a reformulation of   $\hat{V}(x) = \sup_{y \in X(x)}\Psi(x,y)$  as a minimization problem. 
Note that this dualization step will also play a key role in our analysis of quasi-linear GNEPs in Section~\ref{sec:darstellung}. 
The applicability of the findings of Stein and Sudermann-Merx was demonstrated in~\cite{dreves2016solving}  by Dreves and Sudermann-Merx where they
investigated various numerical methods to compute equilibria of linear GNEPs, cf.~also~\cite{stein2018noncooperative}.} 
Returning to the jointly convex GNEP,  Heusinger and Kanzow~\cite{Heusinger2009-2} presented an optimization reformulation using the Nikaido-Isoda function, assuming that the cost functions $\pi_i(x_i,x_{-i})$ of the players are (at least) continuous in $x$ and convex in $x_i$. 
Under the same assumptions concerning the cost functions, Dreves, Kanzow and Stein \cite{Dreves_Kanzow_Stein} generalized this approach to \emph{player-convex} GNEPs, where additionally to the assumptions on the cost functions, the strategy sets are assumed to be described by $X_i(x_{-i}) = \{x_i\mid g_i(x_i,x_{-i})\leq 0\}$ for a restriction function $g_i$ which is (at least) continuous in $x$ and convex in $x_i$.
In comparison to this optimization reformulation, Dreves et al.~\cite{DrevesFKS11} took a different approach to finding equilibria via the KKT conditions of the GNEP. Under sufficient regularity, e.g.~$C^2$ cost- and restriction functions, they discuss how the KKT system of the GNEP may be solved in order to find generalized Nash equilibria. 

While the assumptions concerning the cost- and restriction functions in the above papers 
are  mild in the context of continuous GNEPs, it is a priori not clear, whether or not there exists a convexified instance in $\mathcal{I}^\conv$ which fulfills them,
and then allows for the application of algorithms from the domain of convex and continuous GNEPs. In this regard, we are concerned {in Section~\ref{sec: JoinCons}} with identifying subclasses of GNEPs
which guarantee the existence of such well-behaved convexified instances in  $\mathcal{I}^\conv$. 
\myparagraph{Non-Convex and Discrete GNEPs.}
For non-convex and discrete GNEPs, much less is known regarding the existence and computability of equilibria. 
{In fact, the only computational approach for finding pure GNE we are aware of {are} that of }
Sagratella~\cite{Sagratella2017AlgorithmsFG,sagratella2019generalized}. {In the former,} two different techniques for the subclass of so called \emph{generalized potential games} with mixed-integer variables {are presented}.
Similar to the jointly convex GNEPs, in these potential games, the players are restricted through a common convex set $X$ with the further restriction that some strategy components need to be integral and there is a potential function over the set $X$.
On the one hand, Sagratella introduced certain optimization problems with mixed-integer variables based on the fact that minimizers of the potential function correspond to a subset of generalized Nash equilibria. On the other hand, he showed that a Gauss-Seidel best-response (BR) algorithm may approximate equilibria arbitrary well within a finite amount of steps in this setting. 
{We remark, however, that a BR-algorithm is not a correct approach for GNEPs not admitting a potential function, because there are trivial examples (even for standard NEPs) in which a BR-algorithm may cycle forever and not terminate although equilibria exist.
In particular,  for  GNEPs that are not jointly constrained, the best response of a player may lead to an infeasible overall strategy profile resulting in  an empty best-response correspondence for some player. 
In this regard, Sagratella's results as well as several interesting models that have emerged based upon his results in the domain of automated driving \cite{FabianiAD20}, traffic control \cite{cenedese2021highway} or transportation problems \cite{SagratellaNFCTP}
are not directly extendable to the general mixed-integer GNEP setting that we consider in this paper.}
{
In~\cite{sagratella2019generalized}, Sagratella generalized
his approach in~\cite{Sagratella17} for standard NEPs (c.f.~below) and considered mixed-integer GNEPs in which each player's strategy set has partial integer constraints and depends on the other players' strategies via a linear constraint in her own strategy and the other players' integrally constrained strategies. Under further convexity and continuity assumptions regarding the cost functions, he then proposed a branch and bound method based on merit functions as well as a branch and bound method exploiting dominance of strategies for suitable cuts. }

Besides the work of Sagratella, we are only aware of the paper~\cite{AnandutaGrammatico2021} by Ananduta and Grammatico which deals with mixed-integer GNEPs. {They considered a model in which the dependency of costs and constraints can be additively separated between the continuous and integer variables. However, the authors did not  directly deal with the mixed-integer GNEP but rather with a mixed-strategy extension of the game.}
That is, the players are assumed to choose probability distributions over their integral strategies and the game is solved using mixed-strategy GNEP without integrality constraints. Based upon the specific structure of the latter, the authors then introduced a Bregman forward-reflected-backward splitting and design distributed algorithm to compute equilibria. 
{In this paper, we focus on pure GNE, partly because mixed or correlated strategies have no
meaningful physical interpretation in some games; see also the
discussion in Osborne and Rubinstein~\cite[\S~3.2]{Osborne94} about
critics of mixed Nash equilibria.
In particular, the definition of  a meaningful randomization concept for general GNEPs is non-trivial. This is, for instance, illustrated by 
 the special class of separable GNEPs considered in~\cite{AnandutaGrammatico2021}, where the proposed concept of mixed equilibria may associate a non-zero probability to strategy profiles which are not even feasible. }

While mixed-integer GNEPs are fairly unexplored,
there is a growing research  regarding general formulations of  non-convex (standard) NEPs. 
For instance, the so-called Integer Programming Games (IPGs) have recently gained interest in which the optimization problem of each player consists of minimizing a continuous function over a (fixed) polyhedron with partial integrality constraints. 
IPGs in a less general setting were first introduced by Köppe et al.~\cite{Koeppe11}, where they investigated the computational complexity of
computing pure NE.  
Along these lines, Carvalho et al.~\cite{carvalho2018existence,carvalho2022computing} investigated the existence~\cite{carvalho2018existence} and computation~\cite{carvalho2022computing} of NE {as well as recently published a tutorial on the computation of NE in IPGs in~\cite{carvalho2023integer}. Extending the Sample generation
method  presented in~\cite{carvalho2022computing} for IPGs,
Crönert and Minner~\cite{cronert2022equilibrium} proposed an algorithm for the complete enumeration of all mixed equilibria in finite games, i.e.~games belonging to the standard NEP with strategy sets being finite and succinctly described by a finite amount of inequalities.   }  
Regarding more specialized subclasses of IPGs, Del Pia et al.~\cite{PiaFM17} introduced a strongly polynomial-time algorithm to compute NE and  derived several related complexity results for IPGs in which each player has a totally unimodular strategy set. On the basis of aggregating the players strategy spaces, this approach has been extended by Kleer and Schäfer~\cite{KleerS17}.  
Another subclass of IPGs was explored in~\cite{guo2021copositive} where the authors focus on games in which each player solves a mixed-binary quadratic program where the cost function is quadratic in the whole strategy profile. Leveraging a completely positive program reformulation established by Burer~\cite{burer2009copositive}, the authors tackled the computation of NE via the associated KKT conditions.
For IPGs in which the strategy spaces are boxes, Kirst, Schwarz and Stein~\cite{kirst2022branch} recently proposed a branch-and-bound algorithm that computes the set of all approximate equilibria for a given approximation error based on discarding rules that determine sets which do not contain equilibria.  
Finally regarding IPGs, Dragotto and Scatamacchia~\cite{dragotto2023zero} presented an algorithm based on cutting planes to tackle the computation, enumeration and selection of pure NE in IPGs with purely integral strategies.

Departing from IPGs, Carvalho et al.~\cite{carvalho2021cut} introduced the class of Reciprocally-Bilinear Games (RBGs), where for each player, the cost function is bilinear in her own and her rivals' strategies while her strategy set is only required to have a convex hull whose closure is polyhedral.
A  Cut-and-Play algorithm is then presented which computes a mixed Nash equilibrium of an RBG based upon a scheme consisting of mainly two components: the computation of mixed Nash equilibria for approximated games and the  eventual refinement of the approximations.

Finally, another class of non-convex NEPs was investigated by Sagratella~\cite{Sagratella16} 
where a  branching method is presented to compute all NE for NEPs where the cost function of a player is convex in her own strategy and continuous in the complete strategy profile while her strategy set is given by a convex restriction function combined with integrality constraints for all strategy components. 
Recently, Schwarze and Stein~\cite{schwarze2022branch} were able to drop the latter convexity assumption of players' cost functions by  extending Sagratella's   approach via a branch-and-prune algorithm.

\section{Convexification}
In this section we will introduce for any instance $I=(N,(X_i(\cdot))_{i\in N},(\pi_i)_{i \in N})$ of the GNEP a corresponding set of \emph{convexified instances} $\mathcal{I}^\conv$ of the GNEP. 
In order to describe this convexification method, we introduce the following concept of a \emph{refined domain}: \begin{definition}
 Let $I=(N,(X_i(\cdot))_{i\in N},(\pi_i)_{i \in N})$ be an instance of the GNEP. For any $i \in N$ we define the \emph{refined domain} of the set-valued mapping $X_i: \R^{k_{-i}} \rightrightarrows \R^{k_i}$ by
 {
 \begin{align*}
     \mathrm{rdom}X_i  := \left\{x_{-i} \in \R^{k_{-i}} \mid \exists\, \mtilde{x}_i \in \R^{k_i}: (\mtilde{x}_i,x_{-i}) \in X\big((\mtilde{x}_i,x_{-i})\big) \right\}. 
 \end{align*}}
\end{definition} 
This concept is a refinement of the standard domain $\mathrm{dom}X_i := \{x_{-i} \in \R^{k_{-i}} \mid X_i(x_{-i}) \neq \emptyset\}$. Clearly, the standard domain of $X_i$ contains the refined domain. However, the refined domain is in general a proper subset as it also takes into account whether the rivals' strategies $x_{-i}$ are feasible or not. That is, only the rivals' strategies $x_{-i}$ are in the refinement where at least one feasible strategy $\mtilde{x}_i \in X_i(x_{-i})$ for player $i$ exists such that for any of her rivals $j\neq i$, the strategy $x_j \in X_j\big((\mtilde{x}_i,x_{-ij})\big)$ is feasible. Here, $(\mtilde{x}_i,x_{-ij})\in\R^{k_{-j}}$ is the partial strategy profile of all players except $j$ in which player $l$ plays strategy $x_l$ for $l \notin \{i,j\}$ and $\mtilde{x}_i$ for $l = i$. 
{Alternatively, the refined domain of player $i$ can be seen as the projection of the set of feasible strategy profiles $x \in X(x)$ to the rivals' strategy space $\R^{k_{-i}}$.}
This idea of relevant strategies leads to the following definition of what we call \emph{quasi-isomorphic} instances of the GNEP. 
\begin{definition}
Two instances $I=(N,(X_i(\cdot))_{i\in N},(\pi_i)_{i \in N})$ and $I'=(N,(X_i'(\cdot))_{i\in N},(\pi'_i)_{i \in N})$ are called \emph{quasi-isomorphic}, if for all $i \in N$ the refined domains coincide $\dom X_i = \dom X_i'$ and for all $x_{-i} \in \dom X_i$ the strategy sets and cost functions coincide, i.e.~$X_i(x_{-i}) = X_i'(x_{-i})$ and $\pi_i(x_i,x_{-i}) = \pi'_i(x_i,x_{-i})$ for all $x_i \in X_i(x_{-i})$.
\end{definition}
The above concept of quasi-isomorphic GNEPs $I,I'$  requires that the set of feasible strategy profiles of both games coincide and for each feasible strategy profile $x\in X(x)$ of $I$ the conditions for it to be a GNE in $I$ are precisely the same conditions as in $I'$ and vice versa.

{
Our convexification approach relies heavily on the concept of convex envelopes which we introduce next, cf. Rockafellar~\cite{Rockafellar70}. 
\begin{definition}
    Let $f:M\to \R\cup\{-\infty\}$ with $M\subseteq \R^l$ for some $l \in \N$ be a function into the extended reals.
    We denote by $\epi(f):=\{(x,y)\in M \times \R\mid y\geq f(x)\}$ its epigraph and call $\phi:\conv(M) \to \R\cup\{-\infty\}$ with 
    $\phi(x) := \inf\{ y\in \R \mid (x,y) \in \conv(\epi(f))\}\in \R\cup\{-\infty\}$ the convex envelope of $f$.  
\end{definition}
Note that Rockafellar~\cite[Theorem 5.3]{Rockafellar70} showed that the convex envelope
is a well-defined convex function (which may attain the value $-\infty$).  
Based on this definition, we can now describe our convexification method.}
\begin{definition}\label{def: conv}
 Let $I=(N,(X_i(\cdot))_{i\in N},(\pi_i)_{i \in N})$ and $I^\conv=(N,(X_i^\conv(\cdot))_{i\in N},(\phi_i)_{i \in N})$   be two instances of the GNEP. We call  $I^\conv$ a \emph{convexified instance} for $I$, if it fulfills for all $i \in N$ and $x_{-i}\in \dom X_i$ the following two criteria: 
 \begin{enumerate}
     \item $X_i^\conv(x_{-i}) = \conv(X_i(x_{-i}))$, i.e.~the convexified strategy space is given by  the convex hull of the original strategy space.\label{enu: c1}
     \item $\phi_i(x_i,x_{-i})= \phi_i^{x_{-i}}(x_i)$ for all $x_i \in \conv(X_i(x_{-i}))$ where $\phi_i^{x_{-i}}(\cdot):\conv(X_i(x_{-i}))\to\mathbb{R}$ denotes the convex envelope  of  $\pi_i(\cdot,x_{-i}):X_i(x_{-i})\to\mathbb{R}$, $ x_i \mapsto \pi_i(x_i,x_{-i})$.\label{enu: c2}
 \end{enumerate}
Note that \ref{enu: c1}.~is only required to hold for $x_{-i} \in \dom X_i$ and not the whole $\R^{k_{-i}}$. Similarly, \ref{enu: c2}.~restricts the cost function of a convexified instance only for $x$  with $x_i \in \conv(X_i(x_{-i})), x_{-i} \in \dom X_i$ and not the whole $\R^k$.
This degree of freedom leads to a whole set of convexified instances for $I$ which we denote by $\mathcal{I}^\conv(I)$. If the instance $I$ is clear from context, we will just speak of convexified instances and $\mathcal{I}^\conv$. 
\end{definition} 
{By definition,
the cost function of an instance of the GNEP need to be real-valued. In this regard,
the set $\mathcal{I}^\conv(I) \neq \emptyset$ is nonempty if and only if  
for all $i \in N$ and $x_{-i}\in \dom X_i$, 
the function $\pi_i(\cdot, x_{-i}):X_i(x_{-i}) \to \R$ admits a \emph{real-valued} convex envelope ${\phi_i^{x_{-i}}:\conv(X_i(x_{-i}))\to \R}$.
It thus follows immediately that the boundedness of~\eqref{eq: PlayerOpt} is sufficient for $\mathcal{I}^\conv(I)\neq \emptyset$. However, we remark that it is not necessary. In Section~\ref{sec:darstellung} it is shown 
that all player-linear mixed-integer GNEPs (cf.~Definition~\ref{def:lmi}) fulfill $\mathcal{I}^\conv(I)\neq \emptyset$, even though there may exist players with unbounded~\eqref{eq: PlayerOpt}. 

With our convexification method at hand, we can now describe our first main theorem.
}

\begin{theorem}\label{thm:main} 
Let $I=(N,(X_i(\cdot))_{i\in N},(\pi_i)_{i \in N})$ be an instance of the GNEP and $I^\conv \in \mathcal{I}^\conv$ any convexified instance.
For any $x \in X(x)$,
the following assertions are equivalent.
\begin{enumerate}
\item $x$ is a  generalized Nash equilibrium for $I$. \label{enu: mthm1}
\item $x$ is a generalized Nash equilibrium for $I^\conv$ and $\phi_i(x) = \pi_i(x)$ for all $i \in N$.\label{enu: mthm2}
\end{enumerate}
\end{theorem}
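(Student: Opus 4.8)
The plan is to apply the Nikaido--Isoda characterization of Theorem~\ref{CharGNE} to \emph{both} games $I$ and $I^\conv$ and to compare their value functions $\hat{V}$ and $\hat{V}^\conv(x) := \sup_{y \in X^\conv(x)} \Psi^\conv(x,y)$, where $\Psi^\conv$ and $X^\conv$ denote the NI-function and combined strategy map of $I^\conv$. Throughout I fix $x \in X(x)$. Feasibility transfers to the convexified game since $x_i \in X_i(x_{-i}) \subseteq \conv(X_i(x_{-i})) = X_i^\conv(x_{-i})$, so $x \in X^\conv(x)$; moreover $x \in X(x)$ forces $x_{-i} \in \dom X_i$ for every $i$, so both defining criteria of Definition~\ref{def: conv} are in force at the relevant points.

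The key lemma I would isolate first is that convexification leaves the optimal-response value of each player unchanged: for every $i \in N$,
\[
m_i(x_{-i}) := \inf_{y_i \in X_i(x_{-i})} \pi_i(y_i, x_{-i}) = \inf_{y_i \in \conv(X_i(x_{-i}))} \phi_i(y_i, x_{-i}).
\]
This follows directly from the explicit envelope formula~\eqref{Grotzinger}: minimizing $\phi_i^{x_{-i}}$ over $\conv(X_i(x_{-i}))$ amounts to minimizing $\sum_{l} \alpha_{il}\pi_i(x_i^l, x_{-i})$ freely over convex weights $\alpha_i \in \Lambda_{k_i+1}$ and points $x_i^l \in X_i(x_{-i})$, and such a convex combination is minimized by placing all mass on a minimizer of $\pi_i(\cdot, x_{-i})$ over $X_i(x_{-i})$.

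With this lemma I can write both value functions in a parallel form. Decoupling the inner minimization over the product set $X(x) = \prod_i X_i(x_{-i})$ gives $\hat{V}(x) = \sum_{i \in N} [\pi_i(x) - m_i(x_{-i})]$, and likewise $\hat{V}^\conv(x) = \sum_{i \in N} [\phi_i(x) - m_i(x_{-i})]$ using the same $m_i$ by the lemma. Subtraction yields the crucial identity
\[
\hat{V}(x) - \hat{V}^\conv(x) = \sum_{i \in N} \big[\pi_i(x) - \phi_i(x)\big].
\]
I would then record two sign facts valid because $x_i \in X_i(x_{-i})$: first $\pi_i(x) \geq m_i(x_{-i})$, so every summand of $\hat{V}(x)$ is nonnegative; second the envelope inequality $\phi_i(x) \leq \pi_i(x)$, so each difference $\pi_i(x) - \phi_i(x)$ is nonnegative.

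For \ref{enu: mthm1}.$\Rightarrow$\ref{enu: mthm2}., Theorem~\ref{CharGNE} gives $\hat{V}(x) = 0$; as its summands are nonnegative they each vanish, i.e.\ $\pi_i(x) = m_i(x_{-i})$, and then $m_i(x_{-i}) \leq \phi_i(x) \leq \pi_i(x)$ squeezes $\phi_i(x) = \pi_i(x)$; feeding this into the identity gives $\hat{V}^\conv(x) = 0$, so $x$ is a GNE for $I^\conv$ by Theorem~\ref{CharGNE}. For the converse \ref{enu: mthm2}.$\Rightarrow$\ref{enu: mthm1}., $\hat{V}^\conv(x) = 0$ together with $\phi_i(x) = \pi_i(x)$ collapses the identity to $\hat{V}(x) = 0$, yielding the GNE property for $I$. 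I expect the only genuine obstacle to be the careful justification of the key lemma---in particular the interchange of infimum and convex combination---together with verifying that all infima are attained, which is guaranteed by the standing assumption that~\eqref{eq: PlayerOpt} admits a minimizer and which rules out any $\pm\infty$ pathologies in the subtraction step.
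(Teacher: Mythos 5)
Your proposal is correct and takes essentially the same route as the paper's proof: both arguments rest on transferring feasibility to $X^\conv(x)$, the separability of the inner infimum over the product strategy set, the envelope infimum-preservation property $\inf_{y_i \in \conv(X_i(x_{-i}))}\phi_i(y_i,x_{-i}) = \inf_{y_i \in X_i(x_{-i})}\pi_i(y_i,x_{-i})$ (the paper's~\eqref{hilfeq2}, proved in Appendix~\ref{app: thmmain} via exactly the Rockafellar-formula argument you sketch), and Theorem~\ref{CharGNE} applied to both games. Your exact identity $\hat{V}(x)-\hat{V}^\conv(x)=\sum_{i\in N}\bigl[\pi_i(x)-\phi_i(x)\bigr]$ is merely a sharper packaging of the paper's inequality~\eqref{ineq} together with its tightness analysis of step~\eqref{hilfeq3}, so the two proofs coincide in substance.
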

\begin{proof}
Let $I^\conv \in \mathcal{I}^\conv$ be an arbitrary convexified instance. 
We first show that for every $x \in X(x)$, the inequality $\hat{V}^{\conv}(x) \leq \hat{V}(x)$ holds, where $\hat{V}^{\conv}$ is the $\hat{V}$ function  for  $I^{\conv}$. For an arbitrary $x \in X(x)$, we have:
\begin{align}
\hat{V}^{\conv}(x) &= \sup_{y\in X^{\conv}(x)} \sum_{i \in N} \left[ \phi_i(x) - \phi_i(y_i,x_{-i}) \right]
=\sum_{i \in N}  \left[ \phi_i(x) \right] - \inf_{y\in X^{\conv}(x)} \sum_{i \in N}  \left[ \phi_i(y_i,x_{-i}) \right]. \label{Vconv} 
\end{align}
As $x \in X(x)$ and subsequently $x_{-i} \in \dom X_i$ for all $i \in N$, we have by Definition~\ref{def: conv}.\ref{enu: c1}.~that $y \in X^\conv(x)$ iff $y_i \in \conv(X_i(x_{-i}))$ for all $i\in N$. Furthermore the objective function $\sum_{i \in N}  \left[ \phi_i(y_i,x_{-i}) \right]$ is obviously separable in $y$. Therefore, the following is true:
\begin{align}\label{hilfeq1}
   \inf_{y\in X^{\conv}(x)} \sum_{i \in N}  \left[ \phi_i(y_i,x_{-i}) \right] = \sum_{i \in N}  \inf_{y_i \in \conv(X_i(x_{-i}))} \left[ \phi_i(y_i,x_{-i}) \right].
\end{align}
As for all $i \in N$ the function $\conv(X_i(x_{-i}))\to \R, \, y_i \mapsto \phi_i(y_i,x_{-i})$ is the convex envelope of the function $X_i(x_{-i})\to \R,\, y_i \mapsto \pi_i(y_i,x_{-i})$, the following equality holds:
\begin{align}\label{hilfeq2}
    \inf_{y_i \in \conv(X_i(x_{-i}))} \phi_i(y_i,x_{-i}) = \inf_{y_i \in X_i(x_{-i})} \pi_i(y_i,x_{-i}). 
\end{align}
Thus, we arrive at:
\begin{alignat}{2}
 \hat{V}^\conv(x)&= 
  \sum_{i \in N}  \left[ \phi_i(x) \right] - \inf_{y\in X^{\conv}(x)} \sum_{i \in N}  \left[ \phi_i(y_i,x_{-i}) \right] && \quad \text{ by } \eqref{Vconv}\nonumber\\ 
 &=\sum_{i \in N}  \left[ \phi_i(x) \right] - \sum_{i \in N}  \inf_{y_i \in X_i(x_{-i})} \left[ \pi_i(y_i,x_{-i}) \right]&& \quad\text{ by } \eqref{hilfeq1} \text{ and } \eqref{hilfeq2} \nonumber \\ 
 &\leq  \sum_{i \in N}  \left[ \pi_i(x) \right] - \sum_{i \in N} \inf_{y_i \in X_i(x_{-i})} \left[ \pi_i(y_i,x_{-i}) \right] && \quad\text{ by } \phi_i(x) \leq \pi_i(x), i\in N \text{ as } x\in X(x)\label{hilfeq3}\\
 &= \sum_{i \in N}  \left[ \pi_i(x) \right] - \inf_{y \in X(x)}\sum_{i \in N}   \left[ \pi_i(y_i,x_{-i}) \right]  && \quad\text{ by the same argumentation as for } \eqref{hilfeq1} \nonumber \\
& =\hat{V}(x) &&\nonumber 
\end{alignat}
Therefore, we have the inequality 
\begin{align}\label{ineq}
    \hat{V}^{\conv}(x) \leq \hat{V}(x) \quad \text{ for all } x\in X(x)
\end{align}
 which allows us to prove the equivalence $\ref{enu: mthm1}.\Leftrightarrow\ref{enu: mthm2}.$:
 
$\ref{enu: mthm1}.\Rightarrow\ref{enu: mthm2}.$:
Let $x\in X(x)$ be a generalized Nash equilibrium of $I$. Theorem~\ref{CharGNE} and inequality~\eqref{ineq} imply that $\hat{V}(x)=0\geq\hat{V}^{\conv}(x)$. Furthermore $x \in X^\conv(x)$ as $x \in X(x)$  and by observing that $\hat{V}^{\conv}(x) \geq 0$ for all $x \in X^\conv(x)$ we conclude that $\hat{V}^{\conv}(x) = 0$. Summarizing we have $x \in X^\conv(x)$ with  $\hat{V}^{\conv}(x) = 0$ which is equivalent to $x$ being a generalized Nash equilibrium for $I^\conv$ by Theorem \ref{CharGNE}. Furthermore $\hat{V}^{\conv}(x) =0=\hat{V}(x)$ implies that the inequality in \eqref{hilfeq3} must be tight, i.e.~$\sum_{i \in N} \pi_i(x)=\sum_{i \in N} \phi_i(x)$ holds. Together with $\phi_i(x) \leq \pi_i(x), i\in N$ we thus get $\phi_i(x) = \pi_i(x), i \in N$. 

$\ref{enu: mthm1}.\Leftarrow\ref{enu: mthm2}.$:
Let $x\in X(x)$ be a generalized Nash equilibrium of $I^\conv$ and $\phi_i(x) = \pi_i(x), i \in N$. Theorem \ref{CharGNE} implies that $\hat{V}^{\conv}(x) = 0$ while the equality $\phi_i(x) = \pi_i(x), i\in N$ implies that the inequality in (\ref{hilfeq3}) is tight and therefore $\hat{V}(x)=\hat{V}^{\conv}(x)=0$ holds. Again, Theorem \ref{CharGNE} yields that $x$ is a generalized Nash equilibrium for $I$ which finishes our proof. 
\end{proof}

Theorem \ref{thm:main} allows us to formulate the following characterization of a generalized Nash equilibrium: 
\begin{corollary}\label{korollar:main}
For an instance $I$ of the GNEP and any $I^\conv \in \mathcal{I}^\conv$, 
the following statements are equivalent.
\begin{itemize}
    \item[1.] $x$ is a generalized Nash equilibrium for $I$.
    \item[2.] $x \in X(x)$, $\hat{V}^\conv(x) = 0$ and $\phi_i(x) = \pi_i(x),i \in N$.
    \item[3.] $x$ is an optimal solution of \eqref{Opt} with value zero and $\phi_i(x) = \pi_i(x),i \in N$. \begin{align}\label{Opt}
    \inf_{x\in X(x)}\hat{V}^\conv(x) 
\end{align}
\end{itemize}
\end{corollary}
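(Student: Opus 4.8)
The plan is to derive the corollary directly from Theorem~\ref{thm:main} and the general characterization in Theorem~\ref{CharGNE}, treating the two equivalences $1.\Leftrightarrow 2.$ and $2.\Leftrightarrow 3.$ separately. The only genuinely new ingredient I need is the elementary observation that $\hat{V}^\conv$ is nonnegative on the feasible set $\{x : x \in X(x)\}$, which is what upgrades the condition $\hat{V}^\conv(x)=0$ to a statement about optimality in \eqref{Opt}.

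For $1.\Leftrightarrow 2.$, I would first invoke Theorem~\ref{thm:main}: for a point with $x \in X(x)$, being a generalized Nash equilibrium for $I$ is equivalent to being a generalized Nash equilibrium for $I^\conv$ together with $\phi_i(x)=\pi_i(x)$ for all $i \in N$. Note that statement $1.$ already forces $x \in X(x)$ by the definition of a GNE, so the hypothesis of Theorem~\ref{thm:main} is met. It then remains to rewrite ``$x$ is a GNE for $I^\conv$'' using Theorem~\ref{CharGNE} applied to the instance $I^\conv$, which says this is equivalent to $x \in X^\conv(x)$ and $\hat{V}^\conv(x)=0$. Since $x \in X(x)$ implies $x \in X^\conv(x)$ (cf.~Appendix~\ref{app: refl}), the fixed-point condition for $I^\conv$ is automatic once $x \in X(x)$ is known, so the combined condition collapses to exactly statement $2.$

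For $2.\Leftrightarrow 3.$, I would establish the nonnegativity $\hat{V}^\conv(x) \geq 0$ for every feasible $x \in X(x)$. This follows by choosing $y = x$ in the supremum defining $\hat{V}^\conv$: since $x \in X(x)$ implies $x \in X^\conv(x)$, the point $y=x$ is admissible and yields $\sum_{i \in N}[\phi_i(x)-\phi_i(x)] = 0$, whence the supremum is at least $0$. Consequently the infimum in \eqref{Opt} is bounded below by $0$; a feasible $x$ attains the value $0$ in \eqref{Opt} if and only if $\hat{V}^\conv(x)=0$, and such an $x$ is therefore an optimal solution of \eqref{Opt} with value zero. Adjoining the common condition $\phi_i(x)=\pi_i(x)$ to both sides then gives the equivalence of $2.$ and $3.$

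I do not expect a serious obstacle, since the corollary is essentially a repackaging of the two preceding theorems. The only point requiring care is keeping the feasibility conditions straight: namely that the fixed-point requirement $x \in X^\conv(x)$ needed for Theorem~\ref{CharGNE} on $I^\conv$ is implied by $x \in X(x)$, and that the zero lower bound on $\hat{V}^\conv$ over $X(x)$ is precisely what licenses the passage from ``$\hat{V}^\conv(x)=0$'' to ``optimal with value zero'' in \eqref{Opt}.
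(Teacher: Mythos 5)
Your proof is correct and takes essentially the same route as the paper, which states the corollary as a direct consequence of Theorem~\ref{thm:main} combined with the characterization in Theorem~\ref{CharGNE}. The details you supply---that $x \in X(x)$ implies $x \in X^\conv(x)$, and that $\hat{V}^\conv$ is nonnegative on the feasible set of~\eqref{Opt}---are exactly the observations the paper itself invokes in the proof of Theorem~\ref{thm:main}, so nothing further is needed.
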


Before we illustrate the above concepts in the following Example~\ref{exa: concept}, let us briefly discuss the relevance of Theorem~\ref{thm:main} and Corollary~\ref{korollar:main}, in particular in regard of practical applications. In the latter, one may not have the precise description of the refined domains $\dom X_i$ nor convex envelopes $\phi_i^{x_{-i}}$ at hand as they may for example be just too difficult/expensive to compute. However, there might be types of instances $I$ for which one can a priori identify computationally tractable convexified instances without ever having to compute the refined domains or the convex envelopes in practice, cf.~the instance $I^\conv$ in Example~\ref{exa: concept}. Thus,  in the remainder of the paper, we will identify several types of original instances for which we can describe computationally tractable convexified instances.

We remark that a similar observation can be made from a theoretical standpoint. Identifying suitable, well-understood  convexified instances may yield structural insights for the original ones. For example, there exist various conditions for a convex GNEP in the literature guaranteeing the existence of an equilibrium which may be used together with Theorem~\ref{thm:main} to guarantee the existence of equilibria in non-convex games, cf.~Corollary~\ref{cor: convopt1}.

\begin{example}\label{exa: concept}
Let $I=(N,(X_i(\cdot))_{i\in N},(\pi_i)_{i \in N})$ be a 2-player GNEP, i.e.~$N = [2]$, where the strategy sets are 1-dimensional sets given by 
\begin{align*}
    X_1(x_2) &:= X_1^{\text{rel}}(x_2) \cap \Z := \{x_1 \in \R\mid (x_1-2)^2 + (x_2-2)^2 \leq 1  \} \cap \Z \text{ and }\\ 
    {X_2(x_1)} &{:= X_2^{\text{rel}}(x_1) \cap \Z :=  \{x_2 \in \R_{\geq 0} \mid x_2 + x_1 \geq 2,\;2x_2-x_1 \leq 5,\; x_2 +3x_1 \leq 9\}\cap \Z}
\end{align*}
for all $x_2,x_1 \in \R$ respectively.
\begin{figure}[!h]
\begin{center}
\scalebox{\myscale}{
\begin{tikzpicture}
\begin{axis}[clip mode=individual, 
  axis lines=middle,
  x = 1cm, y = 1cm,
  xmin=0,xmax=3.99,ymin=0,ymax=3.99,
  xtick distance=1,
  ytick distance=1,
  xlabel=$x_{1}$,
  ylabel=$x_{2}$,
  grid=major,
  grid style={thin,densely dotted,black!20}]
  \draw (axis cs:2,2) circle [radius=1cm];
\node at (axis cs:2,3) [circle, scale=0.3, draw=darkgreen,fill=darkgreen] {};
\node at (axis cs:2,1) [circle, scale=0.3, draw=darkgreen,fill=darkgreen] {};
\node at (axis cs:1,2) [circle, scale=0.3, draw=darkgreen,fill=darkgreen] {};
\node at (axis cs:3,2) [circle, scale=0.3, draw=darkgreen,fill=darkgreen] {};
\draw[thick,darkgreen] (axis cs:2,1) -- (axis cs:2,3);

\node[darkgreen] at (axis cs:2,3.5) {\scalebox{\myoscale}{$X_1(x_2)\times x_2$}};
\end{axis}
\end{tikzpicture} }\hspace{2cm}
\scalebox{\myscale}{
\begin{tikzpicture}
\begin{axis}[clip mode=individual, 
  axis lines=middle,
  x = 1cm, y = 1cm,
  xmin=0,xmax=3.99,ymin=0,ymax=3.99,
  xtick distance=1,
  ytick distance=1,
  xlabel=$x_{1}$,
  ylabel=$x_{2}$,
  grid=major,
  grid style={thin,densely dotted,black!20}]

\draw[thick] (axis cs:0,2) -- (axis cs:2,0);
\draw[thick] (axis cs:0,2.5) -- (axis cs:3,4);
\draw[thick] (axis cs:1.666,4) -- (axis cs:3,0);

\node at (axis cs:0,2) [circle, scale=0.3, draw=orange,fill=orange] {};
\node at (axis cs:1,1) [circle, scale=0.3, draw=orange,fill=orange] {};
\node at (axis cs:1,3) [circle, scale=0.3, draw=orange,fill=orange] {};
\node at (axis cs:2.666,1) [circle, scale=0.3, draw=orange,fill=orange] {};
\node at (axis cs:2.333,2) [circle, scale=0.3, draw=orange,fill=orange] {};
\node at (axis cs:2,0) [circle, scale=0.3, draw=orange,fill=orange] {};
\node at (axis cs:3,0) [circle, scale=0.3, draw=orange,fill=orange] {};
\node at (axis cs:2,3) [circle, scale=0.3, draw=orange,fill=orange] {};

\draw[thick,orange] (axis cs:0,2) -- (axis cs:2.333,2);
\draw[thick,orange] (axis cs:1,1) -- (axis cs:2.666,1);
\draw[thick,orange] (axis cs:1,3) -- (axis cs:2,3);
\draw[thick,orange] (axis cs:2,0) -- (axis cs:3,0);

\node[orange] at (axis cs:3.3,3) {\scalebox{\myoscale}{$X_2(x_1)\times x_1$}};
\end{axis}
\end{tikzpicture}} \hspace{2cm} \scalebox{\myscale}{
\begin{tikzpicture}
\begin{axis}[clip mode=individual, 
  axis lines=middle,
  x = 1cm, y = 1cm,
  xmin=0,xmax=3.99,ymin=0,ymax=3.99,
  xtick distance=1,
  ytick distance=1,
  xlabel=$x_{1}$,
  ylabel=$x_{2}$,
  grid=major,
  grid style={thin,densely dotted,black!20}]
  
\draw[thick] (axis cs:0,2) -- (axis cs:2,0);
\draw[thick] (axis cs:0,2.5) -- (axis cs:3,4);
\draw[thick] (axis cs:1.666,4) -- (axis cs:3,0);

\node at (axis cs:0,2) [circle, scale=0.3, draw=orange,fill=orange] {};
\node at (axis cs:1,1) [circle, scale=0.3, draw=orange,fill=orange] {};
\node at (axis cs:1,3) [circle, scale=0.3, draw=orange,fill=orange] {};
\node at (axis cs:2.666,1) [circle, scale=0.3, draw=orange,fill=orange] {};
\node at (axis cs:2.333,2) [circle, scale=0.3, draw=orange,fill=orange] {};
\node at (axis cs:2,0) [circle, scale=0.3, draw=orange,fill=orange] {};
\node at (axis cs:3,0) [circle, scale=0.3, draw=orange,fill=orange] {};
\node at (axis cs:2,3) [circle, scale=0.3, draw=orange,fill=orange] {};

\draw[thick,orange] (axis cs:0,2) -- (axis cs:2.333,2);
\draw[thick,orange] (axis cs:1,1) -- (axis cs:2.666,1);
\draw[thick,orange] (axis cs:1,3) -- (axis cs:2,3);
\draw[thick,orange] (axis cs:2,0) -- (axis cs:3,0);

\node at (axis cs:2,1) [circle, scale=0.8, draw=red, very thick] {};
\node at (axis cs:1,2) [circle, scale=0.8, draw=red, very thick] {};
\node at (axis cs:2,2) [circle, scale=0.8, draw=red, very thick] {};
\node at (axis cs:2,3) [circle, scale=0.8, draw=red, very thick] {};

  \draw (axis cs:2,2) circle [radius=1cm];
\node at (axis cs:2,3) [circle, scale=0.3, draw=darkgreen,fill=darkgreen] {};
\node at (axis cs:2,1) [circle, scale=0.3, draw=darkgreen,fill=darkgreen] {};
\node at (axis cs:1,2) [circle, scale=0.3, draw=darkgreen,fill=darkgreen] {};
\node at (axis cs:3,2) [circle, scale=0.3, draw=darkgreen,fill=darkgreen] {};
\draw[thick,darkgreen] (axis cs:2,1) -- (axis cs:2,3);

\end{axis}
\end{tikzpicture}}
\end{center}
    \caption{Representation of the strategy sets {and the resulting set of feasible strategy profiles $x \in X(x)$ marked via red circles}.}
    \label{fig: startexa}
\end{figure}

For the cost functions we set $\pi_1(x) := x_1\cdot x_2$ and $\pi_2(x) := (1 + |x_2|)^{x_1 - 1}$ for all $x \in \R^2$. 

The refined domains are described by 
\begin{align*}
    \dom X_1&= \{x_2 \in \R \mid \exists\,\mtilde{x}_1 \in \R :& &\myhspace\mtilde{x}_1 \in X_1(x_2),\; x_2 \in X_2(\mtilde{x}_1) \}\\
    &= \{x_2 \in \R \mid \exists\,\mtilde{x}_1 \in \R:\; & &\myhspace\mtilde{x}_1 \in \Z,\; (\mtilde{x}_1-2)^2 + (x_2-2)^2 \leq 1,\\
    && &\myhspace x_2 \in \Z_{\geq 0},\;  x_2 + \mtilde{x}_1 \geq 2,\;2x_2-\mtilde{x}_1 \leq 5,\; x_2 +3\mtilde{x}_1 \leq 9\}\\
    &=\{1,2,3\}\text{ and} \\
    \dom X_2&= \{x_1 \in \R \mid \exists\,\mtilde{x}_2 \in \R : && \myhspace x_1 \in X_1(\mtilde{x}_2),\; \mtilde{x}_2 \in X_2(x_1) \}\\ 
        &= \{x_1 \in \R \mid \exists\,\mtilde{x}_2 \in \R:\; & &\myhspace x_1 \in \Z,\, (x_1-2)^2 + (\mtilde{x}_2-2)^2 \leq 1,\\
    && &\myhspace\mtilde{x}_2 \in \Z_{\geq 0},\;  \mtilde{x}_2 + x_1 \geq 2,\;2\mtilde{x}_2-x_1 \leq 5,\; \mtilde{x}_2 +3x_1 \leq 9\}\\
    &=\{1,2\}. 
\end{align*}
{Alternatively, the set of feasible strategies is given by $\{(1,2),(2,1),(2,2),(2,3)\}$ with the projection $\{1,2,3\}$ to $R^{k_{-1}}$ and $\{1,2\}$ to $R^{k_{-2}}$. } 
In comparison, the standard domains are given by {$\mathrm{dom}X_1 = [1,3]$ and $\mathrm{dom}X_2 = [0,3]$ since $X_1(x_2)\neq \emptyset$ iff $x_2 \in [1,3]$ (cf.~the green points in the first picture in Figure~\ref{fig: startexa}, e.g.~$X_1(2.5)=\{2\}$) and analogously $X_2(x_1)\neq \emptyset$ iff $x_1 \in [0,3]$}.

By Definition~\ref{def: conv} an instance $I^\conv$ is a convexified instance for $I$ if and only if it fulfills: 
\begin{enumerate}
    \item $X_1^\conv(x_2) = \conv(X_1(x_2)) = \{2\}$  for $x_2 \in \{1,3\}$  and $X_1^\conv(x_2) = [1,3]$ for $x_2 = 2.$  \\$X_2^\conv(x_1) = \conv(X_2(x_1)) = [1,3]$ for  $x_1 =1$ and $X_2^\conv(x_1) = [0,3]$ for  $x_1 =2$.
    \item $\phi_1(x) = x_1 \cdot x_2$ for $x_1 \in \{2\}, x_2 \in \{1,3\}$ and $\phi_1(x) = 2x_1$ for $x_1 \in [1,3], x_2 \in\{2\}$. \\ 
    $\phi_2(x) = 1$ for $x_2 \in [1,3], x_1 \in \{1\}$ and  $\phi_2(x) = 1 + x_2$ for $x_2 \in [0,3], x_1 \in \{ 2\}$. 
\end{enumerate}
As described in Definition~\ref{def: conv}, the above setting allows for a whole set of convexified instances.
Some of these convexifications may have a natural
and computational efficient representation for which even the domains $\dom X_i,i\in N$ need not be known a priori.
To illustrate this aspect, let us
further specify two different
convexifications for our example that both belong to  $\mathcal{I}^\conv$:
$I^{\conv} = (N,(X_i^{\conv}(\cdot))_{i\in N},(\phi_i)_{i \in N})$ and $\bar{I}^{\conv} = (N,(\bar{X}_i^{\conv}(\cdot))_{i\in N},(\bar{\phi}_i)_{i \in N})$.

In the first game, we use the canonical relaxation of the original strategy sets for both players, i.e.~$X_i^{\conv}(x_{-i}) := X_i^{\text{rel}}(x_{-i})$ for $i \in [2]$ and all $x_{-i} \in \R^{k_{-i}}$.
Regarding the cost functions, we define $\phi_1(x):= \pi_1(x)$  and $\phi_2(x) := (x_1 -1) \cdot x_2 + 1$ for all $x\in \R^2$ which is possible as $\pi_1(x)$ is linear for fixed $x_2 \in \dom X_1$.
The possibility to use the canonical relaxations of the original strategy sets and the setting of $\phi_1 = \pi_1$ is of course not always possible, yet, in the following Section~\ref{sec:darstellung} we will identify certain types of original instances for which this is always possible.

In the second game,  we set 
\begin{align*}
    \bar{X}_1^{\conv}(x_2) :=\begin{cases}\{2\}, &\text{if } x_2 \in \{1,3\}\\
    [1,3], &\text{if } x_2 = 2 \\
    \emptyset, &\text{else}\end{cases} \text{ and }     \bar{X}_2^{\conv}(x_1) :=X_2^{\text{rel}}(x_1) \text{ for all } x_1 \in \R. 
\end{align*}
As cost functions we define $\bar{\phi}_1(x):=\pi_1(x)$ and $\bar{\phi}_2(x) :=\pi_2(x)$ for all $x \in \R^2$.

Both games in the above example fulfill \ref{enu: c1}.~and \ref{enu: c2}.~but their cost functions and strategy sets differ significantly whenever $x_{-i}\notin \dom X_i$. This results in different smoothness properties of the games. For instance, the convexified cost function $\phi_2$ of player 2 in the first game is convex whenever $x_1$ is fixed for all $x_1 \in \R$ whereas in the second game,  $\bar{\phi}_2$ is only convex for fixed $x_1$ when $x_1 \in [2,\infty)$. Similarly, the strategy sets of player 1 in the first game are described by a $C^\infty$ convex restriction function which contrasts the second game.  
If we now apply Corollary~\ref{korollar:main} to the two different convexifications $I^\conv$ and $\bar{I}^\conv$, we obtain quite different optimization problems~\eqref{Opt}: 
\begin{align*}
\inf_x \; \max_y \;  &x_1 \cdot x_2 + (x_1 -1) \cdot x_2 +1\\ &- \big(y_1\cdot x_2 + (x_1 - 1)\cdot y_2 + 1\big) \nonumber\\
\text{s.t.: }&(y_1 - 2)^2 + (x_2 -2)^2 -1 \leq 1 \nonumber\\
&y_2 + x_1 \geq 2,\;2y_2-x_1 \leq 5,\; y_2 +2x_1 \leq 5 \nonumber\\
&(x_1 - 2)^2 + (x_2 -2)^2 -1 \leq 1 \nonumber\\
&x_2 + x_1 \geq 2,\;2x_2-x_1 \leq 5,\; x_2 +2x_1 \leq 5 \nonumber\\
& y_1 \in \R,\;y_2\in\mathbb{R}_{\geq 0},\; x_1 \in \Z,\;x_2 \in \Z_{\geq 0} \nonumber
 \\
\inf_x \; \max_y \;  &x_1 \cdot x_2 + (1 + |x_2|)^{x_1 -1}\\ &- \big(y_1\cdot x_2 + (1 + |y_2|)^{x_1 -1}\big) \nonumber\\
\text{s.t.: }&y_1 \in \bar{X}_1^\conv(x_2) \nonumber\\
&y_2 + x_1 \geq 2,\;2y_2-x_1 \leq 5,\; y_2 +2x_1 \leq 5 \nonumber\\
&x_1 \in  \bar{X}_1^\conv(x_2) \nonumber\\
&x_2 + x_1 \geq 2,\;2x_2-x_1 \leq 5,\; x_2 +2x_1 \leq 5 \nonumber\\
& y_1 \in \R,\;y_2\in\mathbb{R}_{\geq 0},\; x_1 \in \Z,\;x_2 \in \Z_{\geq 0} \nonumber
\end{align*}
 
\end{example}
Note again the effect of the two different convexifications on the structure of the resulting feasible sets and their objective functions: in contrast to the second problem,  the first
one admits a quasi-linear objective
and a feasible set defined via smooth algebraic constraints.
In the remainder of the paper, we will further identify necessary and sufficient conditions of an instance $I$
so that it allows for ``well-behaved''
convexified instances and consequently leads to more tractable optimization problems~\eqref{Opt}.

\section{Quasi-Linear GNEPs}\label{sec:darstellung}
\subsection{MINLP-Reformulation}

In what follows we identify a subclass of the GNEP such that the optimization problem~\eqref{Opt} becomes more accessible. The main obstacle when solving~\eqref{Opt} in the general case is the need to separately evaluate the function $\hat{V}^\conv$ for any strategy profile $x$. This is due to the fact that the evaluation of $\hat{V}^\conv$ at any $x$ is itself a maximization problem which ultimately leads to a computationally intractable optimization problem~\eqref{Opt} as the latter constitutes of a maximization problem nested within a minimization problem. 
{For linear GNEPs, Stein and Sudermann-Merx~\cite{stein2016cone} showed that one can resolve this problem by dualizing the corresponding $\hat{V}$ function. 
We define in the following the class of \emph{quasi-linear} GNEPs $I$ for which there exists a convexified instance $I^\conv \in \mathcal{I}^\conv$ admitting the property that $\hat{V}^\conv(x)$ is a linear maximization problem, i.e.~for every $i\in N$ and fixed $x_{-i}\in \R^{k_{-i}}$, player $i\in N$ has a linear cost function as well as a polyhedral strategy set.
While $I^\conv$ does not need to belong to the linear GNEPs considered in~\cite{stein2016cone},  
their dualization idea is still applicable resulting in a reformulation of~\eqref{Opt} as an optimization problem~\eqref{lemma:dar:opt} in standard form.
}
\begin{definition}\label{GenSet}
An instance $I$ of the GNEP is called \emph{quasi-linear}, if there exists $I^\conv \in \mathcal{I}^\conv$ which fulfills for every $i\in N$ the following two statements:
\begin{enumerate}
    \item\label{enum:QL1}
    There exists a matrix-valued function $M_i: \dom X_i \to \mathbb{R}^{l_i \times k_i}$, $x_{-i} \mapsto M_i(x_{-i})$ and a vector-valued function $e_i:\dom X_i \to \mathbb{R}^{l_i}$, $x_{-i} \mapsto e_i(x_{-i}) $  for an integer $l_i \in \mathbb{N}$, such that
    \begin{align*}
        X_i^\conv(x_{-i}) = \left\lbrace x_i \in \mathbb{R}^{k_i}\mid M_i(x_{-i})x_i\geq e_i(x_{-i}) \right\rbrace \text{  for all }  x_{-i} \in \dom X_i. 
    \end{align*}
    \item\label{enum:QL2} There exists a vector-valued function $C_i:\dom X_i\to \mathbb{R}^{k_i}$, $x_{-i} \mapsto C_i(x_{-i})$ such that
    \begin{align*}
          \phi_i(x_i,x_{-i}) = C_i(x_{-i})^\top x_i\; \text{ for all } (x_i,x_{-i}) \in \R^{k_i}\times \dom X_i.
    \end{align*}
\end{enumerate}
\end{definition}

\begin{theorem}\label{theorem:darstellung}
Let $I$ be a quasi-linear GNEP with an instance $I^\conv \in \mathcal{I}^\conv$ as described in Definition~\ref{GenSet}. Every optimal solution 
$(x,\nu)$ of~\eqref{lemma:dar:opt} with value zero and $\phi_i(x) = \pi_i(x),i\in N$ corresponds to a GNE $x$ of $I$ and vice versa. 
\begin{alignat}{2}
\label{lemma:dar:opt}\tag{R}
\inf_{\nu,x}\; &\sum_{i \in N}  C_i\left(x_{-i}\right)^\top x_i - \sum_{i \in N} e_i(x_{-i})^\top \nu_i\nonumber \\
\text{s.t.: } &\nu_i^\top M_i(x_{-i}) =  C_i\left(x_{-i}\right)^\top &&\text { for all } i\in N,  \nonumber\\
&\nu_i \in \mathbb{R}^{l_i}_{\geq 0},\;x_i\in X_i(x_{-i}) &&\text { for all } i\in N. \nonumber
\end{alignat}
\end{theorem}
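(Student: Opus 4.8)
The plan is to reduce the statement to Corollary~\ref{korollar:main} by showing that problem~\eqref{lemma:dar:opt} is exactly a dualized reformulation of the optimization problem~\eqref{Opt} for the given quasi-linear convexification $I^\conv$. By Corollary~\ref{korollar:main}, a profile $x$ is a GNE of $I$ if and only if $x \in X(x)$, $\hat V^\conv(x) = 0$ and $\phi_i(x) = \pi_i(x)$ for all $i \in N$. Since the constraint $x_i \in X_i(x_{-i})$ in~\eqref{lemma:dar:opt} already encodes $x \in X(x)$ and the condition $\phi_i(x) = \pi_i(x)$ is carried along explicitly, it suffices to prove that the value-zero feasible points of~\eqref{lemma:dar:opt} are precisely the $x \in X(x)$ with $\hat V^\conv(x)=0$.

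First I would evaluate $\hat V^\conv$ at any $x \in X(x)$ using the quasi-linear structure. Exactly as in~\eqref{hilfeq1} of the proof of Theorem~\ref{thm:main}, separability together with Definition~\ref{GenSet}.\ref{enum:QL2} yields
\[
\hat V^\conv(x) = \sum_{i \in N} C_i(x_{-i})^\top x_i \;-\; \sum_{i \in N} \min_{y_i \in X_i^\conv(x_{-i})} C_i(x_{-i})^\top y_i,
\]
where each inner minimum exists by Remark~\ref{rem:mainthm}. Each inner problem is, by Definition~\ref{GenSet}.\ref{enum:QL1}, the linear program $\min\{C_i(x_{-i})^\top y_i \mid M_i(x_{-i}) y_i \geq e_i(x_{-i})\}$. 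This LP is feasible (it contains $x_i \in X_i(x_{-i}) \subseteq X_i^\conv(x_{-i})$) and has finite value, so strong LP duality applies and gives
\[
\min_{y_i \in X_i^\conv(x_{-i})} C_i(x_{-i})^\top y_i = \max\bigl\{ e_i(x_{-i})^\top \nu_i \;\bigm|\; \nu_i^\top M_i(x_{-i}) = C_i(x_{-i})^\top,\; \nu_i \in \R^{l_i}_{\geq 0}\bigr\}.
\]
Substituting back shows that, for fixed $x \in X(x)$, the infimum of the objective of~\eqref{lemma:dar:opt} over the dual-feasible $\nu$ equals $\hat V^\conv(x)$; hence~\eqref{lemma:dar:opt} coincides with problem~\eqref{Opt} for this $I^\conv$, and in particular its optimal value is nonnegative.

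The two directions then follow. For the forward direction, given a GNE $x$ of $I$, Corollary~\ref{korollar:main} gives $\hat V^\conv(x)=0$, so each inner LP attains value $C_i(x_{-i})^\top x_i$; choosing a dual-optimal $\nu_i$ (which exists by strong duality) yields a feasible $(x,\nu)$ for~\eqref{lemma:dar:opt} whose objective value is $\sum_{i\in N}\bigl(C_i(x_{-i})^\top x_i - e_i(x_{-i})^\top \nu_i\bigr) = 0$, hence optimal. For the converse, given an optimal $(x,\nu)$ of~\eqref{lemma:dar:opt} with value zero and $\phi_i(x)=\pi_i(x)$, the constraint $x_i \in X_i(x_{-i})$ gives $x \in X(x)$, and weak LP duality together with $x_i \in X_i^\conv(x_{-i})$ shows each summand $C_i(x_{-i})^\top x_i - e_i(x_{-i})^\top \nu_i$ is nonnegative; since they sum to zero, each vanishes, so $x_i$ is primal-optimal and therefore $\hat V^\conv(x)=0$. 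Applying Corollary~\ref{korollar:main} finishes the equivalence.

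The main obstacle is the legitimacy of the LP dualization: I must verify that for every relevant opponent profile the inner linear program is feasible and bounded, so that strong duality—not merely weak duality—holds. Feasibility is immediate from $x \in X(x)$, and boundedness (finiteness of the minimum) is exactly what Remark~\ref{rem:mainthm} guarantees for $x_{-i} \in \dom X_i$; the remaining bookkeeping (separability, and nonnegativity of the objective from weak duality) is routine and parallels the corresponding steps in the proof of Theorem~\ref{thm:main}.
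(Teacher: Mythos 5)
Your proposal is correct and follows essentially the same route as the paper's proof: you evaluate $\hat{V}^\conv$ at a fixed $x \in X(x)$ using the quasi-linear structure, dualize each player's inner linear program (with strong duality justified via Remark~\ref{rem:mainthm}), exploit separability to merge the duals into the single problem~\eqref{lemma:dar:opt}, and conclude through Corollary~\ref{korollar:main}. Your explicit weak-duality argument showing that each summand $C_i(x_{-i})^\top x_i - e_i(x_{-i})^\top \nu_i$ is nonnegative and hence vanishes individually at a value-zero solution is a slightly more detailed write-up of the equivalence step that the paper compresses into one sentence, but it is the same argument.
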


\begin{proof}
Consider for an arbitrary but fixed $x \in X(x)$ the function $\hat{V}^\conv(x)$ where the latter is again the $\hat{V}$ function corresponding to $I^\conv$. From the proof of Theorem~\ref{thm:main} 
we already know: 
\begin{align*}
    \hat{V}^\conv(x) = \sum_{i \in N} \phi_i(x) -  \sum_{i \in N}  {\inf}_{y_i\in X_i^\conv(x_{-i})}\phi_i(y_i,x_{-i})= \sum_{i \in N} \phi_i(x)  -  \sum_{i \in N} {\inf}_{y_i\in X_i^\conv(x_{-i})}C_i(x_{-i})^\top y_i. 
\end{align*}
By using Definition~\ref{GenSet}.\ref{enum:QL1}.,  we arrive at the following linear optimization problem~\eqref{LP_i} { with its corresponding dual~\eqref{DP_i}} for the optimization problem of player $i$ in the convexified game for the rivals' strategies $x_{-i}$: \vspace*{0.25cm}\\
\vspace*{0.25cm}
\begin{minipage}{0.45\textwidth}
\vspace{-0.4cm}
\begin{align}
{\inf}_{y_i} \; &C_i\left(x_{-i}\right)^\top y_i \nonumber\\
\text{s.t.: }& M_i(x_{-i})y_i\geq e_i(x_{-i}), \label{LP_i}\tag{$\mathrm{LP}_i(x_{-i})$}\\
& y_i\in\mathbb{R}^{k_i}, \nonumber
\end{align}
\end{minipage}
\hfill
\vline
\hfill
\begin{minipage}{0.45\textwidth}
\vspace{-0.4cm}
\begin{align}
{\sup}_{\nu_i} \; &e_i\left(x_{-i}\right)^\top \nu_i \nonumber\\
\text{s.t.: }&\nu_i^\top M_i(x_{-i})= C_i(x_{-i})^\top, \label{DP_i}\tag{$\mathrm{DP}_i(x_{-i})$}\\
& \nu_i\in\mathbb{R}^{l_i}_{\geq 0}.\nonumber
\end{align}
\end{minipage}
\;\vspace{0.25cm}\\
{Note that~\eqref{LP_i} attains its minimum iff the problem is bounded from below. In this case, we get by linear programming duality that the dual 
attains its maximum and their optimal objective values coincide. }
In the following let us denote by~\ref{DP_i} and~\ref{LP_i} also the corresponding optimal objective values { with the convention that~\ref{DP_i} $ = -\infty$ if \ref{DP_i} has no feasible solution. By this convention and the above argument, we have {\ref{DP_i} $ = $ \ref{LP_i}}   which allows us to reformulate $\hat{V}^\conv(x)$ as: }
\begin{align*}
    \hat{V}^\conv(x) = \sum_{i \in N} \phi_i(x) -  \sum_{i \in N}  \mathrm{LP}_i(x_{-i}) = \sum_{i \in N} \phi_i(x) - \sum_{i \in N}  \mathrm{DP}_i(x_{-i}). 
\end{align*}
Since the $n$ maximization problems~\eqref{DP_i} are completely separable we can combine them to one maximization problem. {Hence, by applying  the representation of the convex envelopes from Definition~\ref{GenSet}.\ref{enum:QL2}.,  we can describe $\hat{V}$ via the following optimization problem}
\begin{alignat}{2}
    {\inf}_\nu \; &\sum_{i \in N}C_i\left(x_{-i}\right)^\top x_i  - \sum_{i \in N}    e_i\left(x_{-i}\right)^\top \nu_i \label{eq: refV}\\
\text{s.t.: }& \nu_i^\top M_i(x_{-i})= C_i(x_{-i})^\top &&\quad \text{ for all } i \in N,\nonumber \\
& \nu_i\in\mathbb{R}^{l_i}_{\geq 0} &&\quad\text{ for all } i \in N\nonumber
\end{alignat}
{together with the property that $\Hat{V}(x) < \infty $ if and only if  \eqref{eq: refV}   attains its minimum at some $\nu$ with the objective value $\hat{V}(x)$. This allows us to relate to each $x\in X(x)$ with $\hat{V}(x)<\infty$ a feasible solution $(x,\nu)$ of~\eqref{lemma:dar:opt} with $\nu$ being optimal for~\eqref{eq: refV} and objective value equal to $\hat{V}(x)$ and vice versa. 
Hence,   any optimal  solution $(x,\nu)$ of~\eqref{lemma:dar:opt} corresponds to an optimal solution $x$ of~\eqref{Opt} with the same objective value. Conversely, an optimal solution $x$ of~\eqref{Opt} with $\hat{V}(x)<\infty$ can be identified with an optimal  solution $(x,\nu)$ of~\eqref{lemma:dar:opt}. Thus, the claim of the theorem follows by Corollary~\ref{korollar:main}. } 
 \end{proof}

Theorem~\ref{theorem:darstellung} is in particular interesting for quasi-linear GNEPs $I$ in which the conditions $x\in X(x)$ and $\phi_i(x) = \pi_i(x), i \in N$ are computationally tractable. Such a situation is present in various interesting applications where 
instances $I$ are used which belong to the class of what we call \emph{{player-linear mixed}-integer GNEPs}. 
\begin{definition}\label{def:lmi}
An instance $I$ belongs to the \emph{{player-linear mixed}-integer GNEPs}, if for every $i\in N$ the strategy space and cost functions are described by  
\begin{align}
    X_i(x_{-i}) &= \left\lbrace x_i \in \mathbb{Z}^{s_i}\times\mathbb{R}^{k_i-s_i}\mid \tilde{M}_i(x_{-i})x_i\geq \tilde{e}_i(x_{-i}) \right\rbrace \text{ for all } x_{-i} \in \dom X_i \label{eq:MixedInteger} \\
   \pi_i(x_i,x_{-i}) &= \tilde{C}_i(x_{-i})^\top x_i\; \text{ for all } (x_i,x_{-i}) \in \R^{k_i}\times \dom X_i \nonumber
\end{align}
for a matrix-valued function $\tilde{M}_i: \dom X_i \to \mathbb{R}^{l_i \times k_i}$ and  vector-valued functions $\tilde{e}_i:\dom X_i \to \mathbb{R}^{l_i}, x_{-i} \mapsto \tilde{e}_i(x_{-i})$  and $\tilde{C}_i:\dom X_i\to \mathbb{R}^{k_i}, x_{-i} \mapsto \tilde{C}_i(x_{-i})$ for some $s_i \in [k_i], l_i \in \N$. 
\end{definition}
Note that for a {player-linear mixed}-integer GNEP $I$, the convex hull of the set $X_i(x_{-i})$ is a polytope  (cf. Conforti et al.~\cite{Conforti2010}) for any $x_{-i} \in \dom X_i$. Thus, the existence of a convexified instance fulfilling Definition~\ref{GenSet}.\ref{enum:QL1}.~is guaranteed. The same holds true for Definition~\ref{GenSet}.\ref{enum:QL2}.~since for all $x_{-i}\in \dom X_i$ the convex envelope $\phi_i^{x_{-i}}(\cdot)$ of $\pi_i(\cdot,x_{-i}):\conv(X_i(x_{-i})) \to \R, x_i \mapsto \pi_i(x_i,x_{-i}) = \tilde{C}_i(x_{-i})^\top x_i$ is just $\pi_i(\cdot,x_{-i})$ due to the linearity of the latter. Therefore any {player-linear mixed}-integer GNEP is a quasi-linear GNEP {and thus automatically fulfills $\mathcal{I}^\conv \neq \emptyset$}.
Theorem~\ref{theorem:darstellung} yields the following corollary.
\begin{corollary}\label{cor:ref}
Let $I$ be a {player-linear mixed}-integer GNEP with an instance $I^\conv \in \mathcal{I}^\conv$ as described in Definition~\ref{GenSet}. Then, every optimal solution $(x,\nu)$ with value zero of the mixed-integer optimization problem~\eqref{opt:Ref} corresponds to a GNE $x$ of $I$ and vice versa.
\begin{alignat}{2}
\label{opt:Ref}\tag{$\tilde{\mathrm{R}}$}
\inf\; &\sum_{i \in N}  C_i\left(x_{-i}\right)^\top x_i - \sum_{i \in N}  e_i(x_{-i})^\top \nu_i\nonumber \\
\text{s.t.: } &{M}_i(x_{-i})x_i\geq e_i(x_{-i}) &&\text { for all } i\in N, \nonumber  \\
&\nu_i^\top M_i(x_{-i}) =  C_i\left(x_{-i}\right)^\top &&\text { for all } i\in N, \nonumber   \\
&\nu_i \in \mathbb{R}^{l_i}_{\geq 0},\;x_i\in\mathbb{Z}^{s_i}\times\mathbb{R}^{k_i-s_i} &&\text { for all } i\in N. \nonumber  
\end{alignat}
\end{corollary}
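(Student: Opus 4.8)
The plan is to derive the corollary directly from Theorem~\ref{theorem:darstellung} in three moves: first confirm that a linear mixed-integer GNEP is quasi-linear so that the theorem applies, then observe that the side condition $\phi_i(x)=\pi_i(x)$ is automatically satisfied here, and finally rewrite the abstract feasibility constraint $x_i\in X_i(x_{-i})$ occurring in~\eqref{lemma:dar:opt} as the explicit mixed-integer linear system appearing in~\eqref{opt:Ref}.

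I would begin by fixing $i\in N$ and $x_{-i}\in\R^{k_{-i}}$ and writing $X_i(x_{-i})$ as the intersection of the rational polyhedron $P_i(x_{-i}):=\{x_i\mid \tilde{M}_i(x_{-i})x_i\geq \tilde{e}_i(x_{-i})\}$ with the mixed-integer lattice $L_i:=\mathbb{Z}^{s_i}\times\mathbb{R}^{k_i-s_i}$, as prescribed by Definition~\ref{def:lmi}.\ref{enu:lmi1}. By the result of Conforti et al.~\cite{Conforti2010}, $\conv(X_i(x_{-i}))$ is again a polyhedron, so there exist $M_i(x_{-i}),e_i(x_{-i})$ with $X_i^\conv(x_{-i})=\conv(X_i(x_{-i}))=\{x_i\mid M_i(x_{-i})x_i\geq e_i(x_{-i})\}$, yielding Definition~\ref{GenSet}.\ref{enum:QL1}. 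Since $\pi_i(\cdot,x_{-i})$ is linear by Definition~\ref{def:lmi}.\ref{enu:lmi2}, its convex envelope over $\conv(X_i(x_{-i}))$ coincides with itself, so $\phi_i(x_i,x_{-i})=C_i(x_{-i})^\top x_i$ with $C_i=\tilde{C}_i$, giving Definition~\ref{GenSet}.\ref{enum:QL2}. Hence $I$ is quasi-linear and Theorem~\ref{theorem:darstellung} is available. Moreover, because $\phi_i$ and $\pi_i$ agree (both equal $\tilde{C}_i(x_{-i})^\top x_i$), the condition $\phi_i(x)=\pi_i(x)$ holds at every feasible point and may be dropped from the correspondence.

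The key technical step is the faithful-representation identity
\[
X_i(x_{-i}) = \conv(X_i(x_{-i})) \cap L_i \qquad \text{for all } x_{-i}\in\R^{k_{-i}}.
\]
The inclusion $\subseteq$ is immediate. For $\supseteq$, I would use that $P_i(x_{-i})$ is convex, so $\conv(X_i(x_{-i}))\subseteq\conv(P_i(x_{-i}))=P_i(x_{-i})$; hence any $x_i\in\conv(X_i(x_{-i}))\cap L_i$ lies in $P_i(x_{-i})\cap L_i=X_i(x_{-i})$. Combined with the polyhedral description of the convex hull, this identity reads
\[
x_i\in X_i(x_{-i}) \iff M_i(x_{-i})x_i\geq e_i(x_{-i}) \ \text{ and } \ x_i\in\mathbb{Z}^{s_i}\times\mathbb{R}^{k_i-s_i},
\]
which is exactly how the constraint $x_i\in X_i(x_{-i})$ of~\eqref{lemma:dar:opt} is spelled out in~\eqref{opt:Ref}.

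Putting these together, problem~\eqref{opt:Ref} is precisely problem~\eqref{lemma:dar:opt} with the feasibility constraint made explicit through the convex-hull inequalities plus integrality, and with the automatically-satisfied side condition removed. The claimed correspondence between value-zero optimal solutions of~\eqref{opt:Ref} and generalized Nash equilibria of $I$ therefore follows at once from Theorem~\ref{theorem:darstellung}. I expect the only genuine obstacle to be verifying the representation identity $X_i(x_{-i})=\conv(X_i(x_{-i}))\cap L_i$ — in particular the containment of the convex hull in the relaxation polyhedron — while the remaining steps are essentially bookkeeping around Theorem~\ref{theorem:darstellung}.
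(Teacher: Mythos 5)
Your proposal matches the paper's own proof essentially step for step: both verify quasi-linearity of $I$ via the polyhedrality of $\conv(X_i(x_{-i}))$ (Conforti et al.) and the linearity of the costs (so that $\phi_i=\pi_i$ holds automatically on feasible profiles), both establish the key identity $X_i(x_{-i})=\conv(X_i(x_{-i}))\cap\bigl(\mathbb{Z}^{s_i}\times\R^{k_i-s_i}\bigr)$ by exactly the same argument (the fractional relaxation is convex, hence contains the convex hull), and both then conclude by substituting this description into Theorem~\ref{theorem:darstellung}. The only cosmetic difference is that you state the identity for all $x_{-i}\in\R^{k_{-i}}$ while the paper records it only for $x_{-i}\in\dom X_i$, which is in any case the scope on which your final substitution $\conv(X_i(x_{-i}))=\{x_i\mid M_i(x_{-i})x_i\ge e_i(x_{-i})\}$ is licensed by Definition~\ref{def: conv}, so the two arguments coincide there as well.
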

\begin{proof}
The argumentation previous to the corollary shows that $\phi_i(x) = \pi_i(x), i\in N$ holds for any $x \in X(x)$. Thus,  
the corollary follows immediately by Theorem~\ref{theorem:darstellung} and the fact that 
\begin{align*}
    X_i(x_{-i}) = \left\lbrace x_i \in \mathbb{Z}^{s_i}\times\mathbb{R}^{k_i-s_i}\mid {M}_i(x_{-i})x_i\geq {e}_i(x_{-i}) \right\rbrace
\end{align*}
holds for any $i \in N$ and $x_{-i} \in  \dom X_i$. To see this, we argue as follows. Firstly, we note that the 
{continuous} relaxation of the right set equals the convex hull of $X_i(x_{-i})$
since $I^\conv \in \mathcal{I}^\conv$ and $x_{-i} \in \dom X_i$.
Thus, the inclusion $\subseteq$ is verified by  
$X_i(x_{-i}) \subseteq \conv(X_i(x_{-i}))$ and the following:
\begin{align*}
X_i(x_{-i}) = X_i(x_{-i}) \cap\mathbb{Z}^{s_i}\times\mathbb{R}^{k_i-s_i} &\subseteq  \conv(X_i(x_{-i}))\cap\mathbb{Z}^{s_i}\times\mathbb{R}^{k_i-s_i}\\ 
&= \left\lbrace x_i \in \mathbb{R}^{k_i}\mid {M}_i(x_{-i})x_i\geq {e}_i(x_{-i}) \right\rbrace\cap\mathbb{Z}^{s_i}\times\mathbb{R}^{k_i-s_i}\\
&= \left\lbrace x_i \in \mathbb{Z}^{s_i}\times\mathbb{R}^{k_i-s_i}\mid {M}_i(x_{-i})x_i\geq {e}_i(x_{-i}) \right\rbrace.
\end{align*}
The inclusion $\supseteq$ is valid, as the {linear} relaxation of $X_i(x_{-i})$ is a convex set and thus contains the convex hull of $X_i(x_{-i})$. Therefore we get:
\begin{align*}
    \conv(X_i(x_{-i})) \cap\mathbb{Z}^{s_i}\times\mathbb{R}^{k_i-s_i} \subseteq  \left\lbrace x_i \in \mathbb{R}^{k_i}\mid \tilde{M}_i(x_{-i})x_i\geq \tilde{e}_i(x_{-i}) \right\rbrace \cap \mathbb{Z}^{s_i}\times\mathbb{R}^{k_i-s_i} = X_i(x_{-i}).
\end{align*}
\end{proof}
Even for {player-linear mixed}-integer GNEPs,  the computation of the matrix- and vector-valued functions $M_i,e_i,i \in N$ for a convexified instance $I^\conv$ as described in Definition~\ref{GenSet} may in general be  quite complex. An exception to that is present in what we call \emph{hole-free} GNEPs, cf.~\cite{murota2019survey} for a similar concept in the realm of discrete convexity. Here we can use for all $i \in N$ the $I$-defining matrix- and vector-valued functions $\tilde{M}_i$ and $\tilde{e}_i$ instead of $M_i$ and $e_i$ in the above optimization problem~\eqref{opt:Ref}. 
{The key point of these hole-free represented GNEPs is the property that the strategy set $X_i(x_{-i})$ of a player $i$ is perfectly described for relevant strategies $x_{-i}\in\dom X_i$ in the sense that their convex hull coincides with their relaxation. Hence, the continuous relaxation of a hole-free instance $I$ is a convexified instance in $\mathcal{I}^{\conv}(I)$.}
\begin{definition}
 We call a {player-linear mixed}-integer GNEP \emph{hole-free-represented}, if for all $i \in N$ and $x_{-i}\in \dom X_i$,  the following equality holds:
 \begin{align}\label{eq: IGfree}
     \conv\left(\left\lbrace x_i \in \mathbb{Z}^{s_i}\times\mathbb{R}^{k_i-s_i}\mid \tilde{M}_i(x_{-i})x_i\geq \tilde{e}_i(x_{-i}) \right\rbrace\right) = \left\lbrace x_i \in \mathbb{R}^{k_i}\mid \tilde{M}_i(x_{-i})x_i\geq \tilde{e}_i(x_{-i}) \right\rbrace.
 \end{align}
\end{definition}
Note that the equality~\eqref{eq: IGfree} does only need to hold for $x_{-i}\in \dom X_i$ but not necessarily for all $x_{-i}$ in the (potentially substantial) bigger set $\mathrm{dom}X_i$ as the following example illustrates.

\begin{example}[Capacitated Discrete Flow Games (CDFG)] \label{exa: CDFG}
We consider a directed  graph $G=(V,E)$ with nodes $V$ and edges $E$.  
There is a set of players $N= \{1, \dots,
n\}$ where each $i\in N$ is associated with an end-to-end pair $(s_i,t_i)\in V\times V$ {as well as an individual constraint-vector $c_i\in\Z^E_{\geq 0}$.} 
The strategy $x_i$ of a player $i \in N$ represents an integral $s_i$-$t_i$-flow with a flow value equal to her demand $d_i \in \N$.
Hereby, a player is   restricted in her strategy choice by her capacity constraints,  i.e.~for given rivals' strategies $x_{-i}$, her flow $x_i$ has to satisfy the restriction {$x_i \leq c_i - \sum_{s\neq i}x_s$}. 
Thus the strategy set of a player $i \in N$ is described by 
\begin{align}\label{eq: CDFGStrat}
X_i(x_{-i})=X_i' \cap  \Big\{x_i\in \Z_{\geq 0}^E \mid  x_i \leq c_i - \sum_{s\neq i}x_s \Big\} 
\text{ for all } x_{-i}\in\R^{k_{-i}},
\end{align}
where $X_i' = \{ x_i\in \Z_{\geq 0}^E \mid Ax_i = b_i\}$ is the flow polyhedron of player $i$ with $A$ the arc-incidence matrix of the graph $G$ and $b_i$ the vector with $(b_i)_{s_i} = d_i$,  $(b_i)_{t_i} = -d_i$, and zero, otherwise. 
{ Remark that the (standard) domain $\mathrm{dom} X_i$ is given by all (not necessarily integral) $x_{-i} \in  \R^{k_{-i}}$ such that the intersection in~\eqref{eq: CDFGStrat} is nonempty. In contrast, by $X_i(x_{-i}) \subseteq \Z^E$,  we may deduce that   $\dom X_i \subseteq \mathrm{dom} X_i \cap  \Z^{k_{-i}}$. } 

We define the cost functions by 
\begin{align*}
 \pi_i(x_i,x_{-i}):=  \big(\sum_{j\neq i}x_j\big)^\top \, C_i^1\, x_i + C_i^{2\top}\, x_i =  \Bigl(\big(\sum _{j\neq i}x_j\big)^\top \, C_i^1 + C_i^{2\top}\Bigr)\, x_i
\end{align*}
with $C_i^1 \in \R^{E\times E}$ and $C_i^2 \in \R^E$.
Here, the first term can be interpreted as costs that arise through congestion whereas the second term represents congestion independent costs for player~$i$. 

Clearly,  
the CDFG belongs to the  {player-linear mixed}-integer GNEPs.
Furthermore,  it is hole-free-represented.
To see this,  it is sufficient to verify that for all $x_{-i} \in \dom X_i$ the inclusion

\begin{align*}\textstyle
\Bigl\lbrace x_i\in \R_{\geq 0}^E \mid Ax_i = b_i,\; x_i \leq c_i - \sum_{s\neq i}x_s \Bigr \rbrace \subseteq \conv\Big(\Bigl\lbrace x_i\in \Z_{\geq 0}^E \mid Ax_i = b_i,\; x_i \leq c_i - \sum_{s\neq i}x_s \Bigr \rbrace\Big) 
\end{align*}
holds as $\supseteq$ is trivially fulfilled. Since $\dom X_i \subseteq \Z^{k_{-i}}$, the restriction $0 \leq x_i \leq c_i - \sum_{s\neq i}x_s $ is an integral box-constraint for any  $x_{-i} \in \dom X_i$. Thus the polytope on the left has integral vertices since the flow polyhedron is box-tdi, see Edmonds and Giles~\cite{Edmonds1977} and Schrijver~\cite{Schrijver03}  for a definition of box-tdi and the aforementioned property of the flow polyhedron. These integral vertices are clearly contained in the right set and therefore the inclusion follows. Notice that for non-integral $x_{-i}$, the inclusion $\subseteq$ is in general false. {Hence, the inclusion $\subseteq$ is in general not true for all $x_{-i} \in \mathrm{dom} X_i$.}
\end{example} 

{
Let us motivate the hole-free GNEPs with another example regarding discrete market equilibria.}

{
 \begin{example}[Equilibria in Transportation Markets]\label{exa: CompEq}
Using the same terminology as in the previous Example~\ref{exa: CDFG}, 
consider the situation in which the edges $E$ are up for sale  
and each player wants to buy a single $s_i,t_i$-path $x_i$ with the goal to maximize her linear utility $U_i(x_i) = C_i^\top x_i, C_i\in \Z^E$.
The market manager wants to determine
an integral price vector $p\in \Z_{\geq 0}^E$ for selling the edges such that every
player receives a $s_i,t_i$-path $x_i^* \in X_i'$ maximizing her quasi-linear utility
$x_i^*\in \arg\max_{x_i\in X_i'}\{ U_i(x_i)-p^\top x_i \}$ and unsold edges have prices equal zero.
The tuple $((x_i^*)_{i\in N},p)$ is  known as a \emph{competitive equilibrium}, cf.~e.g.~\cite{Bikchandani1997}. 

We can model this situation as a GNEP with $n+1$ players in which the first $n$ players correspond to the $n$ buyers 
and the $n+1$-th player to the market manager. We denote by $(x,p)$ a strategy  profile and set the costs
to the negated utility $\pi_i(x_i,x_{-i},p) = -(U_i(x_i)-p^\top x_i)$ for $i \in[n]$ and the costs of the market manager 
to  $\pi_{n+1}(p,x) = (\mathbf{1} - \sum_{i\in[n]} x_i)^\top p$ with $\mathbf{1} = (1,\ldots,1)^\top\in \R^E$.
For the strategy spaces we set $X_i(x_{-i},p) \equiv X_i'$ to the flow polyhedron 
and $X_{n+1}(x) :=  \Z_{\geq 0}^E$ if $\sum_{i\in[n]} x_i \leq \mathbf{1}$ and $X_{n+1}(x) := \emptyset$ else.
 It is straight forward to verify that any GNE of this GNEP corresponds to a competitive equilibrium and vice versa.
 Furthermore, it follows from the observations in Example~\ref{exa: CDFG} that the GNEP is a hole-free 
 \emph{linear} mixed-integer GNEP.
 \end{example} 
}

We conclude this section with another consequence of Theorem~\ref{theorem:darstellung}. Namely,
under certain assumptions, the restriction of $x \in X(x)$ in~\eqref{Opt} may be obsolete and can be relaxed to $x \in X^\conv(x)$ as the feasibility of an optimal solution for the original game is a priori ensured. Such a case is described in the following corollary where additionally, the existence of generalized Nash equilibria of the instance $I$ can be determined by solving a convex optimization problem. For the promised corollary, we need the following definition:
\begin{definition}
 Let $l\in \N$ and $M \subseteq \R^l$ be arbitrary. We denote by $E(M)$ the set of all extreme points of $M$:
 \begin{align*}
    E(M):=  \big\{ x \in M \mid x \notin \conv(M \setminus \{x\}) \big\}.
 \end{align*}
\end{definition}
\begin{corollary}\label{cor: convopt1}
Let $I$ be a quasi-linear GNEP with an instance $I^\conv \in \mathcal{I}^\conv$ as described in Definition~\ref{GenSet}.,
where the functions $C_i(x_{-i}) \equiv C_i$ and $e_i(x_{-i}) \equiv e_i$ are both constant in $x_{-i}$ for all $i \in N$. Furthermore, assume that
  \begin{align*}
    F:= \big\lbrace (x,\nu)\in \R^{\sum_i k_i+ l_i} \mid M_i(x_{-i})x_i\geq e_i,\; \nu^\top_i M_i(x_{-i}) =  C_i^\top \nonumber,  \;i=1,\ldots,n  \big\rbrace 
 \end{align*}
 is convex and any $(x,\nu) \in E(F)$ satisfies $x \in X(x)$. Then, $I$ has a generalized Nash equilibrium if and only if the following convex optimization problem has the optimal value 0. 
  \begin{align}\label{cor: convopt}\textstyle
     \inf \; \sum_{i \in N}  C_i^\top x_i- e_i^\top \nu_i \quad \text{s.t.: } (x,\nu) \in F 
 \end{align}
 
\end{corollary}
\begin{proof}
Since $I^\conv \in \mathcal{I}^\conv(I^\conv)$, the latter is itself a quasi-linear GNEP. Thus, the optimization problem in~\eqref{lemma:dar:opt} for $I^\conv$ and $I^\conv \in \mathcal{I}^\conv(I^\conv)$ instead of $I$ and $I^\conv$ equals~\eqref{cor: convopt}. The result then follows by Theorem~\ref{thm:main} and~\ref{theorem:darstellung}, the equality $\phi_i(x) = \pi_i(x),i \in N$ for all $x \in X(x)$ as well as the fact that the optimization problem in~\eqref{cor: convopt} attains its minimum (if it exists) at an extreme point of $F$ as the set $F$ is convex and the objective function is linear.  
\end{proof}

{
\subsection{Hole-free Linear Mixed-Integer GNEPs}
Besides the approach described in Corollary~\ref{cor:ref} to compute equilibria of a player-linear mixed-integer GNEP, let us mention in the following another possibility for the special case of hole-free \emph{linear} mixed-integer GNEPs, that is, hole-free  player-linear mixed-integer GNEPs where $\tilde{M}_i,\tilde{C}_i, i \in N$ are constant and  $\tilde{e}_i$ is linear. 
For this special class, the linear relaxation is not only a convexified instance but also belongs to the class of linear (continuous and convex) GNEPs. 
As mentioned in the introduction,  Dreves~\cite{Dreves17} introduced an algorithm for linear GNEPs 
which computes the whole solution set  in a finite amount of time. 
Hence, our convexification result in Theorem~\ref{thm:main} shows that one can   determine the whole solution set of the original instance $I$ by
applying Dreves' algorithm to the linear relaxation, computing the whole solution set of the latter and determining all originally feasible solutions by re-administering the mixed-integer conditions. 
Note that the CDFG described above for the case of $C_i^1 = 0 \in\R^{m\times m}$ belongs for example to the class of hole-free \emph{linear} mixed-integer GNEPs. 
}

\section{Jointly Constrained GNEPs} \label{sec: JoinCons}
In several interesting applications, the players' strategy sets are restricted by coupled constraints. 
\begin{definition}
 We call an instance $I$ \emph{jointly constrained} w.r.t.~$X\subseteq \R^k$ if for all $i \in N$ and $x_{-i} \in \R^{k_{-i}}$, the strategy set $X_i(x_{-i})$ has the following description:
 \begin{align*}
    X_i(x_{-i}) = \left\lbrace x_i \in \mathbb{R}^{k_i} \mid (x_i,x_{-i}) \in X\right \rbrace.
\end{align*}
\end{definition}
Notice that the joint restriction set $X \subseteq \mathbb{R}^k$ doesn't need to be convex and may be discrete.
 This type of GNEP occurs for example in the domain of automated driving \cite{FabianiAD20}, traffic control \cite{cenedese2021highway} or transportation problems \cite{SagratellaNFCTP}.
Before we investigate the structure of any convexified game $I^\conv\in\mathcal{I}^\conv$ of a jointly constrained GNEP, let us motivate this special type of GNEP further by the following example.  

\begin{example}[Jointly Constrained Atomic Congestion Games]\label{exa: Atom}
We first describe the atomic (resource-weighted) congestion game, which is a generalization of  the model of Rosenthal~\cite{Rosenthal73}, without joint restrictions.
The set of strategies available to player~$i\in N =\{1,\ldots,n\}$ is given by  $X_i \subseteq\times_{j\in E}\{0,d_{ij}\}$
for weights $d_{ij}>0$ and resources $j\in E =\{1,\ldots,m\}$.
Note that by assuming 
$ x_i\in \{0,1\}^m$ for all $i\in N$, that is, $d_{ij}=1$, we obtain the
standard congestion game model of Rosenthal.
The cost functions on resources are given by \emph{player-specific  functions} $c_{ij}(\ell_j( x))$, $j\in E$, $i\in N$, where $\ell(x):=\sum_{i\in N} x_i$.
The private cost of a player $i\in N$ 
for strategy profile $x\in \prod_{i \in N} X_i$ is defined by 
$\pi_i(x_i,x_{-i}):= \sum_{j\in E} c_{ij}(\ell_j( x))x_{ij}.$
This model can be generalized by allowing joint restrictions in the players' strategy sets, that is, extending the above model to a jointly constrained GNEP with respect to a set  $X \subseteq \prod_{i \in N} X_i$, e.g., if the usage of resources is bounded by hard capacities. 
\end{example}
GNEPs with joint constraints were first studied in detail by Rosen in 1965~\cite{Rosen65}. Since then, these GNEPs have been the object of a fairly intense study in the literature and became one of the best understood subclasses of the GNEP, see~\cite{FacchineiK10}  for more details. Our goal is to identify properties of an instance $I$ so that a convexified instance $I^\conv \in \mathcal{I}^\conv$ exists that belongs to these well-understood jointly constrained/convex GNEPs.
It seems quite natural to hope for a given jointly constrained instance $I$ that its convexification $\mathcal{I}^\conv$ contains a jointly constrained GNEP. 
However, this is in general not the case as the example in
Figure~\ref{fig: ExaSJC} illustrates.

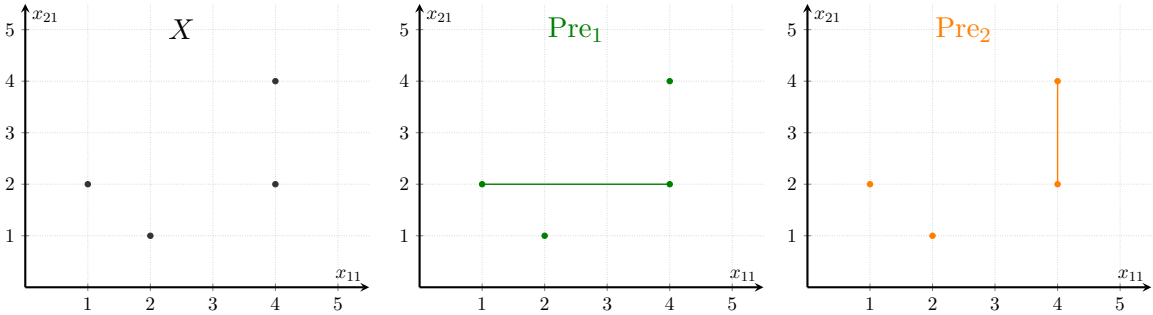
\begin{figure}[h!]
\begin{center}
\scalebox{\myscale}{
\begin{tikzpicture}
\begin{axis}[clip mode=individual, 
  axis lines=middle,
  x = \mysize, y = \mysize,
  xmin=0,xmax=5.5,ymin=0,ymax=5.5,
  xtick distance=1,
  ytick distance=1,
  xlabel=$x_{1}$,
  ylabel=$x_{2}$,
  grid=major,
  grid style={thin,densely dotted,black!20}]
\node at (axis cs:2,1) [circle, scale=0.3, draw=black!80,fill=black!80] {};

\node at (axis cs:1,2) [circle, scale=0.3, draw=black!80,fill=black!80] {};

\node at (axis cs:4,2) [circle, scale=0.3, draw=black!80,fill=black!80] {};
\node at (axis cs:4,4) [circle, scale=0.3, draw=black!80,fill=black!80] {};

\node at (axis cs:2.5,5) {\scalebox{\myoscale}{$X$}};

\end{axis}
\end{tikzpicture} }\hspace{2cm}
\scalebox{\myscale}{
\begin{tikzpicture}
\begin{axis}[clip mode=individual, 
  axis lines=middle,
    x = \mysize, y = \mysize,
  xmin=0,xmax=5.5,ymin=0,ymax=5.5,
  xtick distance=1,
  ytick distance=1,
  xlabel=$x_{1}$,
  ylabel=$x_{2}$,
  grid=major,
  grid style={thin,densely dotted,black!20}]
\node (t2) at (axis cs:2,1) [circle, scale=0.3, draw=darkgreen,fill=darkgreen] {};

\node (t1) at (axis cs:1,2) [circle, scale=0.3, draw=darkgreen,fill=darkgreen] {};

\node (t3) at (axis cs:4,2) [circle, scale=0.3, draw=darkgreen,fill=darkgreen] {};
\node (t4) at (axis cs:4,4) [circle, scale=0.3, draw=darkgreen,fill=darkgreen] {};
\draw[darkgreen,thick] (axis cs:1,2) -- (axis cs:4,2);
\node[darkgreen] at (axis cs:5,1.5) {};

\node[darkgreen] at (axis cs:2.5,5) {\scalebox{\myoscale}{$\Pre{1}$}};

\end{axis}
\end{tikzpicture}} \hspace{2cm}
\scalebox{\myscale}{
\begin{tikzpicture}
\begin{axis}[clip mode=individual, 
  axis lines=middle,
    x = \mysize, y = \mysize,
  xmin=0,xmax=5.5,ymin=0,ymax=5.5,
  xtick distance=1,
  ytick distance=1,
  xlabel=$x_{1}$,
  ylabel=$x_{2}$,
  grid=major,
  grid style={thin,densely dotted,black!20}]
\node (t2) at (axis cs:2,1) [circle, scale=0.3, draw=orange,fill=orange] {};

\node (t1) at (axis cs:1,2) [circle, scale=0.3, draw=orange,fill=orange] {};

\node (t3) at (axis cs:4,2) [circle, scale=0.3, draw=orange,fill=orange] {};
\node (t4) at (axis cs:4,4) [circle, scale=0.3, draw=orange,fill=orange] {};

\draw[orange,thick] (axis cs:4,2) -- (axis cs:4,4);
\node[orange] at (axis cs:4,4.5) {};
\node[orange] at (axis cs:2.5,5) {\scalebox{\myoscale}{$\Pre{2}$}};

\end{axis}
\end{tikzpicture}}
\end{center}
    \caption{Example for a 2-player jointly constrained GNEP $I$ w.r.t.~$X\subseteq \R^{(1,1)}$ represented by the four black dots in the first picture. The prescribed strategy sets $X_i^\conv(x_{-i}) = \conv(X_i(x_{-i})), x_{-i} \in \dom X_i$ which rule out the possibility for $I^\conv$ to be jointly constrained are represented in picture 2 and 3.}
    \label{fig: ExaSJC}
\end{figure}

In the above example, the instance $I$ is jointly constrained w.r.t.~$X$, where any $I^\conv \in \mathcal{I}^\conv$ can not be jointly constrained w.r.t.~some set $X^\conv$. This becomes evident when one assumes that $I^\conv$ would be jointly convex w.r.t.~some set $X^\conv$. Then $2 \in \dom X_1,\dom X_2$ and thus 
$(2,2) \in \conv(X_1(2)) \times 2 = X_1^\conv(2) \times 2$ would imply $(2,2) \in X^\conv$ which contradicts $(2,2) \notin \conv(X_2(2)) \times 2 = X_2^\conv(2) \times 2$.

As $X_i^\conv(x_{-i})$ for $x_{-i} \notin \dom X_i$ is not a priori determined, only  
the \emph{prescribed strategy sets}  of player $i \in N$,
$\Pre{i} := \mycup_{x_{-i} \in \dom X_i} \conv(X_i(x_{-i})) \times x_{-i}$,
 may prohibit the possibility for $\mathcal{I}^\conv$ to contain jointly constrained instances as the example in Figure~\ref{fig: ExaSJC} illustrates.  
This leads to the question whether or not we can adapt our convexification method
in order to obtain a jointly constrained convexification which still falls under our main Theorem~\ref{thm:main}. 
One naive approach would be to simply extend the strategy spaces of a convexified game $I^\conv \in \mathcal{I}^\conv$ leading to a an instance $I^\extt$ with 
$ X_i^\conv(x_{-i}) \subseteq X_i^\extt(x_{-i}) := \left\lbrace x_i \mid (x_i,x_{-i}) \in X^\extt\right \rbrace$
such that $I^\extt$ is jointly constrained w.r.t.~some  set $X^\extt$ and adjust the cost functions to $+\infty$ on the new strategies, that is
$\pi_i^\extt(x_i,x_{-i}):=  \phi_i(x_i,x_{-i})$ whenever $x_i \in X_i^\conv(x_{-i})$ and  $\pi_i^\extt(x_i,x_{-i}):= +\infty$ otherwise.
It is not hard to see that the equilibria of $I$ can be characterized by $I^\extt$ in the same fashion as in Theorem \ref{thm:main} with $I^\conv$. Yet this approach of extending $I^\conv$ seems computationally of limited interest as the extended cost functions do not have any regularity. One may try to extend the cost functions in a original-equilibria-preserving and smooth manner instead of just setting them to $+\infty$ outside of $X_i^\conv(x_{-i})$. Yet, it is not clear how to extend these functions reasonably in a computational regard as one wants as much regularity of the cost functions as possible while putting as little effort as possible  in the computation of the cost-functions themselves. We remark here that the cost functions $\phi_i(x_i,x_{-i})$ of any convexified instance are by Definition~\ref{def: conv}.\ref{enu: c2}.~only a priori determined on $\conv(X_i(x_{-i}))$ for $x_{-i} \in \dom X_i$ and thus a similar problem as described above occurs w.r.t.~the convexified cost functions $\phi_i$. But it is substantially easier to find any arbitrary smooth extension compared to  finding a smooth extension which preserves original GNE. On top of that, the functions $\phi_i(\cdot,x_{-i}):\conv(X_i(x_{-i})) \to \R, x_{-i} \in \dom X_i$ may have a natural and smooth extension to the whole domain, as it is the case for most quasi-linear GNEPs for example.
This gives rise to the question whether or not one can modify $I^\conv$ by  only extending the strategy spaces and, thus, without specifically tailoring the cost-functions to conserve original equilibria. However, one can quickly verify that this will in general lead to a loss of original GNE.  To see this, let's take a look back at the example in Figure~\ref{fig: ExaSJC}. Assume that the cost function  of player 2 is represented by $\phi_2(x_1,x_2) = -x_2$ on the whole $\R^2$. Let $I^\extt$ be a jointly constrained extension of $I^\conv$ as described above, but without changing the cost functions. Then $\{1,2\} \subseteq X_2^\extt(2)$ as $(1,2),(2,2) \in X_1^\conv(2)\times 2 \subseteq X^\extt$ and therefore the generalized Nash equilibrium $(x_1^*,x_2^*) = (2,1)$ for $I$  would not remain a GNE for the extension $I^\extt$. 

As the example in Figure~\ref{fig: ExaSJC} shows, for general jointly constrained GNEPs $I$, 
there may exist $x_{-i}\in \dom X_i$ with a subsequently prescribed convexified strategy set $X_i^\conv(x_{-i}) = \conv(X_i(x_{-i}))$ which prohibits the possibility for $\mathcal{I}^\conv$ to contain a  jointly constrained instance, showing that jointly constrainedness of $I$ is not sufficient for $\mathcal{I}^\conv$ to contain a jointly constrained instance.  
Perhaps surprising,
 the example in Figure~\ref{fig: ExaNotJC} illustrates that it is also not a necessary condition. 
\begin{figure}[h!]
\begin{center}
\scalebox{\myscale}{
\begin{tikzpicture}
\begin{axis}[clip mode=individual, 
  axis lines=middle,
  x = \mysize, y = \mysize,
  xmin=0,xmax=5.5,ymin=0,ymax=5.5,
  xtick distance=1,
  ytick distance=1,
  xlabel=$x_{1}$,
  ylabel=$x_{2}$,
  grid=major,
  grid style={thin,densely dotted,black!20}]
\node (t2) at (axis cs:2.5,1) [circle, scale=0.3, draw=darkgreen,fill=darkgreen] {};

\node (t3) at (axis cs:1,2.5) [circle, scale=0.3, draw=darkgreen,fill=darkgreen] {};
\node (t4) at (axis cs:4,2.5) [circle, scale=0.3, draw=darkgreen,fill=darkgreen] {};
\draw[darkgreen,thick] (axis cs:1,2.5) -- (axis cs:4,2.5);
\node[darkgreen] at (axis cs:5,1.5) {};

\node[darkgreen] at (axis cs:2.5,5) {\scalebox{\myoscale}{$\Pre{1}$}};

\end{axis}
\end{tikzpicture}}\hspace{2cm}
\scalebox{\myscale}{
\begin{tikzpicture}
\begin{axis}[clip mode=individual, 
  axis lines=middle,
  x = \mysize, y = \mysize,
  xmin=0,xmax=5.5,ymin=0,ymax=5.5,
  xtick distance=1,
  ytick distance=1,
  xlabel=$x_{1}$,
  ylabel=$x_{2}$,
  grid=major,
  grid style={thin,densely dotted,black!20}]
\node (t2) at (axis cs:2.5,1) [circle, scale=0.3, draw=orange,fill=orange] {};

\node (t1) at (axis cs:2.5,4) [circle, scale=0.3, draw=orange,fill=orange] {};

\node (t3) at (axis cs:1,2.5) [circle, scale=0.3, draw=orange,fill=orange] {};
\node (t4) at (axis cs:2.5,2.5) [circle, scale=0.3, draw=orange,fill=orange] {};

\draw[orange,thick] (axis cs:2.5,1) -- (axis cs:2.5,4);
\node[orange] at (axis cs:4,4.5) {};
\node[orange] at (axis cs:2.5,5) {\scalebox{\myoscale}{$\Pre{2}$}};

\end{axis}
\end{tikzpicture}}\hspace{2cm}
\scalebox{\myscale}{
\begin{tikzpicture}
\begin{axis}[clip mode=individual, 
  axis lines=middle,
 x = \mysize, y = \mysize,
  xmin=0,xmax=5.5,ymin=0,ymax=5.5,
  xtick distance=1,
  ytick distance=1,
  xlabel=$x_{1}$,
  ylabel=$x_{2}$,
  grid=major,
  grid style={thin,densely dotted,black!20}]
\node at (axis cs:2.5,1) [circle, scale=0.3, draw=black!80,fill=black!80] {};

\node at (axis cs:2.5,4) [circle, scale=0.3, draw=black!80,fill=black!80] {};

\node at (axis cs:1,2.5) [circle, scale=0.3, draw=black!80,fill=black!80] {};
\node at (axis cs:4,2.5) [circle, scale=0.3, draw=black!80,fill=black!80] {};

\node at (axis cs:2.5,5) {\scalebox{\myoscale}{$X^\conv$}};
\draw[thick] (axis cs:2.5,1) -- (axis cs:2.5,4);
\draw[thick] (axis cs:1,2.5) -- (axis cs:4,2.5);

\end{axis}
\end{tikzpicture} }
\end{center}
    \caption{Example for a 2-player GNEP $I$ which is not jointly constrained but admits a jointly constrained convexified instance $I^\conv \in \mathcal{I}^\conv$. The prescribed strategy sets are represented in picture 1 and 2 where the dots correspond to the original strategies. In the third picture is an example for a possible joint restriction set $X^\conv$.}
    \label{fig: ExaNotJC}
\end{figure}
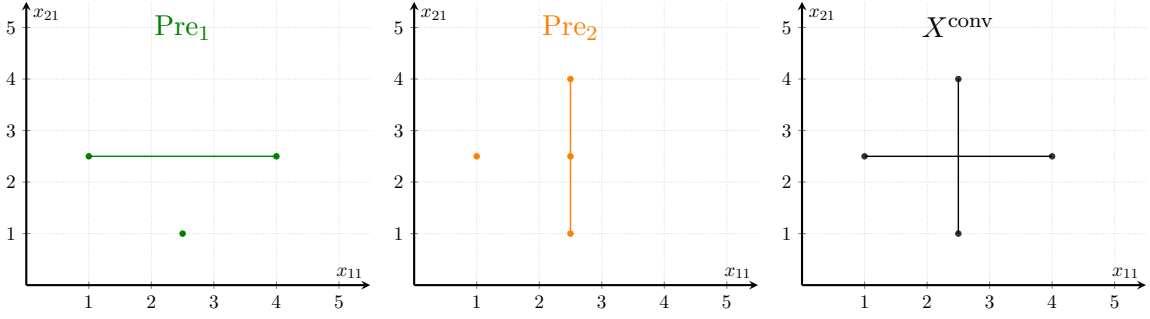

\subsection{$k$-restrictive-closed GNEPs} 
The insights of the previous subsection raise the question which instances $I$ admit jointly constrained instances in $ \mathcal{I}^\conv$ and which do not. 
In order to answer this question we define some necessary concepts in the following.
\begin{definition}
 For a vector $k=(k_1,\ldots,k_n)\in \N^n$ and  a set $X \subseteq \R^k$, we 
 \begin{itemize}
     \item define for $i\in [n],x_{-i} \in \R^{k_{-i}}$ the restriction $\res{X}{x_{-i}} := \left\{ \tilde{x} \in X \mid  \tilde{x} = (\tilde{x}_i,x_{-i}) \right\}$ of $X$ w.r.t.~$x_{-i}$.
     \item say that $X$ is \emph{$k$-convex}, if for all $i \in [n]$ and $x_{-i} \in \R^{k_{-i}}$, the restriction $\res{X}{x_{-i}}$ is convex. Note that any convex set $X$ is also $k$-convex as $\res{X}{x_{-i}} = X \cap (\R^{k_i}\times x_{-i})$ is the intersection of two convex sets in this case.
     \item define the $k$-convex hull of $X$ as the smallest $k$-convex set that contains $X$, that is, we define $\convk(X) := \bigcap\{Z \subseteq \R^k \mid Z \text{ is $k$-convex}, X\subseteq Z \}.$ 
\end{itemize}
\end{definition}
The concept of $k$-convex sets is a special case of so-called $\mathcal{O}$-convex sets (see e.g.~\cite{fink2012restricted}). A set in  $\R^d$ for some $d\in \N$ is $\mathcal{O}$-convex, if its intersection with every $\mathcal{O}$-line is connected, where an $\mathcal{O}$-line is a 1-dimensional intersection of several $\mathcal{O}$-hyperplanes. The latter are in turn hyperplanes that are parallel to some hyperplane contained in the orientation set $\mathcal{O}\subseteq \mathcal{P}(\R^d)$ which is a subset of the power set of $\R^d$ containing hyperplanes. 

The following lemma shows that $k$-convex sets are $\mathcal{O}$-convex sets for a certain orientation set $\mathcal{O}$. 
In particular, for the special case of $k=(1,1) \in \N^2$,  $k$-convexity reduces to orthogonal convexity (in 2 dimensions), see e.g.~\cite{OTTMANN1984157}.
\begin{lemma}\label{lem: oriconv}
For a vector $k=(k_1,\ldots,k_n)\in \N^n$, a set $X \subseteq \R^k$ is $k$-convex if and only if it is $\mathcal{O}$-convex for the orientation set containing the hyperplanes of the form $H_a = \{x \in \R^k \mid a^\top x = 0 \} \subseteq \R^k$ 
with $ a = (a_{ij})_{i \in N, j \in [k_{i}]}\in \R^k$ having zero entries for all $i \in N$ and corresponding $j \in [k_i]$ 
except for one $i^* \in N$, i.e.
\begin{align*}
    \mathcal{O}:= \Big\{H_a\subseteq \R^k \Big \vert \; \exists i^* \in N: 
     a_{ij}= 0, i\neq i^*, j\in [k_i]  \Big\}.
\end{align*}
\end{lemma}
\begin{proof}
We first show that the set of $\mathcal{O}$-lines is given by  
\begin{align}\label{eq: Olines}
\mathcal{O}\text{-lines} = \Big \{\{\lambda\cdot( \tilde{x}_i,0_{-i}) + x \in \R^{k} \mid \lambda \in \R\} \mid i \in N, \tilde{x}_i \in \R^{k_i}, x \in \R^k\Big\}
\end{align}
where $0_{-i}\in \R^{k_{-i}}$ denotes the vector only consisting of zeros. For the inclusion $\supseteq$, let $i \in N$ and $\tilde{x}_i \in \R^{k_i}$ be arbitrary. Let $A\in \R^{k_i - 1\times k_i}$ be a matrix with $\ker(A) = \{\lambda \tilde{x}_i \mid \lambda \in \R\}$ and denote by $A_j$ the $j$-th row (interpreted as a column vector). Then the hyperplanes of the form $\{x \in \R^k \mid (A_j,0_{-i})^\top x = 0 \} \in \mathcal{O}$ for all $j \in [k_i - 1]$ are contained in $\mathcal{O}$. Similarly, we have $\{x \in \R^k \mid e_{lj}^\top x = 0 \} \in \mathcal{O}$ for all $l\in N, l \neq i$ and $j \in [k_l]$ where $e_{lj}$ denotes the standard basis vector with a $1$ at the $lj$-th position. Intersecting all these above mentioned hyperplanes results in  
\begin{align*}
    \Big(\quad\bigcap_{\mathclap{j \in [k_i -1]}} \{x \in \R^k \mid (A_j,0_{-i})^\top x = 0 \} \Big) \cap \Big(\quad\bigcap_{\mathclap{\substack{l \in N,l\neq i\\j \in [k_l]}}} \{x \in \R^k \mid e_{lj}^\top x = 0 \}  \Big) = 
    \{\lambda( \tilde{x}_i,0_{-i}) \in \R^{k} \mid \lambda \in \R\}
\end{align*}
which shows $\supseteq$ in \eqref{eq: Olines} as $\mathcal{O}$-lines are lines that are parallel to some line constructible as the intersection of hyperplanes in $\mathcal{O}$.

In order to show $\subseteq$ in \eqref{eq: Olines}, let $\bigcap_{s \in [L]} \mathcal{H}_s = \{\lambda \tilde{x} \in \R^k \mid \lambda \in \R \}$ for some hyperplanes $\mathcal{H}_s \in \mathcal{O}, s\in[L]$ for a $L \in \N$. By the definition of $\mathcal{O}$, we can represent the intersection as the linear equation system $\mathrm{diag}(A_1,\ldots,A_n)x = 0$ for a block diagonal matrix consisting of some matrices $A_i \in \R^{L_i \times k_i}, i \in N$ with $\sum_{i \in N}L_i = L$. Thus, 
$\ker(A_i) = \{\lambda \tilde{x}_i \mid \lambda \in \R\}$ needs to hold for any $i \in N$ which shows that only one $i \in N$ may exist with $\tilde{x}_i \neq 0_{k_i}$ as otherwise the intersection of the hyperplanes $\bigcap_{s \in [L]} \mathcal{H}_s $ would not be 1-dimensional. Thus, $\subseteq$ in \eqref{eq: Olines} holds.

Now we are ready to prove the equivalence of $k$-convexity and $\mathcal{O}$-convexity. We start with the only if direction. Let $X\subseteq \R^k$ be $k$-convex and let $\mathcal{L}:=\{\lambda\cdot( \tilde{x}_i,0_{-i}) + x \in \R^{k} \mid \lambda \in \R\}$ be an arbitrary $\mathcal{O}$-line. Then $\mathcal{L}\cap X = \mathcal{L}\cap \res{X}{x_{-i}}$ which shows by $X$ being $k$-convex that $\mathcal{L}\cap X$ is a convex set as it is the intersection of two convex sets and thus is in particular connected.

For the if direction, let $X$ be an $\mathcal{O}$-convex set. Now assume for contradiction that there exists $x_{-i} \in \R^{k_{-i}}$ such that $\res{X}{x_{-i}}$ is not convex, that is, there exist $x^1:=(x_i^1,x_{-i}),x^2:=(x_i^2,x_{-i}) \in \res{X}{x_{-i}}$ and $\alpha \in (0,1)$ with $x^{\alpha}:= \alpha x^1 + (1- \alpha)x^2 \notin \res{X}{x_{-i}}$, i.e. $x^\alpha \notin X$. This contradicts that $X$ is $\mathcal{O}$-convex as $\mathcal{L}:= \{\lambda\cdot(x_i^1 - x_i^2,0_{-i}) + (x_i^2,x_{-i}) \in \R^{k} \mid \lambda \in \R\}$ is an $\mathcal{O}$-line with $\mathcal{L}\cap X$ not being connected as $x^\alpha \in \mathcal{L}$ and $x^1,x^2 \in \mathcal{L}\cap X$ but $x^\alpha \notin X$.
\end{proof}

As already noted in the previous section, the jointly constrainedness of $I$ is not necessary in order for $\mathcal{I}^\conv$ to contain a jointly constrained instance which leads to the following definition of  what we call \emph{$k$-restrictive-closed} GNEPs.
\begin{definition}\label{def: kresclos}
Let $I$ be an instance of the GNEP. We define the \emph{complete} (relevant) \emph{strategy set} of player $i \in N$ by  $\Str{i} := \mycup_{x_{-i} \in \dom X_i} X_i(x_{-i}) \times x_{-i}$ 
 and denote their union over all players by $\Stra := \mycup_{i \in N}\Str{i}$.
An instance $I$ is called \emph{$k$-restrictive-closed} (w.r.t.~the $\convk$-operator), if for all $i \in N$ and $x_{-i} \in \dom X_i$, the following equality holds: 
     \begin{align}\label{eq: kresclo}
         \convk\big(\res{\Str{i}}{x_{-i}}\big) = \res{\convk(\Stra)}{x_{-i}}.
     \end{align}
\end{definition}

 The above concept of $k$-restrictive-closedeness requires that for fixed $x_{-i}\in\dom X_i$, the $k$-convex hull of the restriction
 of $\Str{i}$  w.r.t.~$x_{-i}$ is equal to the restriction of $\convk(\Stra)$ w.r.t.~$x_{-i}$.
  Remark that in the special case of a jointly constrained instance $I$ w.r.t.~a restriction set $X$, the  complete strategy set of each player $i \in N$ and subsequently their union equals the restriction set $X$, i.e.~$\Str{i} = \Stra = X$. Thus, in the case of jointly constrained instances, one can identify $k$-restrictive-closedness by only investigating the joint restriction set $X$ as~\eqref{eq: kresclo} becomes a condition solely for $X$.
  
 We note in the following proposition that the inclusion $\subseteq$ in~\eqref{eq: kresclo} always holds. 
 \begin{proposition}\label{prop: incl} 
 In Definition~\ref{def: kresclos}, the inclusion $\subseteq$ in \eqref{eq: kresclo} 
     holds for all $i \in N, x_{-i} \in \dom X_i$.
 \end{proposition}
 \begin{proof}
For  $i \in N$ and $x_{-i} \in \dom X_i$ arbitrary, we have by the definition of $\convk(\Stra)$ that the restriction $\res{\convk(\Stra)}{x_{-i}}$ is convex and therefore also $k$-convex. As $\res{\Str{i}}{x_{-i}}\subseteq \res{\convk(\Stra)}{x_{-i}}$, the results follows by the definition of the $\convk$-operator. 
 \end{proof}

 The following theorem gives various equivalent characterizations of $k$-restrictive-closedeness. 
 The first two equivalences $\ref{enu: kresclos}.\Leftrightarrow\ref{enu: Ex1}.$~and $\ref{enu: kresclos}.\Leftrightarrow\ref{enu: Ex2}.$ are interesting as they give geometric interpretations of $k$-restrictive-closed GNEPs. For instance, by $\ref{enu: kresclos}.\Leftrightarrow\ref{enu: Ex2}.$~one can easily verify that the example in Figure~\ref{fig: ExaSJC} is not $k$-restrictive-closed, as $x_{-2} = x_1 :=2 \in \dom X_2$, yet the point $(x_1,x_2) := (2,2)$ of the restriction $\res{\Pre{1}}{x_{-2}} = \{(2,1),(2,2)\}$ is not contained in $\res{\Pre{2}}{x_{-2}} = \res{\Pre{2}}{2} = \{(2,1)\}$. Similar, by  $\ref{enu: kresclos}.\Leftrightarrow\ref{enu: Ex1}.$~one can immediately verify that the example in Figure~\ref{fig:kovskoe} (see below) is not $k$-restrictive-closed as $E\left( \res{\convk(\Stra)}{x_{-i}} \right) = \{(3,1),(3,4)\} \not\subseteq \Str{i} = X$ for $i = 2$ and $x_{-2} = x_1 := 3$.
 
 The last equivalence  $\ref{enu: kresclos}.\Leftrightarrow\ref{enu: prop}.$ 
 will allows us to show in the subsequent Theorem~\ref{thm:oecon} that $k$-restrictive-closed GNEPs with nonempty $\mathcal{I}^\conv$ are exactly the GNEPs $I$ which admit a jointly constrained convexification $I^\conv \in \mathcal{I}^\conv$. 

 \clearpage
  \begin{theorem}\label{thm: equis}
 Let $I$ be an instance of the GNEP. Then, the following statements are equivalent: 
   \begin{enumerate}
     \item $I$ is $k$-restrictive-closed. \label{enu: kresclos}
     \item $E\left( \res{\convk(\Stra)}{x_{-i}} \right) \subseteq \Str{i}$ for all $i \in N$ and $x_{-i} \in \dom X_i$. \label{enu: Ex1}
     \item $\res{\Pre{i}}{x_{-j}}\subseteq \res{\Pre{j}}{x_{-j}}$ for all $i,j \in N$ and $x_{-j} \in \dom X_j$. \label{enu: Ex2}
     \item For all $x \in\R^k$ and $i \in N$ the following implication holds: \label{enu: prop}
     \begin{align*}
         x_i \in \conv(X_i(x_{-i})) \wedge x_{-i} \in \dom X_i&& \Rightarrow &&\forall j \in N: \, x_j \in \conv(X_j(x_{-j})) \lor   x_{-j} \notin \dom X_j.  
     \end{align*}

 \end{enumerate}
 \end{theorem}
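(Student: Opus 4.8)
The plan is to first reduce all four statements to pointwise conditions, then dispatch the easy implications, and only afterwards confront the one genuinely hard inclusion. Two preliminary reformulations drive everything. First, since $\res{\Str{i}}{x_{-i}}=X_i(x_{-i})\times x_{-i}$ lies in the single slice $\{x_{-i}\}$, its $\convk$-hull coincides with the ordinary convex hull taken inside that slice (the latter is already $k$-convex, its restriction in any direction being the intersection of a convex set with an affine subspace). Hence $\convk(\res{\Str{i}}{x_{-i}})=\conv(X_i(x_{-i}))\times x_{-i}=\res{\Pre{i}}{x_{-i}}$, and combined with Proposition~\ref{prop: incl} this turns the equality in Definition~\ref{def: kresclos} into the single missing inclusion: statement \ref{enu: kresclos}. is equivalent to $\res{\convk(\Stra)}{x_{-i}}\subseteq\res{\Pre{i}}{x_{-i}}$ for all $i\in N$, $x_{-i}\in\dom X_i$. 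Second, unwinding the unions gives the pointwise descriptions $x\in\Pre{i}\Leftrightarrow(x_{-i}\in\dom X_i\wedge x_i\in\conv(X_i(x_{-i})))$ and $x\in\Str{i}\Leftrightarrow(x_{-i}\in\dom X_i\wedge x_i\in X_i(x_{-i}))$.

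With the pointwise descriptions the equivalence \ref{enu: Ex2}.$\Leftrightarrow$\ref{enu: prop}. is a pure rewriting: fixing $x_{-j}$ and reading $\res{\Pre{i}}{x_{-j}}\subseteq\res{\Pre{j}}{x_{-j}}$ elementwise says exactly that $x\in\Pre{i}$ together with $x_{-j}\in\dom X_j$ forces $x\in\Pre{j}$, which is the logical form of the implication in \ref{enu: prop}. For \ref{enu: kresclos}.$\Rightarrow$\ref{enu: Ex2}. I would use $\Pre{i}\subseteq\convk(\Stra)$ (immediate from Proposition~\ref{prop: incl} and $\Pre{i}=\bigcup_{x_{-i}\in\dom X_i}\res{\Pre{i}}{x_{-i}}$): a point of $\res{\Pre{i}}{x_{-j}}$ with $x_{-j}\in\dom X_j$ then lies in $\res{\convk(\Stra)}{x_{-j}}=\res{\Pre{j}}{x_{-j}}$ by the reformulated \ref{enu: kresclos}. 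For \ref{enu: kresclos}.$\Rightarrow$\ref{enu: Ex1}. note that \ref{enu: kresclos}. makes $\res{\convk(\Stra)}{x_{-i}}$ equal to $\conv(X_i(x_{-i}))\times x_{-i}$, whose extreme points lie in $X_i(x_{-i})\times x_{-i}\subseteq\Str{i}$, using the standard fact that the extreme points of the convex hull of a set are contained in that set.

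The two substantial reverse implications are \ref{enu: Ex1}.$\Rightarrow$\ref{enu: kresclos}. and \ref{enu: prop}.$\Rightarrow$\ref{enu: kresclos}. For the former, the hypothesis gives $E(\res{\convk(\Stra)}{x_{-i}})\subseteq\Str{i}$, and intersecting with the slice yields $E(\res{\convk(\Stra)}{x_{-i}})\subseteq X_i(x_{-i})\times x_{-i}$; rewriting the convex restriction as the convex hull of its own extreme points then gives $\res{\convk(\Stra)}{x_{-i}}\subseteq\conv(X_i(x_{-i}))\times x_{-i}=\res{\Pre{i}}{x_{-i}}$. Here I must invoke a Minkowski--Krein--Milman-type representation of the convex slice by its extreme points, so this step needs the slices to be well-behaved (e.g. closed and line-free, or compact); flagging this regularity requirement is essential.

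The genuinely hard step is \ref{enu: prop}.$\Rightarrow$\ref{enu: kresclos}., where one must bound $\convk(\Stra)$ from above on every domain slice. My approach would be to exhibit a $k$-convex superset $Q$ of $\Stra$ whose domain slices already sit inside the prescribed sets $\res{\Pre{i}}{x_{-i}}$, the natural candidate being $Q=\{x\in\R^k \mid \forall j\in N:\ x_{-j}\in\dom X_j\Rightarrow x_j\in\conv(X_j(x_{-j}))\}$: statement \ref{enu: prop}. is exactly what makes $\Stra\subseteq Q$, and the domain slices of $Q$ are correct by construction, which would give $\convk(\Stra)\subseteq Q$. The obstacle, and the crux of the whole theorem, is $k$-convexity of $Q$: while the slice of $Q$ in direction $i$ at a fixed $x_{-i}$ is cut out convexly by the constraint indexed by $i$ itself, the constraints indexed by the other players $j\neq i$ vary with the free coordinate $x_i$ and need not carve out a convex set, so $Q$ as written is in general not $k$-convex at non-domain slices (and the same defect rules out the simpler candidate $\bigcup_{i\in N}\Pre{i}$). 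Closing this gap requires understanding $\convk(\Stra)$ as the closure of $\Stra$ under slice-convexification in all $n$ directions and arguing that the cross-direction interactions—convexifying a non-domain slice, which can then spoil an orthogonal slice and demand further convexification—never push a point into a domain slice outside its prescribed convex hull. I expect this to need the refined-domain structure essentially (membership $x_{-j}\in\dom X_j$ carries a feasible completion $\tilde x_j$), together with the player-symmetry already encoded in \ref{enu: prop}., rather than any single global convex superset handled naively.
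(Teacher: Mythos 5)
Everything you actually prove is correct, and it is essentially the paper's own argument: the reduction of \ref{enu: kresclos}.\ to the single inclusion $\res{\convk(\Stra)}{x_{-i}}\subseteq\conv(X_i(x_{-i}))\times x_{-i}$ (the paper's \eqref{eq: convkeqconv} together with Proposition~\ref{prop: incl}), the implication $\ref{enu: kresclos}.\Rightarrow\ref{enu: Ex1}.$ from $E(\conv(M))\subseteq M$, the equivalence $\ref{enu: Ex2}.\Leftrightarrow\ref{enu: prop}.$ as a pure rewriting, and $\ref{enu: Ex1}.\Rightarrow\ref{enu: kresclos}.$ by passing to extreme points. Your direct route $\ref{enu: kresclos}.\Rightarrow\ref{enu: Ex2}.$ merely bypasses the paper's detour through \ref{enu: Ex1}.\ and is, if anything, cleaner. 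The regularity caveat you attach to the Minkowski step is also well taken and applies to the paper itself: the paper invokes the representation of a convex slice by its extreme points twice (in $\ref{enu: Ex1}.\Rightarrow\ref{enu: kresclos}.$ and again in $\ref{enu: Ex1}.\Rightarrow\ref{enu: Ex2}.$) with no closedness or boundedness hypothesis, under which that representation can fail (a half-line is convex and is not the convex hull of its extreme points).

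The genuine gap is that you prove no implication from $\ref{enu: Ex2}./\ref{enu: prop}.$ back to $\ref{enu: kresclos}./\ref{enu: Ex1}.$, so the equivalence cycle never closes: for $\ref{enu: prop}.\Rightarrow\ref{enu: kresclos}.$ you offer a candidate set $Q$, a correct observation that neither $Q$ nor $\bigcup_{i}\Pre{i}$ need be $k$-convex, and a program you do not carry out. You should know, however, that the obstruction you identified is exactly where the published proof breaks as well. The paper establishes $\ref{enu: Ex2}.\Rightarrow\ref{enu: Ex1}.$ by asserting in \eqref{eq: unPrePre} that \ref{enu: Ex2}.\ forces $\bigcup_{j}\Pre{j}$ to be $k$-convex; but \ref{enu: Ex2}.\ only constrains slices at $x_{-i}\in\dom X_i$, and at non-domain slices the assertion is false. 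Concretely, for the two-player jointly constrained instance with $X=\{0,2\}\times\{1,3\}$ and $k=(1,1)$, all four statements of the theorem hold, yet $\Pre{1}\cup\Pre{2}$ is the frame $[0,2]\times\{1,3\}\,\cup\,\{0,2\}\times[1,3]$, whose slice at $x_2=2\notin\dom X_1=\{1,3\}$ is the non-convex set $\{0,2\}\times\{2\}$, while $\convk(\Stra)$ is the full box $[0,2]\times[1,3]$, strictly larger than $\Pre{1}\cup\Pre{2}$; hence \eqref{eq: unPrePre} and Lemma~\ref{lem: krcPre} are false as stated (the example does not contradict the theorem, whose four statements all hold there, only the proof). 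So your missing step cannot be filled by copying the paper: a correct proof of $\ref{enu: prop}.\Rightarrow\ref{enu: kresclos}.$ must control precisely the cross-direction iteration you describe, e.g.\ by an induction over the slice-convexification process generating $\convk(\Stra)$ showing that under \ref{enu: prop}.\ no step ever adds to a slice at $x_{-i}\in\dom X_i$ a point outside $\conv(X_i(x_{-i}))\times x_{-i}$. Neither your proposal nor the paper contains such an argument; your attempt is therefore incomplete on this direction, but your diagnosis of where the difficulty lies is more accurate than the paper's treatment of it.
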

 \begin{proof}
 We will prove the theorem by showing the equivalences $\ref{enu: kresclos}.\Leftrightarrow\ref{enu: Ex1}.\Leftrightarrow\ref{enu: Ex2}.\Leftrightarrow\ref{enu: prop}.$ in this order. Before doing so, we observe that the following inclusion is valid: 
 \begin{align}\label{eq: PreincS}
    \mycup_{i \in N}\Pre{i} \subseteq \convk(S).
 \end{align}
 To see this, we remark that we have the following equations for the prescribed strategy sets $\Pre{i}$ of player $i \in N$: 
 \begin{align*}
    \Pre{i} &:= \mycup_{x_{-i} \in \dom X_i} \conv(X_i(x_{-i})) \times x_{-i} = \mycup_{x_{-i} \in \dom X_i} \conv\left(\res{\Str{i}}{x_{-i}}\right)\\&= \mycup_{x_{-i} \in \dom X_i} \convk\big(\res{\Str{i}}{x_{-i}}\big)
 \end{align*}
where the last equality follows as for arbitrary
$i \in N$ and $x_{-i} \in \R^{k_{-i}}$, the equality
\begin{align}\label{eq: convkeqconv}
    \convk\big(\res{\Str{i}}{x_{-i}}\big) = \conv\big(\res{\Str{i}}{x_{-i}}\big)
\end{align}
holds. Clearly, $\subseteq$ holds in~\eqref{eq: convkeqconv} since any convex set is also $k$-convex. To see that $\supseteq$ holds, we argue as follows: For any $k$-convex set $Z \supseteq \res{\Str{i}}{x_{-i}}$, the set $\res{Z}{x_{-i}}$ needs to be convex by definition. Since the latter also contains $\res{\Str{i}}{x_{-i}}$, the inclusion $\conv\big(\res{\Str{i}}{x_{-i}}\big) \subseteq \res{Z}{x_{-i}}\subseteq Z$ holds which shows $\supseteq$ in~\eqref{eq: convkeqconv}. 
 
 Thus, the inclusion in~\eqref{eq: PreincS} follows by Proposition~\ref{prop: incl} and the following representation of $\convk(\Stra)$:
\begin{align*}
    \convk(\Stra)= \mycup_{i\in N} \mycup_{x_{-i} \in \R^{k_{\scalebox{0.33}{$-i$}}}} \res{\convk(\Stra)}{x_{-i}}. 
\end{align*}
Now we are ready to prove the equivalences:  
 
  $\ref{enu: kresclos}.\Leftrightarrow\ref{enu: Ex1}.$: Let $i \in N$ and $x_{-i}\in\dom X_i$ be arbitrary.
  
 $\ref{enu: kresclos}.\Rightarrow\ref{enu: Ex1}.$:  By \ref{enu: kresclos}.~together with the equality in~\eqref{eq: convkeqconv}, we get 
\begin{align*}
    E\left( \res{\convk(\Stra)}{x_{-i}} \right) = E\Big( \conv\big(\res{\Str{i}}{x_{-i}} \big)\Big) \subseteq \res{\Str{i}}{x_{-i}} \subseteq \Str{i}.
\end{align*}
 
   $\ref{enu: kresclos}.\Leftarrow\ref{enu: Ex1}.$: The following implications hold:
 \begin{alignat*}{2}
   \ref{enu: Ex1}. \quad \Rightarrow \quad &E\left( \res{\convk(\Stra)}{x_{-i}} \right) &&\subseteq \quad\res{\Str{i}}{x_{-i}} \\ 
   \Rightarrow\quad &\conv\Big(E\big( \res{\convk(\Stra)}{x_{-i}} \big)\Big) \; &&\subseteq \quad\conv\big(\res{\Str{i}}{x_{-i}}\big) \\ 
   \Rightarrow \quad&\res{\convk(\Stra)}{x_{-i}} &&\subseteq\quad \convk\big(\res{\Str{i}}{x_{-i}}\big) 
 \end{alignat*}
 where the last inclusion follows by~\eqref{eq: convkeqconv} and the convexity of $\res{\convk(\Stra)}{x_{-i}}$. Since the inclusion $\supseteq$ in the last line always holds by Proposition~\ref{prop: incl}, the claim follows. 
 
 
 $\ref{enu: Ex1}.\Rightarrow\ref{enu: Ex2}.$: Assume for contradiction that there exists $i,j \in N$, $x_{-j} \in\dom X_j$ and a strategy profile $x:=(x_j,x_{-j}) \in \res{\Pre{i}}{x_{-j}} \setminus \res{\Pre{j}}{x_{-j}}$. By~\eqref{eq: PreincS}, we know that $\res{\Pre{i}}{x_{-j}} \subseteq \res{\convk(\Stra)}{x_{-j}}$. As \ref{enu: Ex1}.~and $\Str{j}\subseteq \Pre{j}$ holds, we know $x \notin E\big( \res{\convk(\Stra)}{x_{-j}} \big)$. Thus, there exists a convex combination $x = \sum_{s=1}^L \lambda_s (x^s_j,x_{-j})$ with $(x^1_j,x_{-j}),\ldots,(x^L_j,x_{-j})\neq x$ \emph{extreme points} of $\res{\convk(\Stra)}{x_{-j}}$ and $\lambda \in \Lambda_L := \lbrace \alpha \in \mathbb{R}^L_{\geq 0}\mid \sum_{k=1}^{L}\alpha_k =1 \rbrace$ for some $L \in \N$. By \ref{enu: Ex1}.~we have for all $s \in [L]$ that $(x^s_j,x_{-j}) \in \Str{j}$, that is, $x^s_j \in X_j(x_{-j})$. This in turn implies that $x_j \in \conv(X_j(x_{-j}))$ which together with $x_{-j} \in\dom X_j$ shows that $x \in \Pre{j}$ which contradicts our assumption.
 
   $\ref{enu: Ex1}.\Leftarrow\ref{enu: Ex2}.$: 
 We first show that $\convk(\Stra) = \mycup_{j \in N}\Pre{j}$ holds. By our observation~\eqref{eq: PreincS}, we know that $\supseteq$ always holds. Thus it suffices to show that  $\mycup_{j \in N}\Pre{j}$ is $k$-convex since $\Stra \subseteq \mycup_{j\in N}\Pre{j}$. 
 In order to do so we have to show that $\res{\mycup_{j \in N}\Pre{j}}{x_{-i}}$ is convex for all $i \in N$ and $x_{-i}\in \R^{k_{-i}}$. For the latter set we have:
     \begin{align}\textstyle
       \res{\mycup_{j \in N}\Pre{j}}{x_{-i}} 
       =\mycup_{j \in N}\res{\Pre{j}}{x_{-i}}  
       = \mycup_{j \in N} \res{\Pre{i}}{x_{-i}} 
       =\res{\Pre{i}}{x_{-i}}  \label{eq: unPrePre}
   \end{align}
   where the penultimate equality is valid by \ref{enu: Ex2}.
   Since $\res{\Pre{i}}{x_{-i}}$ is clearly convex by definition, the set $\res{\mycup_{j \in N}\Pre{j}}{x_{-i}}$ is in fact convex. Therefore $\convk(\Stra) = \mycup_{j \in N}\Pre{j}$ holds which implies for all $i \in N$ and $x_{-i} \in \dom X_i$:
   \begin{align*} \textstyle
       E\big(\res{\convk(\Stra)}{x_{-i}} \big)  = E\big(\res{\mycup_{j \in N}\Pre{j}}{x_{-i}} \big) = E\big(\res{\Pre{i}}{x_{-i}}\big) \subseteq \res{\Str{i}}{x_{-i}}
   \end{align*}
   where the penultimate equality follows by~\eqref{eq: unPrePre} and the last equality by the definition of $\Pre{i},\Str{i}$.

  $\ref{enu: Ex2}.\Rightarrow\ref{enu: prop}.$: Let $x \in \R^k$ with $x_i \in \conv(X_i(x_{-i}))$ and $x_{-i} \in \dom X_i$. In what follows, let $j \in N$ be arbitrary and assume $x_{-j} \in \dom X_j$. Since \ref{enu: Ex2}.~and $x \in \res{\Pre{i}}{x_{-j}}$ hold, it follows that $x \in \res{\Pre{j}}{x_{-j}}$ which implies $x_j \in \conv(X_j(x_{-j}))$.
  
$\ref{enu: Ex2}.\Leftarrow\ref{enu: prop}.$: Assume for contradiction that there exists $i,j \in N$, $x_{-j} \in \dom X_j$ and a strategy profile $(\bar{x}_j,x_{-j}) \in \res{\Pre{i}}{x_{-j}}  \setminus \res{\Pre{j}}{x_{-j}}$. By $(\bar{x}_j,x_{-j}) \in \Pre{i}$ we may infer that $x_i \in \conv(X_i(\bar{x}_j,x_{-ij}))$ and $(\bar{x}_j,x_{-ij})\in \dom X_i$.
Subsequently by \ref{enu: prop}., $\bar{x}_j \in \conv(X_j(x_{-j}))$ since $x_{-j}\in \dom X_j$. This implies $(\bar{x}_j,x_{-j}) \in \Pre{j}$ which contradicts our assumption.
 \end{proof}
  
  From the proof follows directly the following necessary condition for $I$ to be $k$-restrictive-closed.
\begin{lemma}\label{lem: krcPre}
  If $I$ is $k$-restrictive-closed, then the union of the prescribed strategy sets over all players equals the  $\convk$-hull of $\Stra$, i.e. $\convk(\Stra)= \mycup_{i\in N} \Pre{i}$.
\end{lemma}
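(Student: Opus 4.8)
The plan is to derive this lemma directly from the machinery already assembled in the proof of Theorem~\ref{thm: equis}, so essentially no new argument is required. Since $I$ is $k$-restrictive-closed, Theorem~\ref{thm: equis} gives us the equivalent characterization \ref{enu: Ex2}., namely $\res{\Pre{i}}{x_{-j}} \subseteq \res{\Pre{j}}{x_{-j}}$ for all $i,j \in N$ and $x_{-j} \in \dom X_j$. This nesting of the restricted prescribed strategy sets is exactly the ingredient that will make the union $\mycup_{j\in N}\Pre{j}$ $k$-convex, and hence force it to coincide with $\convk(\Stra)$.

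First I would record the easy inclusion $\mycup_{i\in N}\Pre{i} \subseteq \convk(\Stra)$, which holds unconditionally by the observation~\eqref{eq: PreincS} established in the proof of Theorem~\ref{thm: equis} (and does not even use $k$-restrictive-closedness). For the reverse inclusion, the key step is to show that $\mycup_{j\in N}\Pre{j}$ is itself $k$-convex. Because $\convk(\Stra)$ is by definition the smallest $k$-convex set containing $\Stra$, and because $\Stra = \mycup_{i\in N}\Str{i} \subseteq \mycup_{i\in N}\Pre{i}$ (using $X_i(x_{-i}) \subseteq \conv(X_i(x_{-i}))$), this $k$-convexity immediately yields $\convk(\Stra) \subseteq \mycup_{j\in N}\Pre{j}$, completing the proof.

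To verify $k$-convexity I would check that each restriction $\res{\mycup_{j\in N}\Pre{j}}{x_{-i}}$ is convex. Here restriction commutes with the union, and the characterization \ref{enu: Ex2}.~collapses $\mycup_{j\in N}\res{\Pre{j}}{x_{-i}}$ down to the single set $\res{\Pre{i}}{x_{-i}}$, which is convex directly from the definition of $\Pre{i}$ as a union of convex hulls. This is precisely the chain of equalities~\eqref{eq: unPrePre} already worked out inside the proof of Theorem~\ref{thm: equis} in the step $\ref{enu: Ex1}.\Leftarrow\ref{enu: Ex2}.$

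The only substantive point — and the sole place where $k$-restrictive-closedness is genuinely invoked — is this collapse of the union via \ref{enu: Ex2}.; everything else is bookkeeping about how restriction interacts with unions and with the $\convk$-operator. Since that collapse was already carried out within the proof of Theorem~\ref{thm: equis}, I expect no real obstacle: the lemma follows immediately, which is exactly what the attribution preceding the statement (``from the proof follows directly'') signals.
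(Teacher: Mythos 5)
Your proposal is a faithful transcription of the paper's own justification: the paper proves this lemma only by pointing at the step $\ref{enu: Ex1}.\Leftarrow\ref{enu: Ex2}.$ in the proof of Theorem~\ref{thm: equis}, which is exactly what you invoke. But the step you yourself call ``the only substantive point'' has a genuine gap, in your write-up and in the paper alike. Statement \ref{enu: Ex2}.~is quantified only over $x_{-j} \in \dom X_j$, so it collapses $\mycup_{j\in N}\res{\Pre{j}}{x_{-i}}$ to $\res{\Pre{i}}{x_{-i}}$ only when $x_{-i} \in \dom X_i$. However, $k$-convexity of $\mycup_{j\in N}\Pre{j}$ requires this restriction to be convex for \emph{every} $x_{-i} \in \R^{k_{-i}}$. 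When $x_{-i} \notin \dom X_i$ we have $\res{\Pre{i}}{x_{-i}} = \emptyset$ by the definition of $\Pre{i}$, whereas for $j \neq i$ the set $\res{\Pre{j}}{x_{-i}}$ can be nonempty: a point $(\tilde{x}_i,x_{-i}) \in \Pre{j}$ only requires $(\tilde{x}_i,x_{-ij}) \in \dom X_j$ and $x_j \in \conv\big(X_j(\tilde{x}_i,x_{-ij})\big)$, not $x_{-i} \in \dom X_i$. Nothing in \ref{enu: Ex2}.~makes the union of these leftover pieces convex.

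The gap is fatal: the asserted equality is false in general. Take the two-player jointly constrained instance with $X = \{0,2\}\times\{0,2\} \subseteq \R^{(1,1)}$ and zero cost functions, so that $\Str{1}=\Str{2}=\Stra=X$ and $\dom X_1 = \dom X_2 = \{0,2\}$. Then $\convk(\Stra) = [0,2]^2$: any $k$-convex superset of $X$ must contain the two horizontal edges (convexity of the restrictions w.r.t.~$x_2 \in \{0,2\}$) and then every vertical segment $\{t\}\times[0,2]$, $t \in [0,2]$ (convexity of the restrictions w.r.t.~$x_1 = t$), while the full square is convex and hence $k$-convex. This instance is $k$-restrictive-closed, since $\convk\big(\res{\Str{i}}{x_{-i}}\big)$ and $\res{\convk(\Stra)}{x_{-i}}$ both equal the corresponding edge of the square for every $x_{-i} \in \{0,2\} = \dom X_i$. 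Yet $\Pre{1}\cup\Pre{2} = \big([0,2]\times\{0,2\}\big) \cup \big(\{0,2\}\times[0,2]\big)$ is only the boundary of the square: its restriction w.r.t.~$x_2 = 1 \notin \dom X_1$ is $\{(0,1),(2,1)\}$, which is not convex. Hence $\mycup_{i\in N}\Pre{i}$ is not $k$-convex and is a proper subset of $\convk(\Stra)$. What does survive is the unconditional inclusion $\mycup_{i\in N}\Pre{i} \subseteq \convk(\Stra)$ from~\eqref{eq: PreincS}, together with the equality of the two sides' restrictions at every $x_{-i} \in \dom X_i$; the global set equality claimed by the lemma (and by the chain~\eqref{eq: unPrePre}) does not follow from \ref{enu: Ex2}.~and indeed fails, so a correct proof would need an additional hypothesis or argument controlling the prescribed strategy sets at rivals' profiles outside the refined domains.
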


However, note that the property $\convk(\Stra) = \mycup_{i \in N}\Pre{i}$ is not sufficient for $I$ to be $k$-restrictive-closed as the example in Figure~\ref{fig:kovskoe} shows.  

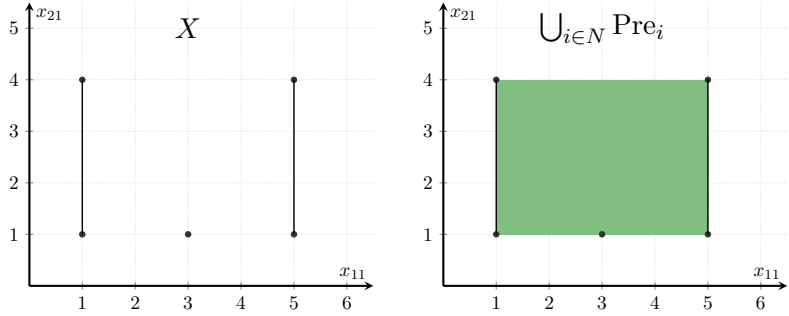
\begin{figure}[!h]
\begin{center}
\scalebox{\myscale}{
\begin{tikzpicture}
\begin{axis}[clip mode=individual, 
  axis lines=middle,
  x = \mysize, y = \mysize,
  xmin=0,xmax=6.5,ymin=0,ymax=5.5,
  xtick distance=1,
  ytick distance=1,
  xlabel=$x_{1}$,
  ylabel=$x_{2}$,
  grid=major,
  grid style={thin,densely dotted,black!20}]
\node at (axis cs:1,1) [circle, scale=0.3, draw=black!80,fill=black!80] {};
\node at (axis cs:3,1) [circle, scale=0.3, draw=black!80,fill=black!80] {};
\node at (axis cs:5,1) [circle, scale=0.3, draw=black!80,fill=black!80] {};
\node at (axis cs:1,4) [circle, scale=0.3, draw=black!80,fill=black!80] {};
\node at (axis cs:5,4) [circle, scale=0.3, draw=black!80,fill=black!80] {};
\draw[thick] (axis cs:1,1) -- (axis cs:1,4);
\draw[thick] (axis cs:5,1) -- (axis cs:5,4);
\node at (axis cs:3,5) {\scalebox{\myoscale}{$X$}};

\end{axis}
\end{tikzpicture} }\hspace{2cm}
\scalebox{\myscale}{
\begin{tikzpicture}
\begin{axis}[clip mode=individual, 
  axis lines=middle,
  x = \mysize, y = \mysize,
  xmin=0,xmax=6.5,ymin=0,ymax=5.5,
  xtick distance=1,
  ytick distance=1,
  xlabel=$x_{1}$,
  ylabel=$x_{2}$,
  grid=major,
  grid style={thin,densely dotted,black!20}]
\filldraw [fill = darkgreen!50, draw = none] (axis cs:1,1) rectangle (axis cs:5,4);
\node at (axis cs:1,1) [circle, scale=0.3, draw=black!80,fill=black!80] {};
\node at (axis cs:3,1) [circle, scale=0.3, draw=black!80,fill=black!80] {};
\node at (axis cs:5,1) [circle, scale=0.3, draw=black!80,fill=black!80] {};
\node at (axis cs:1,4) [circle, scale=0.3, draw=black!80,fill=black!80] {};
\node at (axis cs:5,4) [circle, scale=0.3, draw=black!80,fill=black!80] {};
\draw[thick] (axis cs:1,1) -- (axis cs:1,4);
\draw[thick] (axis cs:5,1) -- (axis cs:5,4);

\node at (axis cs:3,5) {\scalebox{\myoscale}{$\mycup_{i \in N}\Pre{i}$}};
\end{axis}
\end{tikzpicture}}
\end{center}
    \caption{A 2-player jointly constrained GNEP $I$ w.r.t.~$X\subseteq \R^{k}, k := (1,1)$ represented by the black set in the first picture. In picture 2 the union of the prescribed strategy sets is represented which equals the $k$-convex hull (and even the regular convex hull) of $X$. Yet, for $i = 2$ and $x_{-2} = x_1 := 3$ we have  $\convk\big(\res{X}{x_{-i}}\big) = \{(3,1)\} \subsetneq \{3\}\times [1,4] =\res{\convk(X)}{x_{-i}}$. Remember that $X = \Str{i}=\Stra$ holds for any $i \in N$ in a jointly constrained instance.}
    \label{fig:kovskoe}
\end{figure}
 
 \begin{theorem}\label{thm:oecon}
Let $I$ be an instance of the GNEP with $\mathcal{I}^\conv(I) \neq \emptyset$. Then $I$ is
$k$-restrictive-closed if and only if the convexification $\mathcal{I}^\conv$ contains a jointly constrained instance. 
In this case, $\mathcal{I}^\conv$ contains for any set $X^\conv$ that satisfies
\begin{align}\label{eq:strongJCthm}
    \res{X^\conv}{x_{-i}} = \res{\convk(\Stra)}{x_{-i}} && \text{ for all } i \in N \text{ and }  x_{-i}\in \dom X_i,
\end{align}
 a jointly constrained instance with $X^\conv$ as its joint restriction set.
\end{theorem}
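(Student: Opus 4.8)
The plan is to prove both implications and to note that the forward direction simultaneously establishes the quantitative ``In this case'' addendum. The whole argument rests on the characterizations collected in Theorem~\ref{thm: equis}: I will use the cross-player consistency condition~\ref{enu: prop}.\ for the ``only if'' part, and the defining equality~\eqref{eq: kresclo} of $k$-restrictive-closedness together with the identity~\eqref{eq: convkeqconv} for the ``if'' part. The single bookkeeping fact I keep in mind throughout is that for $x_{-i}\in\dom X_i$ the full-profile restriction satisfies $\res{\Str{i}}{x_{-i}}=X_i(x_{-i})\times x_{-i}$, so that $\conv\big(\res{\Str{i}}{x_{-i}}\big)=\conv(X_i(x_{-i}))\times x_{-i}$ and, after projecting onto the $i$-th block, recovers exactly the convexified strategy set prescribed by Definition~\ref{def: conv}.\ref{enu: c1}.

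For the direction ``$\mathcal{I}^\conv$ contains a jointly constrained instance $\Rightarrow$ $I$ is $k$-restrictive-closed'', I would let $I^\conv\in\mathcal{I}^\conv$ be jointly constrained w.r.t.\ some $X^\conv\subseteq\R^k$. For every $i\in N$ and $x_{-i}\in\dom X_i$, the two defining properties of $I^\conv$ combine to give $\{x_i\mid (x_i,x_{-i})\in X^\conv\}=X_i^\conv(x_{-i})=\conv(X_i(x_{-i}))$. I then verify condition~\ref{enu: prop}.\ of Theorem~\ref{thm: equis} directly: if $x_i\in\conv(X_i(x_{-i}))$ and $x_{-i}\in\dom X_i$, this equality forces $x=(x_i,x_{-i})\in X^\conv$; now for an arbitrary $j\in N$ with $x_{-j}\in\dom X_j$, membership of the \emph{same} profile $x$ in the \emph{shared} set $X^\conv$ yields $x_j\in\{y_j\mid (y_j,x_{-j})\in X^\conv\}=\conv(X_j(x_{-j}))$, which is precisely the conclusion of~\ref{enu: prop}.\ Hence $I$ is $k$-restrictive-closed by Theorem~\ref{thm: equis}.

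For the converse together with the quantitative claim, I assume $I$ is $k$-restrictive-closed and fix \emph{any} $X^\conv$ satisfying~\eqref{eq:strongJCthm} (such a set exists, e.g.\ $X^\conv:=\convk(\Stra)$, which meets~\eqref{eq:strongJCthm} trivially). I define $I^\conv$ to be jointly constrained w.r.t.\ $X^\conv$, and take its cost functions to be the convex envelopes of Definition~\ref{def: conv}.\ref{enu: c2}.\ on $\conv(X_i(x_{-i}))$ for $x_{-i}\in\dom X_i$ (these exist by Remark~\ref{rem:refl}) and arbitrary elsewhere, so that condition~\ref{enu: c2}.\ holds by construction. It remains to check condition~\ref{enu: c1}.\ For $x_{-i}\in\dom X_i$ I would chain
\[
\res{X^\conv}{x_{-i}} = \res{\convk(\Stra)}{x_{-i}} = \convk\big(\res{\Str{i}}{x_{-i}}\big) = \conv\big(\res{\Str{i}}{x_{-i}}\big) = \conv(X_i(x_{-i}))\times x_{-i},
\]
where the first equality is~\eqref{eq:strongJCthm}, the second is~\eqref{eq: kresclo}, the third is~\eqref{eq: convkeqconv}, and the last uses $\res{\Str{i}}{x_{-i}}=X_i(x_{-i})\times x_{-i}$. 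Projecting onto the $i$-th block gives $X_i^\conv(x_{-i})=\conv(X_i(x_{-i}))$, i.e.\ condition~\ref{enu: c1}. Since the first equality in the chain only invokes~\eqref{eq:strongJCthm}, the argument is uniform in $X^\conv$, so \emph{every} admissible $X^\conv$ yields a jointly constrained instance in $\mathcal{I}^\conv$, establishing both the ``if'' direction and the ``In this case'' statement.

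The step I expect to require the most care is the backward direction's exploitation of the \emph{shared} nature of $X^\conv$: the crux is that a single profile $x$ lying in one common set simultaneously certifies feasibility for \emph{all} players, which is exactly how joint-constrainedness gets translated into the cross-player condition~\ref{enu: prop}. The remaining subtlety is purely notational, namely keeping the full-profile restrictions $\res{\cdot}{x_{-i}}$ separate from their projections $X_i(x_{-i})$, and remembering that~\eqref{eq:strongJCthm} pins down $X^\conv$ only on the relevant restrictions (those with $x_{-i}\in\dom X_i$), leaving enough freedom that a jointly constrained realization always exists.
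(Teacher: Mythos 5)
Your proof is correct and follows essentially the same route as the paper's: the forward direction verifies Definition~\ref{def: conv}.\ref{enu: c1}.\ for any $X^\conv$ satisfying~\eqref{eq:strongJCthm} via the chain~\eqref{eq:strongJCthm}, \eqref{eq: kresclo}, \eqref{eq: convkeqconv}, while the backward direction uses the shared restriction set to establish condition~\ref{enu: prop}.\ of Theorem~\ref{thm: equis}. Your version only adds two details the paper leaves implicit, namely that $X^\conv:=\convk(\Stra)$ witnesses the existence of an admissible restriction set and that the argument is uniform in $X^\conv$, which makes the ``In this case'' addendum explicit.
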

\begin{proof}
We start with the only if direction. 
Let $I^\conv$ be a jointly constrained instance w.r.t.~any restriction set $X^\conv$ as described above. Furthermore, let the cost functions of $I^\conv$ fulfill the requirements (Definition~\ref{def: conv}.\ref{enu: c2}.) for $I^\conv$ to belong to $\mathcal{I}^\conv$. 
{ Such cost functions exist due to $\mathcal{I}^\conv(I) \neq \emptyset$. } 
We want to prove that $I^\conv \in \mathcal{I}^\conv$. Thus, we have to show that the condition of Definition~\ref{def: conv}.\ref{enu: c1}.~for arbitrary $i \in N$ and $ x_{-i} \in \dom X_i$ is fulfilled, that is:
\begin{align*}
  X_i^\conv(x_{-i}):=\left\lbrace x_i \in \R^{k_i} \mid (x_i,x_{-i}) \in X^\conv \right \rbrace \overset{!}{=} \conv(X_i(x_{-i})) 
\end{align*}
which is equivalent to 
$\res{X^\conv}{x_{-i}} = \conv\big(\res{\Str{i}}{x_{i}}\big).$
The latter equality is valid due to $I$ being $k$-restrictive-closed and $X^\conv$ fulfilling~\eqref{eq:strongJCthm}.

For the if direction, let $I^\conv \in \mathcal{I}^\conv$ be jointly constrained w.r.t.~$X^\conv$. Then for any $x \in \R^k$ with $x_i \in \conv(X_i(x_{-i}))$ and $x_{-i}\in \dom X_i$ for some $i \in N$ we have $x_i \in X_i^\conv(x_{-i})$ and by the jointly constrainedness of $I^\conv$ that $(x_i,x_{-i}) \in X^\conv$. Again by the jointly constrainedness of $I^\conv$, we get for all $j \in N$ that $x_j \in X_j^\conv(x_{-j})$ which implies that $x_j \in \conv(X_j(x_{-j}))$ if $x_{-j} \in \dom X_j$. Therefore Theorem~\ref{thm: equis}($\ref{enu: prop}. \Rightarrow \ref{enu: resclo}.$) shows that $I$ is $k$-restrictive-closed. 
\end{proof}

\begin{figure}[!ht]
\begin{center}
\scalebox{\myscale}{
\begin{tikzpicture}
\begin{axis}[clip mode=individual, 
  axis lines=middle,
  x= \mysize, y = \mysize,
  xmin=0,xmax=5.5,ymin=0,ymax=5.5,
  xtick distance=1,
  ytick distance=1,
  xlabel=$x_{1}$,
  ylabel=$x_{2}$,
  grid=major,
  grid style={thin,densely dotted,black!20}]
\node at (axis cs:2,2) [circle, scale=0.3, draw=black!80,fill=black!80] {};

\node at (axis cs:1,2) [circle, scale=0.3, draw=black!80,fill=black!80] {};

\node at (axis cs:4,2) [circle, scale=0.3, draw=black!80,fill=black!80] {};
\node at (axis cs:4,4) [circle, scale=0.3, draw=black!80,fill=black!80] {};

\node at (axis cs:2.5,5) {\scalebox{\myoscale}{$X$}};

\end{axis}
\end{tikzpicture} }\hspace{2cm}
\scalebox{\myscale}{
\begin{tikzpicture}
\begin{axis}[clip mode=individual, 
  axis lines=middle,
 x= \mysize, y = \mysize,
  xmin=0,xmax=5.5,ymin=0,ymax=5.5,
  xtick distance=1,
  ytick distance=1,
  xlabel=$x_{1}$,
  ylabel=$x_{2}$,
  grid=major,
  grid style={thin,densely dotted,black!20}]
\node at (axis cs:2,2) [circle, scale=0.3, draw=black!80,fill=black!80] {};

\node at (axis cs:1,2) [circle, scale=0.3, draw=black!80,fill=black!80] {};

\node at (axis cs:4,2) [circle, scale=0.3, draw=black!80,fill=black!80] {};
\node at (axis cs:4,4) [circle, scale=0.3, draw=black!80,fill=black!80] {};

\node[blue!50] at (axis cs:3,4.5) {\scalebox{1}{$X^\conv = \conv^{(1,1)}(X)$}};
\draw[blue!50,thick] (axis cs:1,2) -- (axis cs:4,2);
\draw[blue!50,thick] (axis cs:4,2) -- (axis cs:4,4);

\end{axis}
\end{tikzpicture}}\hspace{2cm}
\scalebox{\myscale}{
\begin{tikzpicture}
\begin{axis}[clip mode=individual, 
  axis lines=middle,
  x= \mysize, y = \mysize,
  xmin=0,xmax=5.5,ymin=0,ymax=5.5,
  xtick distance=1,
  ytick distance=1,
  xlabel=$x_{1}$,
  ylabel=$x_{2}$,
  grid=major,
  grid style={thin,densely dotted,black!20}]
\node at (axis cs:2,2) [circle, scale=0.3, draw=black!80,fill=black!80] {};

\node at (axis cs:1,2) [circle, scale=0.3, draw=black!80,fill=black!80] {};

\node at (axis cs:4,2) [circle, scale=0.3, draw=black!80,fill=black!80] {};
\node at (axis cs:4,4) [circle, scale=0.3, draw=black!80,fill=black!80] {};

\addplot[name path=g, domain=2:4, samples=2,draw=none] {x};
\addplot[name path=h, domain=2:4, samples=2,draw=none] {2 };

    \addplot [
        thick,
        color=blue,
        fill=blue, 
        fill opacity=0.25
    ]
    fill between[
        of=g and h,
        soft clip={domain=1:4},
    ];

\node[blue!50] at (axis cs:2.5,3.6) {\scalebox{\myoscale}{$X^\conv$}};
\draw[blue!50,thick] (axis cs:1,2) -- (axis cs:2,2);

\end{axis}
\end{tikzpicture}}
\end{center}
    \caption{Example for a 2-player jointly constrained GNEP $I$ w.r.t.~a $(1,1)$-restrictive-closed $X\subseteq \R^{(1,1)}$ represented by the four black dots in the first picture. In picture 2 and 3 are two suitable choices of $X^\conv$ which fulfill~\eqref{eq:strongJCthm}.}
\end{figure}
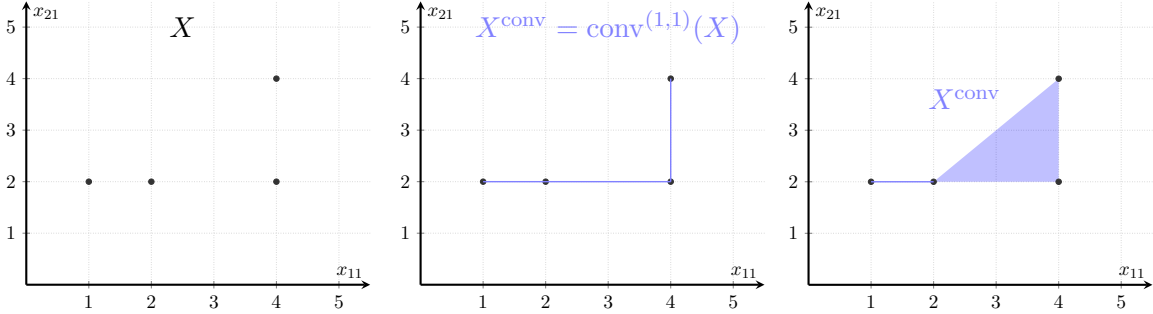

\subsection{Restrictive-closed GNEPs} 
Jointly constrained GNEPs w.r.t.~a convex restriction set are often referred to as \emph{jointly convex} in the literature and constitute one of the best understood subclasses of the GNEP. Thus, for $k$-restrictive-closed GNEPs, the question arises whether or not we can chose a \emph{convex} restriction set $X^\conv$.
However, for general $k$-restrictive-closed GNEPs $I$, 
there may exist $x_{-i}\in \dom X_i$ with a subsequently prescribed convexified strategy set $X_i^\conv(x_{-i}) = \conv(X_i(x_{-i}))$ which prohibit the possibility for $I^\conv$ to be jointly convex w.r.t.~some convex set $X^\conv$. 
An example for such a situation is given in Figure~\ref{fig: ExaJointConstr} by $x_{-2}:=2 \in \dom X_2$ and $X_2^\conv(2)$. To see this, assume $I^\conv\in \mathcal{I}^\conv$ was jointly convex w.r.t.~some convex set $X^\conv$. 
The union of the complete strategy sets of the players in the convexified instance $\Stra^\conv$ is then given by $\Stra^\conv= X^\conv$. 
Since $X^\conv$ is convex and $\Stra \subseteq \Stra^\conv$ clearly holds, we even have $\conv(\Stra) \subseteq X^\conv$.  Thus, we get: 
\begin{alignat*}{2}
X_2^\conv(2) &:= \conv(X_2(2))= \{2\} && \text{ by } I^\conv \in \mathcal{I}^\conv \\
& \subsetneq [2,8/3] = \left\lbrace x_2 \in \mathbb{R} \mid (2,x_2) \in \conv(\Stra)\right\rbrace \quad && \\
&\subseteq \left\lbrace x_2 \in \mathbb{R} \mid (2,x_2) \in X^\conv\right\rbrace && \text{ by } \conv(\Stra) \subseteq X^\conv 
\end{alignat*}
which contradicts that $I^\conv$ is jointly convex w.r.t.~$X^\conv$.

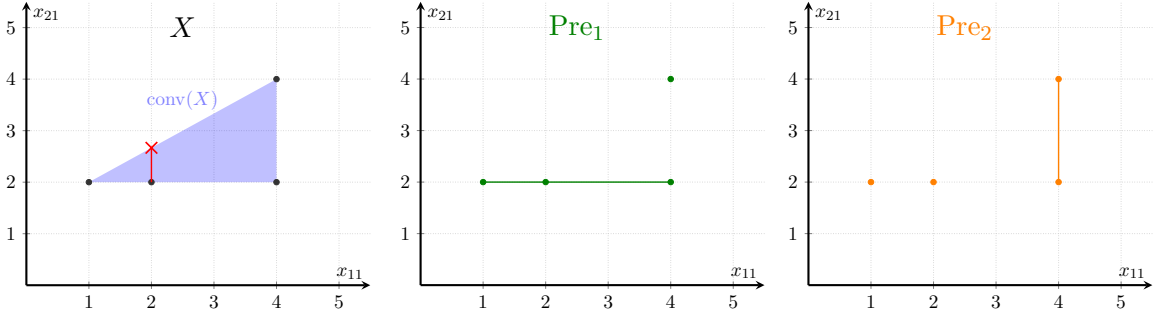
\begin{figure}[!h]
\begin{center}
\scalebox{\myscale}{
\begin{tikzpicture}
\begin{axis}[clip mode=individual, 
  axis lines=middle,
x= \mysize, y = \mysize,
  xmin=0,xmax=5.5,ymin=0,ymax=5.5,
  xtick distance=1,
  ytick distance=1,
  xlabel=$x_{1}$,
  ylabel=$x_{2}$,
  grid=major,
  grid style={thin,densely dotted,black!20}]
\node at (axis cs:2,2) [circle, scale=0.3, draw=black!80,fill=black!80] {};

\node at (axis cs:1,2) [circle, scale=0.3, draw=black!80,fill=black!80] {};

\node at (axis cs:4,2) [circle, scale=0.3, draw=black!80,fill=black!80] {};
\node at (axis cs:4,4) [circle, scale=0.3, draw=black!80,fill=black!80] {};

\addplot[name path=g, domain=1:4, samples=2,draw=none] {2*x/3 + 4/3};
\addplot[name path=h, domain=2:4, samples=2,draw=none] {2 };

    \addplot [
        thick,
        color=blue,
        fill=blue, 
        fill opacity=0.25
    ]
    fill between[
        of=g and h,
        soft clip={domain=1:4},
    ];
\draw[red,thick] (axis cs:2,2) -- (axis cs:2,8/3);
\node[red] at (axis cs:2,8/3) {\Cross};
\node[blue!50] at (axis cs:2,3.6) {\scalebox{1}{$\conv(X)$}};
\node at (axis cs:2.5,5) {\scalebox{\myoscale}{$X$}};

\end{axis}
\end{tikzpicture} }\hspace{2cm}
\scalebox{\myscale}{
\begin{tikzpicture}
\begin{axis}[clip mode=individual, 
  axis lines=middle,
x= \mysize, y = \mysize,
  xmin=0,xmax=5.5,ymin=0,ymax=5.5,
  xtick distance=1,
  ytick distance=1,
  xlabel=$x_{1}$,
  ylabel=$x_{2}$,
  grid=major,
  grid style={thin,densely dotted,black!20}]
\node (t2) at (axis cs:2,2) [circle, scale=0.3, draw=darkgreen,fill=darkgreen] {};

\node (t1) at (axis cs:1,2) [circle, scale=0.3, draw=darkgreen,fill=darkgreen] {};

\node (t3) at (axis cs:4,2) [circle, scale=0.3, draw=darkgreen,fill=darkgreen] {};
\node (t4) at (axis cs:4,4) [circle, scale=0.3, draw=darkgreen,fill=darkgreen] {};
\draw[darkgreen,thick] (axis cs:1,2) -- (axis cs:4,2);
\node[darkgreen] at (axis cs:5,1.5) {};

\node[darkgreen] at (axis cs:2.5,5) {\scalebox{\myoscale}{$\Pre{1}$}};
\end{axis}
\end{tikzpicture}}\hspace{2cm}
\scalebox{\myscale}{
\begin{tikzpicture}
\begin{axis}[clip mode=individual, 
  axis lines=middle,
x= \mysize, y = \mysize,
  xmin=0,xmax=5.5,ymin=0,ymax=5.5,
  xtick distance=1,
  ytick distance=1,
  xlabel=$x_{1}$,
  ylabel=$x_{2}$,
  grid=major,
  grid style={thin,densely dotted,black!20}]
\node (t2) at (axis cs:2,2) [circle, scale=0.3, draw=orange,fill=orange] {};

\node (t1) at (axis cs:1,2) [circle, scale=0.3, draw=orange,fill=orange] {};

\node (t3) at (axis cs:4,2) [circle, scale=0.3, draw=orange,fill=orange] {};
\node (t4) at (axis cs:4,4) [circle, scale=0.3, draw=orange,fill=orange] {};

\draw[orange,thick] (axis cs:4,2) -- (axis cs:4,4);
\node[orange] at (axis cs:4,4.5) {};
\node[orange] at (axis cs:2.5,5) {\scalebox{\myoscale}{$\Pre{2}$}};

\end{axis}
\end{tikzpicture}}
\end{center}
    \caption{Example for a 2-player $(1,1)$-restrictive-closed, jointly constrained GNEP $I$ w.r.t.~$X\subseteq \R^{(1,1)}$ represented by the four black dots in the first picture. The prescribed strategy sets $\Pre{1},\Pre{2}$ which rule out the possibility for $I^\conv$ to be jointly convex are represented in picture 2 and 3 respectively.}
    \label{fig: ExaJointConstr}
\end{figure}

In the previous section, we identified the restrictive-closedness for $I$ w.r.t.~the $k$-convex hull operator as the characterizing property for $\mathcal{I}^\conv$ to contain a jointly constrained instance. It turns out that  the restrictive-closedness of $I$ w.r.t.~the \emph{standard} convex hull operator is the characterizing property of $I$  for $\mathcal{I}^\conv$  to contain a jointly convex instance.

  \begin{definition}\label{def: projective-closed}
 We call an instance $I$ \emph{restrictive-closed} (w.r.t.~the convex hull operator),  if
 for all $i\in N$ and  $x_{-i} \in \dom X_i$ the equality
 \begin{align*}
     \conv\left(\res{\Str{i}}{x_{-i}} \right) = \res{\conv(\Stra)}{x_{-i}}
 \end{align*} 
 holds. Note that $\subseteq$ always holds, cf. Proposition~\ref{prop: incl}.
 \end{definition}
 The above concept of restrictive-closed sets requires that  for fixed $x_{-i}\in\dom X_i$, the convex hull of the restriction of $\Str{i}$ w.r.t.~$x_{-i}$ is equal to the restriction of $\conv(\Stra)$ w.r.t.~$x_{-i}$. 
 
 By observing that we have for any $i \in N$ and $x_{-i} \in \dom X_i$:
 \begin{align*}
     \conv\left(\res{\Str{i}}{x_{-i}} \right) = \convk\left(\res{\Str{i}}{x_{-i}} \right) \subseteq \res{\convk(\Stra)}{x_{-i}} \subseteq \res{\conv(\Stra)}{x_{-i}}
 \end{align*}
 due to \eqref{eq: convkeqconv}, Proposition~\ref{prop: incl} and $\convk(\Stra) \subseteq \conv(\Stra)$ respectively,  we can immediately state the following necessary conditions for restrictive-closedness of $I$.
 \begin{lemma}
 If $I$ is restrictive-closed, then 
 \begin{enumerate}
     \item $I$ is $k$-restrictive-closed.
     \item  for all $i \in N$ and $x_{-i}\in \dom X_i$ the equality 
      $\res{\convk(\Stra)}{x_{-i}} = \res{\conv(\Stra)}{x_{-i}}$ holds.
 \end{enumerate}
 \end{lemma}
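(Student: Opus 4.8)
The plan is to exploit the chain of inclusions displayed immediately before the lemma, which holds for every $i \in N$ and $x_{-i} \in \dom X_i$:
\[
\conv\left(\res{\Str{i}}{x_{-i}}\right) = \convk\left(\res{\Str{i}}{x_{-i}}\right) \subseteq \res{\convk(\Stra)}{x_{-i}} \subseteq \res{\conv(\Stra)}{x_{-i}}.
\]
The hypothesis of restrictive-closedness (Definition~\ref{def: projective-closed}) asserts precisely that the leftmost set $\conv(\res{\Str{i}}{x_{-i}})$ coincides with the rightmost set $\res{\conv(\Stra)}{x_{-i}}$. Since these two sets sandwich the entire chain, I would conclude by a simple squeeze argument that all four sets appearing in the chain are in fact equal, for each admissible pair $(i,x_{-i})$.

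For the first assertion, I would recall that $k$-restrictive-closedness (Definition~\ref{def: kresclos}) requires $\convk(\res{\Str{i}}{x_{-i}}) = \res{\convk(\Stra)}{x_{-i}}$ for all $i \in N$ and $x_{-i} \in \dom X_i$. By the collapsed chain, the second and third sets agree, i.e.~$\convk(\res{\Str{i}}{x_{-i}}) = \conv(\res{\Str{i}}{x_{-i}}) = \res{\convk(\Stra)}{x_{-i}}$, which is exactly this condition. For the second assertion, the third and fourth sets of the collapsed chain yield $\res{\convk(\Stra)}{x_{-i}} = \res{\conv(\Stra)}{x_{-i}}$, as desired.

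The argument is essentially bookkeeping, so I do not anticipate any genuine obstacle. The only point requiring minor care is that every inclusion in the displayed chain has already been justified before the lemma (via \eqref{eq: convkeqconv}, Proposition~\ref{prop: incl}, and $\convk(\Stra) \subseteq \conv(\Stra)$) and that it holds uniformly over the relevant index set $\{(i,x_{-i}) : i \in N, \, x_{-i} \in \dom X_i\}$; given this, both equalities drop out of the sandwich with no further computation.
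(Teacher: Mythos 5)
Your proof is correct and is essentially identical to the paper's own argument: the paper establishes the same chain $\conv\big(\res{\Str{i}}{x_{-i}}\big) = \convk\big(\res{\Str{i}}{x_{-i}}\big) \subseteq \res{\convk(\Stra)}{x_{-i}} \subseteq \res{\conv(\Stra)}{x_{-i}}$ via \eqref{eq: convkeqconv}, Proposition~\ref{prop: incl}, and $\convk(\Stra)\subseteq\conv(\Stra)$, and then reads off both assertions from the collapse forced by restrictive-closedness. No gaps; the squeeze argument is exactly what the paper intends.
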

 Note that the second necessary condition is not sufficient, not even for $k$-restrictive-closedness of $I$ as the example in Figure~\ref{fig:kovskoe} illustrates.

Similar to the previous section, we can give two equivalent characterizations of restrictive-closedness. 
The equivalence $\ref{enu: resclo}.\Leftrightarrow\ref{enu: exrc}.$ is again a geometric interpretation of restriction-closed GNEPs. With the help of it, one can easily verify that the example in Figure~\ref{fig: ExaJointConstr} is not restrictive-closed, as $x_{-2} = x_1: =2 \in \dom X_2$ but the restriction $\res{\conv(X)}{x_{-2}} = \{2\}\times [2,8/3]$ (marked in red in the first picture)  has the extreme point $(2,8/3)$ (marked as red cross) which is not contained in $X$.

The equivalence $\ref{enu: resclo}.\Leftrightarrow\ref{enu: rcprop}.$~will
allow us to show in the subsequent Theorem~\ref{thm: JointConst} that restrictive-closed GNEPs are exactly the GNEPs $I$ which admit a jointly convex convexification $I^\conv\in \mathcal{I}^\conv$. 
 \begin{theorem}\label{lem: geometricInt}
 Let $I$ be an instance of the GNEP. Then the following statements are equivalent: 
  \begin{enumerate}
     \item $I$ is restrictive-closed.\label{enu: resclo}
     \item  $E \left(\res{\conv(\Stra)}{x_{-i}} \right) \subseteq \Str{i}$ for all $i \in N$ and $x_{-i} \in \dom X_i$.\label{enu: exrc}
     \item For all $i \in N$ and  $x_{-i} \in \R^{k_{-i}}$ the following implication holds: \label{enu: rcprop}
     \begin{align*}
         x_{-i} \in \dom X_i \Rightarrow \conv(X_i(x_{-i})) = \{x_i \in \R^{k_i} \mid (x_i,x_{-i}) \in \conv(\Stra)\}. 
     \end{align*}
 \end{enumerate}
 \end{theorem}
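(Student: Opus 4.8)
The plan is to follow the blueprint of Theorem~\ref{thm: equis}, exploiting that on a single slice the standard convex hull behaves exactly as the $\convk$-operator does. First I would record that the inclusion $\conv\big(\res{\Str{i}}{x_{-i}}\big) \subseteq \res{\conv(\Stra)}{x_{-i}}$ always holds (as noted after Definition~\ref{def: projective-closed} and derived from Proposition~\ref{prop: incl}), so that statement \ref{enu: resclo}.~is equivalent to the \emph{reverse} inclusion $\res{\conv(\Stra)}{x_{-i}} \subseteq \conv\big(\res{\Str{i}}{x_{-i}}\big)$ for every $i \in N$ and $x_{-i} \in \dom X_i$. I would then prove the two equivalences $\ref{enu: resclo}.\Leftrightarrow\ref{enu: exrc}.$ and $\ref{enu: resclo}.\Leftrightarrow\ref{enu: rcprop}.$, the latter being essentially a change of notation.

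For $\ref{enu: resclo}.\Rightarrow\ref{enu: exrc}.$ I would fix $i$ and $x_{-i}\in\dom X_i$ and use restrictive-closedness to rewrite $\res{\conv(\Stra)}{x_{-i}} = \conv\big(\res{\Str{i}}{x_{-i}}\big)$. Since the extreme points of the convex hull of any set are contained in that set, this gives $E\big(\res{\conv(\Stra)}{x_{-i}}\big) = E\big(\conv(\res{\Str{i}}{x_{-i}})\big) \subseteq \res{\Str{i}}{x_{-i}} \subseteq \Str{i}$, which is exactly \ref{enu: exrc}.

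The substantive direction is $\ref{enu: exrc}.\Rightarrow\ref{enu: resclo}.$. Here I would first sharpen the hypothesis: any $p \in E\big(\res{\conv(\Stra)}{x_{-i}}\big)$ has the form $p=(\tilde{x}_i,x_{-i})$ with rival-part exactly $x_{-i}$, so if additionally $p\in\Str{i}$, then by definition of $\Str{i}$ this forces $\tilde{x}_i \in X_i(x_{-i})$, i.e.~$p\in\res{\Str{i}}{x_{-i}}$. Thus \ref{enu: exrc}.~upgrades to $E\big(\res{\conv(\Stra)}{x_{-i}}\big) \subseteq \res{\Str{i}}{x_{-i}}$. Taking convex hulls and invoking that a convex set equals the convex hull of its extreme points, I obtain $\res{\conv(\Stra)}{x_{-i}} = \conv\big(E(\res{\conv(\Stra)}{x_{-i}})\big) \subseteq \conv\big(\res{\Str{i}}{x_{-i}}\big)$, which is the reverse inclusion and hence \ref{enu: resclo}. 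The main obstacle is precisely this last application of Minkowski's theorem, which is valid only when $\res{\conv(\Stra)}{x_{-i}}$ is compact (or at least line-free and closed); I would therefore justify it by appealing to the standing boundedness of the relevant strategy slices, exactly as is used implicitly in the analogous step of Theorem~\ref{thm: equis}.

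Finally, $\ref{enu: resclo}.\Leftrightarrow\ref{enu: rcprop}.$ is a reformulation requiring no new ideas: for fixed $x_{-i}\in\dom X_i$ one has $\res{\Str{i}}{x_{-i}} = X_i(x_{-i})\times\{x_{-i}\}$, whence $\conv\big(\res{\Str{i}}{x_{-i}}\big) = \conv(X_i(x_{-i}))\times\{x_{-i}\}$, while $\res{\conv(\Stra)}{x_{-i}} = \{x_i \mid (x_i,x_{-i})\in\conv(\Stra)\}\times\{x_{-i}\}$. Dropping the frozen coordinate $x_{-i}$ converts the set equality of \ref{enu: resclo}.~into the identity of \ref{enu: rcprop}., so the two statements are equivalent immediately.
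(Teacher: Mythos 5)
Your proposal follows the paper's proof essentially step for step: the same reduction to the reverse inclusion via Proposition~\ref{prop: incl}, the same extreme-point argument for $\ref{enu: resclo}.\Rightarrow\ref{enu: exrc}.$, the same three-line chain (sharpen \ref{enu: exrc}.~to $E\big(\res{\conv(\Stra)}{x_{-i}}\big) \subseteq \res{\Str{i}}{x_{-i}}$, take convex hulls, then use that the slice is the convex hull of its extreme points) for $\ref{enu: exrc}.\Rightarrow\ref{enu: resclo}.$, and the same coordinate-freezing computation for $\ref{enu: resclo}.\Leftrightarrow\ref{enu: rcprop}.$; you even make explicit the small upgrade from $\Str{i}$ to $\res{\Str{i}}{x_{-i}}$ that the paper leaves tacit.

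The one divergence concerns the step you flag. You are right that $\res{\conv(\Stra)}{x_{-i}} \subseteq \conv\big(E\big(\res{\conv(\Stra)}{x_{-i}}\big)\big)$ is Minkowski's theorem and requires compactness; the paper merely asserts it follows ``by the convexity'' of the slice, which is no justification (an affine line is convex and has no extreme points). But the repair you offer, a ``standing boundedness of the relevant strategy slices,'' does not exist: the paper never assumes the sets $X_i(x_{-i})$, hence the slices of $\conv(\Stra)$, are bounded or closed --- its only blanket assumption is that each player's minimization problem attains its minimum. Without some such hypothesis the implication $\ref{enu: exrc}.\Rightarrow\ref{enu: resclo}.$ is in fact false: take $N=[2]$, $k=(1,1)$, $X_1(0)=\{0\}$, $X_1(1)=\Z$, $X_1(x_2)=\emptyset$ otherwise, $X_2(x_1)=\{0,1\}$ for all $x_1$, and zero costs. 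Then $\Stra=\Z\times\{0,1\}$ and $\conv(\Stra)=\R\times[0,1]$; the slices at $x_2\in\dom X_1=\{0,1\}$ are full horizontal lines with no extreme points, and the slices at $x_1\in\dom X_2=\Z$ are segments whose endpoints lie in $\Str{2}$, so statement \ref{enu: exrc}.~holds, yet $\conv(X_1(0))=\{0\}\neq\R=\{x_1\mid (x_1,0)\in\conv(\Stra)\}$, so statements \ref{enu: resclo}.~and \ref{enu: rcprop}.~fail. So there is a genuine gap at this step --- but it is the paper's gap exactly as much as yours; apart from attributing the missing compactness to a standing assumption that is not actually in the paper, your argument is identical to the published one.
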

 \begin{proof}
  $\ref{enu: resclo}\Leftrightarrow\ref{enu: exrc}$: Let $i \in N$ and $x_{i} \in \dom X_i$ be arbitrary.
  
 $\ref{enu: resclo}\Rightarrow\ref{enu: exrc}$: We get by the restrictive-closedness of $I$:
 \begin{align*}
     E \left(\res{\conv(\Stra)}{x_{-i}} \right) = E\big(\conv\big(\res{\Str{i}}{x_{-i}}\big)\big) \subseteq \res{\Str{i}}{x_{-i}} \subseteq \Str{i}
 \end{align*}
 
$\ref{enu: resclo}\Leftarrow\ref{enu: exrc}$: 
The following implications hold:
 \begin{alignat*}{2}
   \ref{enu: exrc}. \quad \Rightarrow \quad &E\left( \res{\conv(\Stra)}{x_{-i}} \right) &&\subseteq \quad\res{\Str{i}}{x_{-i}} \\ 
   \Rightarrow\quad &\conv\Big(E\big( \res{\conv(\Stra)}{x_{-i}} \big)\Big) \; &&\subseteq \quad\conv\big(\res{\Str{i}}{x_{-i}}\big) \\ 
   \Rightarrow \quad&\res{\conv(\Stra)}{x_{-i}} &&\subseteq\quad \conv\big(\res{\Str{i}}{x_{-i}}\big)
 \end{alignat*}
 where the last inclusion follows by the convexity of $\res{\conv(\Stra)}{x_{-i}}$.
Since the inclusion $\supseteq$ in the last line always holds, the claim follows.

$\ref{enu: resclo}.\Leftrightarrow\ref{enu: rcprop}.$: Let $i \in N$ and $x_{-i}\in \dom X_i$ be arbitrary.

$\ref{enu: resclo}.\Rightarrow\ref{enu: rcprop}.$: 
This follows immediately by the definition of restrictive-closedness:
\begin{align*}
     X_i^\conv(x_{-i}) \times x_{-i} &= \conv\big(X_i(x_{-i})\big) \times x_{-i} = \conv\big(\res{\Str{i}}{x_{-i}}\big)\\ &\overset{\ref{enu: resclo}.}{=} \res{\conv(\Stra)}{x_{-i}} = \{x_i \in \R^{k_i} \mid (x_i,x_{-i}) \in \conv(\Stra)\} \times x_{-i}.
\end{align*}

$\ref{enu: resclo}.\Leftarrow\ref{enu: rcprop}.$: We calculate:
\begin{align*}
    \conv\big(\res{\Str{i}}{x_{-i}}\big) &= \conv\big(X_i(x_{-i})\big) \times x_{-i}\\ &\overset{\ref{enu: rcprop}.}{=} \{x_i \in \R^{k_i} \mid (x_i,x_{-i}) \in \conv(\Stra)\} \times x_{-i} = \res{\conv(\Stra)}{x_{-i}}.
\end{align*}
 \end{proof}
 
\begin{theorem}\label{thm: JointConst}
Let $I$ be an instance of the GNEP with $\mathcal{I}^\conv(I) \neq \emptyset$. Then $I$ is restrictive-closed if and only if the convexification $\mathcal{I}^\conv$ contains a jointly convex instance. In this case, $\mathcal{I}^\conv$ contains for any convex set $X^\conv$ that satisfies
\begin{align}\label{eq: PCthm}
\res{X^\conv}{x_{-i}} = \res{\conv(\Stra)}{x_{-i}} && \text{ for all } i \in N \text{ and } x_{-i}\in \dom X_i,
\end{align}
 a jointly convex instance with $X^\conv$ as its restriction set.
\end{theorem}
\begin{proof}
We start with the only if direction.
Let $I^\conv$ be a jointly convex instance w.r.t.~any restriction set $X^\conv$ as described above. Furthermore, let the cost functions of $I^\conv$ fulfill the requirements for $I^\conv$ to belong to $\mathcal{I}^\conv$ {which exist due to $\mathcal{I}^\conv(I) \neq \emptyset$}. 
We want to prove that $I^\conv \in \mathcal{I}^\conv$.
Thus, we have to show that for arbitrary $i \in N$ and $ x_{-i} \in \dom X_i$, the strategy set $X_i^\conv(x_{-i})$ of  $I^\conv$ fulfills:
\begin{align*}
  X_i^\conv(x_{-i}):=\left\lbrace x_i \in \R^{k_i} \mid (x_i,x_{-i}) \in X^\conv \right \rbrace \overset{!}{=} \conv(X_i(x_{-i})) 
\end{align*}
which is equivalent to 
$\res{X^\conv}{x_{-i}} = \conv\big(\res{\Str{i}}{x_{i}}\big).$
The latter equality is valid due to $I$ being restrictive-closed and $X^\conv$ fulfilling~\eqref{eq: PCthm}. 

For the if direction, let $I^\conv \in \mathcal{I}^\conv$ be jointly convex w.r.t.~$X^\conv$. Then for any $i \in N$ and $x_{-i}\in \dom X_i$, we have 
\begin{align}\label{eq: thmjcv}
    \conv(X_i(x_{-i})) =: X_i^\conv(x_{-i}) =  \{x_i \in \R^{k_i} \mid (x_i,x_{-i}) \in X^\conv\}
\end{align}
by $I^\conv \in \mathcal{I}^\conv$ and $I^\conv$ being jointly convex w.r.t.~$X^\conv$. 
Furthermore the jointly constrainedness implies that $\Stra^\conv = X^\conv$. As $\Stra \subseteq \Stra^\conv$ we get $\conv(\Stra) \subseteq X^\conv$ due to $X^\conv$ being convex. Thus the equality in~\eqref{eq: thmjcv} implies
\[ \conv(X_i(x_{-i})) =  \{x_i \in \R^{k_i} \mid (x_i,x_{-i}) \in X^\conv\}\supseteq  \{x_i \in \R^{k_i} \mid (x_i,x_{-i}) \in \conv(\Stra)\}. \]
The above inclusion is in fact an equality as 
the proof of Theorem~\ref{lem: geometricInt} ($\ref{enu: resclo}.\Rightarrow\ref{enu: rcprop}.$) together with the fact that  $\conv\big(\res{\Str{i}}{x_{-i}}\big) \subseteq \res{\conv(\Stra)}{x_{-i}}$ always holds shows that the inclusion $\conv(X_i(x_{-i}))\subseteq \{x_i \in \R^{k_i} \mid (x_i,x_{-i}) \in \conv(\Stra)\}$ is also always fulfilled. Thus, Theorem~\ref{lem: geometricInt} ($\ref{enu: resclo}.\Leftarrow\ref{enu: rcprop}.$)  shows that $I$ is restrictive-closed. 
\end{proof}

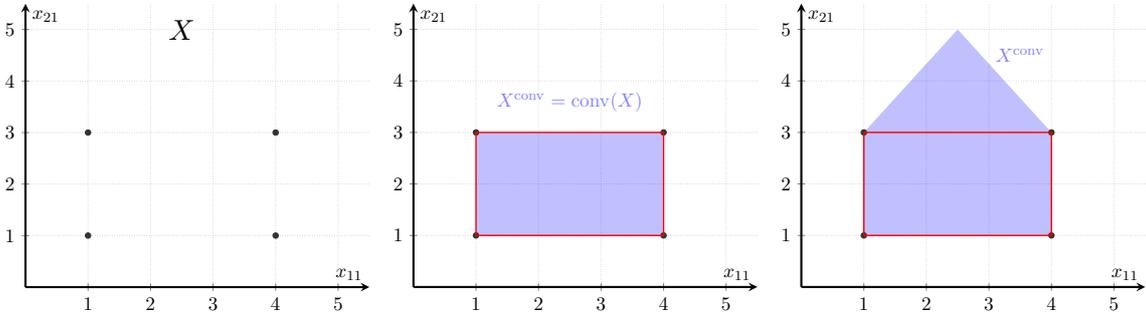
\begin{figure}[h!]
\begin{center}
\scalebox{\myscale}{
\begin{tikzpicture}
\begin{axis}[clip mode=individual, 
  axis lines=middle,
  x= \mysize, y = \mysize,
  xmin=0,xmax=5.5,ymin=0,ymax=5.5,
  xtick distance=1,
  ytick distance=1,
  xlabel=$x_{1}$,
  ylabel=$x_{2}$,
  grid=major,
  grid style={thin,densely dotted,black!20}]
\node at (axis cs:1,3) [circle, scale=0.3, draw=black!80,fill=black!80] {};

\node at (axis cs:1,1) [circle, scale=0.3, draw=black!80,fill=black!80] {};

\node at (axis cs:4,3) [circle, scale=0.3, draw=black!80,fill=black!80] {};
\node at (axis cs:4,1) [circle, scale=0.3, draw=black!80,fill=black!80] {};

\node at (axis cs:2.5,5) {\scalebox{\myoscale}{$X$}};
\end{axis}
\end{tikzpicture}}\hspace{2cm}
\scalebox{\myscale}{
\begin{tikzpicture}
\begin{axis}[clip mode=individual, 
  axis lines=middle,
x= \mysize, y = \mysize,
  xmin=0,xmax=5.5,ymin=0,ymax=5.5,
  xtick distance=1,
  ytick distance=1,
  xlabel=$x_{1}$,
  ylabel=$x_{2}$,
  grid=major,
  grid style={thin,densely dotted,black!20}]
\node at (axis cs:1,3) [circle, scale=0.3, draw=black!80,fill=black!80] {};

\node at (axis cs:1,1) [circle, scale=0.3, draw=black!80,fill=black!80] {};

\node at (axis cs:4,3) [circle, scale=0.3, draw=black!80,fill=black!80] {};
\node at (axis cs:4,1) [circle, scale=0.3, draw=black!80,fill=black!80] {};

\addplot[name path=g, domain=1:4, samples=2,red,thick] {3};
\addplot[name path=h, domain=1:4, samples=2,red,thick] {1};
\addplot [mark = none,thick,red] coordinates {(4, 1) (4, 3)};
\addplot [mark = none,thick,red] coordinates {(1, 1) (1, 3)};

    \addplot [
        thick,
        color=blue,
        fill=blue, 
        fill opacity=0.25
    ]
    fill between[
        of=g and h,
        soft clip={domain=1:4},
    ];

\node[blue!50] at (axis cs:2.5,3.6) {\scalebox{1}{$X^\conv = \conv(X)$}};
\end{axis}
\end{tikzpicture}}\hspace{2cm}
\scalebox{\myscale}{
\begin{tikzpicture}
\begin{axis}[clip mode=individual, 
  axis lines=middle,
  x= \mysize, y = \mysize,
  xmin=0,xmax=5.5,ymin=0,ymax=5.5,
  xtick distance=1,
  ytick distance=1,
  xlabel=$x_{1}$,
  ylabel=$x_{2}$,
  grid=major,
  grid style={thin,densely dotted,black!20}]
\node at (axis cs:1,3) [circle, scale=0.3, draw=black!80,fill=black!80] {};

\node at (axis cs:1,1) [circle, scale=0.3, draw=black!80,fill=black!80] {};

\node at (axis cs:4,3) [circle, scale=0.3, draw=black!80,fill=black!80] {};
\node at (axis cs:4,1) [circle, scale=0.3, draw=black!80,fill=black!80] {};

\addplot[name path=g, domain=0:2.5, samples=2,draw=none] {4/3 * x + 5/3};
\addplot[name path=f, domain=2.5:4, samples=2,draw = none] {-4/3 * x + 25/3};
\addplot[name path=h, domain=0:4, samples=2,draw=none] {1};
    \addplot [
        thick,
        color=blue,
        fill=blue, 
        fill opacity=0.25
    ]
    fill between[
        of=g and h,
        soft clip={domain=1:2.5},
    ];
    \addplot [
        thick,
        color=blue,
        fill=blue, 
        fill opacity=0.25
    ]
    fill between[
        of=f and h,
        soft clip={domain=2.5:4},
    ];
\node[blue!50] at (axis cs:4,4.5) {\scalebox{\myoscale}{$X^\conv$}};
\addplot[domain=1:4, samples=2,red,thick] {3};
\addplot[domain=1:4, samples=2,red,thick] {1};
\addplot [mark = none,thick,red] coordinates {(4, 1) (4, 3)};
\addplot [mark = none,thick,red] coordinates {(1, 1) (1, 3)};

\end{axis}
\end{tikzpicture}}
\end{center}
\caption{Example for a 2-player restrictive-closed, jointly constrained GNEP $I$ w.r.t.~$X\subseteq \R^{(1,1)}$ represented by the four black dots in the first picture. In picture 2 and 3 are two suitable choices of $X^\conv$. The sets that have to coincide after~\eqref{eq: PCthm} are represented by the 4 red lines.}
 \label{fig: Remark}
\end{figure}

\subsection{Applications}
In this subsection we show how several interesting game classes belong to the restrictive-closed GNEPs. In order to do so, we present in the following a sufficient condition for restrictive-closedness
which requires the definition of \emph{pseudo jointly constrained} GNEPs. 

\begin{definition}
 We call an instance $I$ \emph{pseudo jointly constrained}, if for all $i \in N$ and $x_{-i} \in \dom X_i$ the strategy set $X_i(x_{-i})$ can be described as:
 \begin{align*}
     X_i(x_{-i}) = \{x_i \in \R^{k_i} \mid (x_i,x_{-i}) \in \Stra\}.
 \end{align*}
\end{definition}
Remark that any jointly constrained instance w.r.t.~a restriction set $X$ is obviously pseudo jointly constrained as $\Str{i}= X = \Stra$ holds for all $i \in N$.
 Subsequently, Figure~\ref{fig: ExaSJC} shows that not every pseudo jointly constrained GNEP is $k$-restrictive-closed. Similarly, not every $k$-restrictive-closed GNEP is pseudo jointly constrained as the example in Figure~\ref{fig: ExaNotJC} illustrates. 
\begin{lemma}\label{lem: nobadsets}
Let $I$ be a  pseudo jointly constrained GNEP. Then $I$ is restrictive-closed, if
the projection $P_i(\Stra) := \left\lbrace x_i \in \R^{k_i}\mid \exists\; x_{-i}: (x_i,x_{-i}) \in \Stra \right \rbrace$ of $\Stra$  to the strategy space $\R^{k_i}$ of player $i \in N$
only consist of extreme points, i.e.  $E(P_i(\Stra)) = P_i(\Stra)$.
\end{lemma}
\begin{proof}
Let $i\in N$ and $x_{-i} \in \dom X_i$ be arbitrary. We have to show that the equality in Definition~\ref{def: projective-closed} holds. As mentioned before, $\subseteq$  always holds. 
For the other inclusion $\supseteq$ we argue that the following steps are valid:
$\res{\conv(\Stra)}{x_{-i}} \subseteq \conv\big(\res{\Stra}{x_{-i}}\big) \subseteq \conv\big(\res{\Str{i}}{x_{-i}}\big).$

For the first inclusion, let $(x_i,x_{-i}) \in  \res{\conv(\Stra)}{x_{-i}}$. Then there exists 
a convex combination $(x_i,x_{-i})=\sum_{s=1}^L\lambda_s x^s$ with $x^s \in \Stra, s\in [L]$ for some $L \in \N$ and $\lambda \in \Lambda_{L}$. Since $x_{-i} \in \dom X_i$ there exists a $x_i^* \in \R^{k_i}$ with $(x_i^*,x_{-i}) \in X\big((x_i^*,x_{-i})\big)$. Subsequently $(x_i^*,x_{-i}) \in \Stra$ and $x_j \in P_j(\Stra) = E(P_j(\Stra))$ for all $j \neq i$. Similarly, as $x^s \in \Stra$ we have $x_j^s \in P_j(\Stra)$ for all $j \in N, s\in[L]$. Therefore  $x_j = \sum_{s=1}^L\lambda_s x^s_j$ for $j \in N$ implies  $x_j^s = x_j$ for all $s\in [L], j\neq i$. Therefore $x^s=(x_i^s,x_{-i}) \in \res{\Stra}{x_{-i}}, s\in [L]$ which shows that $x \in \conv\big(\res{\Stra}{x_{-i}}\big)$.

The second equality is a direct consequence of $I$ being pseudo jointly constrained as pseudo jointly constrainedness is equivalent to  $
    \res{\Stra}{x_{-i}} = \res{\Str{i}}{x_{-i}} \text{ for all } i \in N, x_{-i} \in \dom X_i.$
\end{proof}
We get as a direct consequence of the above lemma that all $0,1$ pseudo jointly constrained games are restrictive-closed.
\begin{corollary}\label{cor: nobadsets}
If $I$ is a pseudo jointly constrained GNEP with $\Stra \subseteq \{0,1\}^k$, then $I$ is restrictive-closed.
\end{corollary}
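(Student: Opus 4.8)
The plan is to deduce the statement directly from Lemma~\ref{lem: nobadsets}. That lemma already supplies the implication ``pseudo jointly constrained $+$ every projection $P_i(\Stra)$ consists only of extreme points $\Rightarrow$ restrictive-closed''. Since $I$ is assumed pseudo jointly constrained by hypothesis, the only thing left to check is the extreme-point condition $E(P_i(\Stra)) = P_i(\Stra)$ for every player $i \in N$. So I would reduce the corollary entirely to this one geometric fact about $0$-$1$ sets.

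First I would observe that $\Stra \subseteq \{0,1\}^k$ forces $P_i(\Stra) \subseteq \{0,1\}^{k_i}$ for each $i \in N$, because projecting a $0$-$1$ vector onto player $i$'s coordinate block again yields a $0$-$1$ vector. It therefore suffices to prove the elementary claim that any subset $M \subseteq \{0,1\}^{k_i}$ satisfies $E(M) = M$, i.e.~no point of $M$ lies in the convex hull of the others. To verify this, I would fix $x \in M$ and suppose for contradiction that $x = \sum_{s=1}^{L} \lambda_s y^s$ with $y^s \in M \setminus \{x\}$, $\lambda \in \Lambda_L$ and $\lambda_s > 0$. Inspecting coordinate $j$: if $x_j = 0$ then $\sum_s \lambda_s y^s_j = 0$ with all $y^s_j \geq 0$ forces $y^s_j = 0$, whereas if $x_j = 1$ then $\sum_s \lambda_s y^s_j = 1$ with all $y^s_j \leq 1$ forces $y^s_j = 1$. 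Either way $y^s_j = x_j$ for all $s$ and all $j$, so $y^s = x$, contradicting $y^s \neq x$. Hence $x \notin \conv(M \setminus \{x\})$ and $E(M) = M$.

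There is essentially no obstacle here; the corollary is a routine specialization of Lemma~\ref{lem: nobadsets}, and the only substantive point is the standard observation that $0$-$1$ vectors are vertices of the unit cube and thus extreme in any set containing them. Having established $E(P_i(\Stra)) = P_i(\Stra)$ for all $i \in N$, I would simply invoke Lemma~\ref{lem: nobadsets} to conclude that $I$ is restrictive-closed.
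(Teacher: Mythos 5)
Your proof is correct and takes the same route as the paper: both reduce the corollary to Lemma~\ref{lem: nobadsets} by verifying $E(P_i(\Stra)) = P_i(\Stra)$ via the observation that $P_i(\Stra) \subseteq \{0,1\}^{k_i}$. The only difference is that you spell out the elementary fact that every point of a subset of the hypercube is extreme, which the paper simply asserts.
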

\begin{proof}
The projection of $\Stra \subseteq \{0,1\}^k$ to the strategy space of any player $i \in N$ is a subset of the hypercube $\{0,1\}^{k_i}$ which only consists of extreme points. Thus $E(P_i(\Stra)) = P_i(\Stra)$ holds.
\end{proof}

Another interesting class of GNEPs which belongs to the restrictive-closed GNEPs are the jointly constrained discrete flow games described in the following lemma.   

\begin{lemma}\label{lem: JCDFG}
Let $I$ be an instance of the GNEP as described in Example~\ref{exa: CDFG} but instead of linear \emph{individual} capacity constraints, we consider a joint restriction imposed by a convex function $g:\R^k \to \R^s$ for some $s \in \N$. More precisely,  the strategy set of player $i \in N$ is 
described by 
\begin{align*}
X_i(x_{-i})=\Bigl\lbrace x_i\in \Z_{\geq 0}^m \mid Ax_i = b_i,\; g(x_i,x_{-i})\leq 0 \Bigr \rbrace && \text{ for all } x_{-i} \in \R^{k_{-i}}.
\end{align*}
Furthermore let
\begin{align}\label{eq: setX} 
    X:= \prod_{i\in N} \Bigl\lbrace x_i\in \Z_{\geq 0}^m \mid Ax_i = b_i\Bigr \rbrace\; \cap \; \Bigl\lbrace x\in \R_{\geq 0}^k \mid g(x) \leq 0 \Bigr \rbrace.
\end{align} 
If for all $i\in N$ and 
\begin{align}\label{eq: dom}
    x_{-i}\in  \{\tilde{x}_{-i} \in \R^{k_{-i}} \mid \exists\,\tilde{x}_i \in \R^{k_i}: (\tilde{x}_i,\tilde{x}_{-i}) \in X \},
\end{align}
 the restriction $g(x_i,x_{-i}) \leq 0$ for $x_i$ is equivalent to an integral box-constraint $a_i^{x_{-i}} \leq x_i \leq b_i^{x_{-i}}$ with $a_i^{x_{-i}},b_i^{x_{-i}} \in \Z^{k_{i}}$, then $I$ is restrictive-closed.  
\end{lemma}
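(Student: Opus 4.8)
The plan is to verify the defining equality of restrictive-closedness (Definition~\ref{def: projective-closed}) directly. Since the inclusion $\conv(\res{\Str{i}}{x_{-i}}) \subseteq \res{\conv(\Stra)}{x_{-i}}$ always holds (Proposition~\ref{prop: incl}), I only need the reverse inclusion $\res{\conv(\Stra)}{x_{-i}} \subseteq \conv(\res{\Str{i}}{x_{-i}})$ for every $i \in N$ and $x_{-i} \in \dom X_i$. I would first simplify both sides. Fixing such $i$ and $x_{-i}$, only the $x_{-i}$-term of the union defining $\Str{i}$ contributes to the slice, so $\res{\Str{i}}{x_{-i}} = X_i(x_{-i}) \times x_{-i}$ and the target right-hand side is $\conv(X_i(x_{-i})) \times x_{-i}$. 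For the left-hand side I would show $\Stra \subseteq X$, with $X$ as in~\eqref{eq: setX}: any $(x_i,x_{-i}) \in \Str{i}$ has $x_i$ an integral flow with $g(x_i,x_{-i}) \le 0$, while $x_{-i} \in \dom X_i$ forces every rival component to be an integral feasible flow, so the whole profile lies in $X$. Hence $\conv(\Stra) \subseteq \conv(X)$, and it suffices to prove
\[
\res{\conv(X)}{x_{-i}} \subseteq \conv(X_i(x_{-i})) \times x_{-i} \quad\text{for all } i \in N,\ x_{-i} \in \dom X_i.
\]
I would also record that $\dom X_i$ is contained in the projection set of~\eqref{eq: dom}, so that the integral box hypothesis is available for every $x_{-i} \in \dom X_i$.

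The key step exploits convexity of $g$. Taking $(x_i,x_{-i}) \in \conv(X)$, I would write it as a convex combination $\sum_s \lambda_s y^s$ of points $y^s \in X$. Since each $y^s_i$ is a nonnegative integral flow, the linear relations $A x_i = b_i$ and $x_i \ge 0$ pass to the combination. Crucially, each $y^s$ satisfies $g(y^s) \le 0$, so convexity gives $g(x_i,x_{-i}) = g\big(\sum_s \lambda_s y^s\big) \le \sum_s \lambda_s g(y^s) \le 0$. I do not need the individual blocks $y^s_{-i}$ to equal $x_{-i}$: the coupled constraint survives the averaging precisely because $g$ is convex and is evaluated at the averaged profile, whose $-i$-block is exactly $x_{-i}$. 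Invoking the hypothesis for the fixed $x_{-i} \in \dom X_i$, the inequality $g(x_i,x_{-i}) \le 0$ is equivalent to the integral box constraint $a_i^{x_{-i}} \le x_i \le b_i^{x_{-i}}$, which $x_i$ therefore satisfies.

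It remains to identify $\conv(X_i(x_{-i}))$ with the corresponding relaxation. For fixed $x_{-i} \in \dom X_i$ the hypothesis turns $X_i(x_{-i})$ into the set of integral flows $x_i \in \Z_{\ge 0}^m$ with $A x_i = b_i$ lying in an integral box. Exactly as in Example~\ref{exa: CDFG}, the box-TDI property of the flow polyhedron (see~\cite{Edmonds1977,Schrijver03}) guarantees that $\{x_i \in \R_{\ge 0}^m \mid A x_i = b_i,\ a_i^{x_{-i}} \le x_i \le b_i^{x_{-i}}\}$ has integral vertices, hence coincides with $\conv(X_i(x_{-i}))$. Combining this with the previous paragraph yields $x_i \in \conv(X_i(x_{-i}))$, which establishes the remaining inclusion and thus restrictive-closedness.

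The crux of the argument — and the step I expect to require the most care — is the second paragraph. In the analogous Lemma~\ref{lem: nobadsets} one forced $y^s_{-i} = x_{-i}$ for every $s$ by requiring the rival projections to consist only of extreme points, but integral flows are generally not extreme. The device replacing that hypothesis is the convexity of $g$, which lets the shared constraint persist under convex combinations even when the blocks $y^s_{-i}$ genuinely differ; the box-equivalence hypothesis then decouples this surviving constraint into a constraint on $x_i$ alone, and box-TDI supplies the integrality of the resulting polytope's vertices. I would double-check only that the refined domain $\dom X_i$ indeed sits inside the set~\eqref{eq: dom} and that the box equivalence is read as an equality of $x_i$-sets, so that $g(x_i,x_{-i})\le 0$ directly delivers the box bounds.
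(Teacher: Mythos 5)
Your proof is correct and follows essentially the same route as the paper's: both reduce the claim to showing $\res{\conv(X)}{x_{-i}} \subseteq \conv(X_i(x_{-i}))\times x_{-i}$, pass through the relaxed set $\hat{X}$ (you do this implicitly, via convexity of $g$ and the relaxed linear flow constraints), and then combine the integral-box hypothesis with box-TDI of the flow polyhedron to identify the relaxed slice with $\conv(X_i(x_{-i}))$. The only cosmetic difference is that the paper gets $\Stra = X$ and $\dom X_i$ equal to the set in~\eqref{eq: dom} via quasi-isomorphism to the jointly constrained instance $I'$, whereas you verify the one-sided inclusions $\Stra \subseteq X$ and $\dom X_i \subseteq$~\eqref{eq: dom} directly, which indeed suffice.
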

\begin{proof}
It is not hard to see that $I$ is quasi-isomorphic to a  GNEP $I'=(N,(X_i'(\cdot))_{i \in N},(\pi_i)_{i \in N})$ which has the same cost functions as $I$ and is jointly constrained w.r.t.~$X$. As for two quasi-isomorphic instances the respective complete (relevant) strategy sets for a player $i \in N$ coincide $\Str{i} = \mathcal{S}_i'$, we get $\Stra = X$. Furthermore we have that  $\dom X_i = \dom X_i'$ where the latter is given by the set in~\eqref{eq: dom}.
To verify the restrictive-closedness, we show that $\Stra =\Stra'= X$ fulfills the condition stated in Definition~\ref{def: projective-closed}. Let $i \in N$ and $x_{-i} \in \dom X_i$. Since $\subseteq$ always holds we just have to show the inclusion $\supseteq$. To prove this, define the relaxation of $X$ by
\begin{align}\label{eq: defRelX}
    \hat{X}:= \prod_{i\in N} \Bigl\lbrace x_i\in \R_{\geq 0}^m \mid Ax_i = b_i\Bigr \rbrace\; \cap \; \Bigl\lbrace x\in \R_{\geq 0}^k \mid g(x) \leq 0 \Bigr \rbrace.
\end{align} 
We argue that the following two inclusions are valid. Clearly, they imply that $\supseteq$ in Definition~\ref{def: projective-closed} holds.
\begin{align}
          \left \lbrace x_i \in \R^{k_i} \mid (x_i,x_{-i}) \in \conv(X) \right \rbrace &\subseteq \left \lbrace x_i \in \R^{k_i} \mid (x_i,x_{-i}) \in \hat{X} \right \rbrace \label{eq: ExampleCDFG1}\\
          &\subseteq\conv\left(\left \lbrace x_i \in \R^{k_i} \mid (x_i,x_{-i}) \in X \right \rbrace \right). \label{eq: ExampleCDFG2}
\end{align}
By the definition of $X$ it follows immediately that $X \subseteq \hat{X}$. Since $\hat{X}$ is convex, the inclusion $\conv(X) \subseteq \hat{X}$ and thus also the inclusion \eqref{eq: ExampleCDFG1} holds. By rewriting the sets for the inclusion~\eqref{eq: ExampleCDFG2} via the definition of $X$ and $\hat{X}$, we get equivalently:
\begin{align*}
    \Bigl\lbrace x_i\in \R_{\geq 0}^m \mid Ax_i = b_i,\; g(x_i,x_{-i}) \leq 0 \Bigr \rbrace  \subseteq \conv\Bigl( \Bigl\lbrace x_i\in \Z_{\geq 0}^m \mid Ax_i = b_i,\; g(x_i,x_{-i}) \leq 0 \Bigr \rbrace \Bigr). 
\end{align*}
As $x_{-i} \in \dom X_i= \dom X_i'$ and thus fulfills~\eqref{eq: dom}, the restriction $g(x_i,x_{-i}) \leq 0$ is an integral box-constraint. Thus the polytope on the left has integral vertices since the flow polyhedron is box-tdi, cf.~Example~\ref{exa: CDFG}. These integral vertices are clearly contained in the right set and therefore the inclusion follows. Hence $I$ is restrictive-closed. 
\end{proof}

The above proof shows that the relaxed set $\hat{X}$ fulfills the equality stated in~\eqref{eq: PCthm} and thus a convexified instance from $\mathcal{I}^\conv$ which is jointly convex w.r.t.~$\hat{X}$ can be solved in order to derive insights into the original instance $I$ via our main Theorem~\ref{thm:main}. 
This is extremely convenient in a computational regard as $\conv(X) \neq \hat{X}$ in general as the following instance of the CDFG shows. Note that  the CDFG described in Example~\ref{exa: CDFG} belongs to the flow games introduced in Lemma~\ref{lem: JCDFG}.

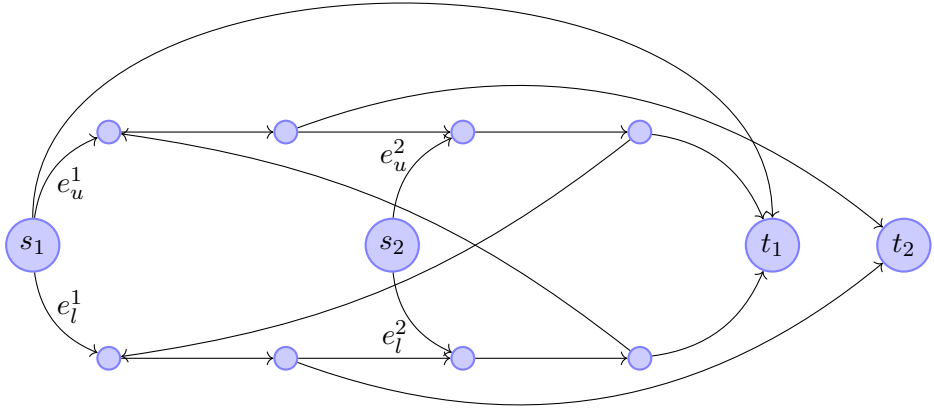
\begin{figure}[h!]
\begin{center}
\begin{tikzpicture} [ node distance = 2cm,
Small/.style={circle,draw=blue!50,fill=blue!20,thick, inner sep=0mm, minimum size=3mm},
Knoten/.style={circle,draw=blue!50,fill=blue!20,thick, inner sep=0mm, minimum size=5mm}]
\node[Knoten] (s1) at (0,2) {$s_1$};
\node[Knoten] (s2) [right = 2cm of s1] {$s_2$};
\node[Small] (1) at (1,1.25) {};
\node[Small] (2) [right = of 1] {};
\node[Small] (3) [right = of 2] {};
\node[Small] (4) [right = of 3] {};
\node[Small] (5) at (1,2.75) {};
\node[Small] (6) [right = of 5] {};
\node[Small] (7) [right = of 6] {};
\node[Small] (8) [right = of 7] {};
\node[Knoten] (t1) [ right = 9cm of s1] {$t_1$};
\node[Knoten] (t2) [right = 1cm of t1] {$t_2$};

\draw[->, bend right] (s1) to node[above right, xshift = -0.1cm, yshift = -0.15cm] {$e^1_l$} (1);
\draw[->, bend left ] (s1) to node[below right, xshift = -0.1cm, yshift = 0.15cm] {$e^1_u$} (5);
\draw[->] (5) to (6);
\draw[->] (6) to (7);
\draw[->] (7) to (8);
\draw[->] (1) to (2);
\draw[->] (2) to (3);
\draw[->] (3) to (4);
\draw[->, bend right = 35] (4) to (5);
\draw[->, bend left =35] (8) to (1);
\draw[->, out = 20, in = 220] (s2) to node[pos = 0.15, above left,yshift = -0.1cm] {$e^2_u$}  (7);
\draw[->,  out = 340, in = 140] (s2) to node[pos= 0.15,below left, yshift = 0.1cm] {$e^2_l$}  (3);
\draw[->, bend left=25] (6) to (t2);
\draw[->, bend right=25] (2) to (t2);
\draw[->,out = 0, in = 220] (4) to (t1);
\draw[->, out = 0, in = 140] (8) to (t1);
\draw[->] (s1) .. controls (0.5,4) and (8.5,4)  .. (t1);
\end{tikzpicture}
\caption{Example for a capacitated discrete flow game where $\conv(X) \subsetneq \hat{X}$ }
\label{fig: CDFG}
\end{center}
\end{figure} 
\begin{example}[$\conv(X) \neq \hat{X}$ in general] 
Let $I$ be an instance of the CDFG where $N=\{1,2\}$ and $G$ is given by the graph displayed in Figure~\ref{fig: CDFG}. {Both players have the same capacity $c_1 = c_2 = \mathbf{1}\in \R^E$ and want to send one unit of flow.} 
Then $X$ consists of only two elements, namely $X = \{(x_1^*,x_2^u),(x_1^*,x_2^l)\}$ where we denote by $x_1^*$ the flow sending one flow unit over the edge $(s_1,t_1)$ and by $x_2^u$ resp.~$x_2^l$ the unique path from $s_2$ to $t_2$ starting with the upper edge $e^2_u$ resp.~lower edge $e^2_l$.
The set $\hat{X}$ contains for example the point $\frac{1}{2}\cdot(x_1^u + x_1^l, x_2^u + x^l_2) \notin\conv(X)$ where we define $x_1^u$ and $x_1^l$ analogously to $x_2^u$ and $x_2^l$, thus showing that $\conv(X) \subsetneq \hat{X}$. 
\end{example}

As another example for restrictive-closed GNEPs we revisit Example~\ref{exa: Atom}.
\addtocounter{theorem}{-18}
\begin{example}[continued]
Assume that the weights $d_{ij} = 1$ are equal to one for all $i\in N$, $j \in E$. Then $X\subseteq \{0,1\}^{n\cdot m}$ and thus by Corollary~\ref{cor: nobadsets}, we're dealing with a restrictive-closed GNEP. Subsequently Theorem~\ref{thm: JointConst} shows that we can define $I^\conv \in \mathcal{I}^\conv$ as a jointly convex GNEP w.r.t.~any set $X^\conv$ fulfilling~\eqref{eq: PCthm}. Concerning the cost functions of $I^\conv$, we observe that
 the cost functions $\pi_i, i\in N$ are quasi-linear, i.e.~they allow for convexified cost functions as in Definition~\ref{GenSet}.\ref{enum:QL2}. This is due to the fact that $X_i(x_{-i}) \to \R, x_i \mapsto \pi_i(x_i,x_{-i})$ is linear for all $x_{-i} \in\dom X_i$ and $i \in N$ which can be verified by the following description: 
\begin{align*}\textstyle
 \pi_i(x_i,x_{-i}):= \sum_{j\in E} c_{ij}(\ell_j( x))x_{ij} = c_i(\mathbf{1} + \sum_{j\neq i}x_j  )^\top x_i \text{ for all } x_i \in X_i(x_{-i}) \subseteq \{0,1\}^m
\end{align*}  
Thus defining $\phi_i(x) := c_i(\mathbf{1} + \sum_{j\neq i}x_j)^\top x_i$ for all $x \in \R^k$ fulfills the restrictions for the cost functions of a convexified instance, i.e.~Definition~\ref{def: conv}.\ref{enu: c2}.
With this definition of $I^\conv$ and assuming that $c_i:\R^m \to \R^m$ is a smooth function, $I^\conv$ 
is a jointly convex GNEP w.r.t.~$X^\conv$ with smooth cost-functions for the players and thus various methods to solve $I^\conv$ are known in the literature.        
\end{example}

\section{Computational Study}\label{sec: Comp}
 {In this section, we present numerical results on the computation of generalized Nash equilibria for Examples \ref{exa: CDFG} and~\ref{exa: CompEq}, i.e., ~the capacitated discrete flow games and transportation markets. 

We will consider three different types of methods for the computation of equilibria. 
 The first type (see Section~\ref{subsec:ql}) exploits the fact that the instances are hole-free GNEPs allowing us to apply the reformulation of the GNEP via Corollary~\ref{cor:ref}.   
Note that this
approach is correct (assuming enough run-time) and has the striking advantage
that any positive lower bound of the resulting
global optimization problem serves
as a certificate for the non-existence of GNE.

 The second type (see Section~\ref{subsec:mv}) uses Theorem~\ref{thm:main} in the sense that we try to compute a GNE for the convexified GNEP (e.g. by finding local  minimizer of the $\hat V$ function)
 and then check feasibility for the original non-convex GNEP. 
This approach has the advantage that
it can use well-known numerical methods from the area of convex GNEPs and in addition
it is in principle  applicable to all GNEPs and not only quasi-linear GNEPs. On the down-side, this approach is only correct, if we were able
to compute \emph{all} GNE for the convexified instance and check them for original feasibility. 

The third type (see Section~\ref{subsec:br}) is a best response algorithm (which we term BR), where the players -- whenever they can strictly improve their costs--  update their strategy using a best response.
Remark, however, that a BR-algorithm is not correct in general as it may not terminate due to cycling or may stop at infeasible strategy profiles for which a player's 
optimization problem is infeasible. As a consequence, 
the BR heuristic is not applicable
to the instances of Example~\ref{exa: CompEq} (cf.~Section~\ref{sec: results}). 

Let us emphasize  that prior to our paper,
there were no existing methods available in the literature that can deal with general non-convex or even quasi-linear GNEPs. 
Hence, a comparison of our proposed methods to some benchmark methods from the literature is not possible.

In the following subsections, we first describe the set of test instances which we generated. 
Then, we examine the aforementioned methods in  more detail and conclude 
by presenting their numerical results in terms of the number of GNE found, certificates of non-existence and computation times on average. 
}

\subsection{Test Instances}

We generated 10 different graphs $G=(V,E)$ for  each  $|V| \in \{10,15,20\}$ and each of the three different player set sizes $N \in \{2,4,10\}$.  
The edges of the graph were assigned randomly with each pair of nodes $a\neq b \in V$ having a $\{20\%,15\%,10\%\}$ chance for $|V| \in \{10,15,20\}$ to be connected by the directed arc $(a,b)$. Concerning the source sink pair of each player, we generated two types. Namely on the one hand a single source single sink type in which every player gets the same randomly selected (connected) source sink pair. On the other hand a multi source multi sink type in which  each player has an individual randomly selected (connected) source sink pair.
Similar, the weight of each player, i.e.~the integral amount of flow each player wants to send, is either chosen uniformly at random from the range of 1 to 10 or set to 1 for each player.  
In conclusion, we generated 10 graphs for each  combination of $|V|$, $|N|$, the two types concerning the source/sink assignment and  the weight assignment, leading to a total of 360 different graph-player setups. 

{
\paragraph{JCDFG.}
For the above described graph-player setups, we considered 
two different types of the CDFG. 
The first one is a jointly capacitated version (JCDFG) in which every player has the same capacity vector $c_i = c, i\in N$. 
To generate capacities that have an impact on the strategy sets, we first chose the capacities uniformly at random from a relative small range of 1 to $\max(n,d_1,\ldots,d_n)$. If the resulting strategy space  is empty, the capacities are reassigned. This random reassignment is executed until either the strategy space is not empty anymore or a limit for the amount of reassignments is exceeded. In the latter case, the range of values in which the capacities are chosen is incremented by one and the procedure is repeated. 
Regarding the cost functions, we use $\pi_i(x_i,x_{-i}):=  (\sum_{j\neq i}x_j)^\top \, C_i^1\, x_i + C_i^{2\top}\, x_i$.
If the player's weights are arbitrary, $C_i^1 \in \Z_{\geq 0}^{m\times m}$ and $C_i^2 \in \Z_{\geq 0}^{m}$ are randomly generated  with values in the range of 0 to 20. Otherwise, we set $C_i^1=\mathrm{diag}(C_i^2)$ as the diagonal matrix having $C_i^2$ 
as its diagonal as the  JCDFG can then be interpreted as a jointly constrained atomic congestion game in this case, cf.~Example \ref{exa: Atom}. 

\paragraph{ICDFG.}
We also considered an individually capacitated version of the CDFG (ICDFG)
by changing the above described jointly constrained instances only w.r.t.~the capacities of the players. 
In contrast to above, players now have individual capacities, i.e.~$c_i\neq c_j$ in general, which  
are analogously generated to the jointly constrained case. 

\paragraph{Transportation Markets.}
For the transportation markets of Example~\ref{exa: CompEq}, we set the weights in all of the above graph-player
setups to 1 and then
 considered only those graph-player setups 
in which at least one edge-disjoint path-allocation exists. 
This resulted in 74 instances. 
The players' costs were then set to $\pi_i(x_i,x_{-i},p):=   \big(p-C_i^{2}\big)^\top \, x_i$ (with $C_i^2$ from the JCDFG). 
Hereby, $p$ is the strategy of the market manager
which we added on top of the previously existing players. 
The latter's optimization problem is defined as described in Example~\ref{exa: CompEq}
with the addition of an upper price bound $p_e \leq \mathrm{PB}, e \in E$
for which we considered three different cases $\mathrm{PB}\in \{20,35,50\}$. 
}

{
\subsection{Computing Generalized Equilibria}
In order to compute original GNE, we 
use the continuous relaxation of the original games as convexification $I^\conv$ which is possible since GNEPs corresponding to the JCDFG, ICDFG and transportation markets are hole-free represented GNEPs. 
Based on this convexified instance, we implemented the following methods in \textsc{MATLAB}\textsuperscript{\tiny\textregistered}
in order to find equilibria of $I$.

\subsubsection{Quasi-linear Reformulation}\label{subsec:ql}
The first method is relying on the fact that both types of the CDFG and the transportation markets are hole-free-represented {player-linear mixed}-integer GNEPs for which Corollary~\ref{cor:ref} is applicable and consequently  the problem of finding a GNE reduces to finding a global optimum of a MINLP.
A striking advantage of this reformulation is its computational tractability as it allows for the application of 
global (MINLP) solvers such as  \textsc{BARON}, cf.~\cite{Baron18}. 
Note that these solvers typically require that the objective and restriction functions have an \emph{algebraic} description, i.e.~only consist of solver-supported operations like $+,-,\cdot$ etc., which is not the case for the non-reformulated,
original problem~\eqref{Opt}. 
Furthermore, the possibility to use \textsc{BARON} comes with
the additional benefit that \textsc{BARON} generates lower bounds on the optimal objective value 
during the search for a global optimum. 
A lower bound larger than zero  is a  certificate
for non-existence of equilibria and
the computation can be exited as soon as such a bound is found, cf.~the computational results for the transportation markets in Section~\ref{sec: results}. 

}

{

\subsubsection{Minimizing Variants of the $\hat{V}$ Function}\label{subsec:mv}
The second type of methods is based on  minimizing different variants of the $\hat{V}$ function of $I^\conv$,
rounding the found minimum and then verifying whether the rounded 
solution is a GNE. 
In contrast to the first method, here we do not deal 
with an optimization problem with a complete \emph{algebraic} description, 
but each objective function ($\hat{V}$) call requires to solve $|N|$ linear minimization problems. 
Solving  optimization problems with an objective function of such inaccessible form 
is a challenging task and we're not aware of any solvers that 
may produce lower bounds on the optimal objective value and hence guarantee global minimality.
Instead, we utilize the \textsc{MATLAB}\textsuperscript{\tiny\textregistered} Optimization Toolbox and
implemented the $\hat{V}$ function using the LP solver \texttt{linprog} and calculated local minima
via the \texttt{fmincon} solver.

For the case of the jointly convex GNEPs and hence the convexification of the JCDFG, 
we are able to make use of the following regularization of the $\hat{V}$ function:
\begin{align*}
    \hat{V}_\alpha(x) := \max_{y \in \hat{X}} \sum_{i\in N} \left[\pi_i(x) - \pi_i(y_i,x_{-i}) - \frac{\alpha}{2}\,||\,x_i-y_i\,||^2\right]
\end{align*}
where $\hat{X}$ is the joint constraint set of $I^\conv$ (cf.~\eqref{eq: defRelX}), $||\cdot||$ is the Euclidean norm and $\alpha>0$  denotes a regularization parameter. 
Heusinger and Kanzow~\cite{Heusinger2009-2}  
showed that $\hat{V}_\alpha$
is bounded from below by zero, 
every feasible solution with value zero corresponds to a (normalized) GNE of $I^\conv$
and is continuously differentiable.
The latter fact is beneficial in a computational regard as it allows one to provide an analytic gradient, 
significantly speeding up the computation of a local minimum. 
In this regard, we also computed local minima
via the \texttt{fmincon} solver of $\hat{V}_\alpha$ with $\alpha = 0.02$ for $I^\conv$ of the JCDFG. 

Although the transportation markets do not correspond to a jointly convex GNEP, 
we are still able to define a similar regularization by
\begin{align*}
    \hat{V}_\alpha(x,p) :=   \pi_{n+1}(x,p) + \max_{y \in \hat{X}'} \sum_{i\in [n]} \left[\pi_i(x,p) - \pi_i(y_i,x_{-i},p) - \frac{\alpha}{2}\,||\,x_i-y_i\,||^2\right] 
\end{align*}
where the $n+1$-th player resembles the market manager and $\hat{X}'$ the continuous relaxation of the product of the flow polytopes $\times_{i \in [n]}X_i'$. 
By a similar argumentation as in~\cite{Heusinger2009-2}, 
it is easy to see that this function
is bounded from below by zero, 
every feasible solution with value zero corresponds to a  GNE of $I^\conv$
and is continuously differentiable.
Hence, we also tried to computed (local) minima
via the \texttt{fmincon} solver of $\hat{V}_\alpha$ with $\alpha = 0.02$. 

An interesting question is whether and how 
one can adjust these techniques to find GNE of the convexified game but at the same time 
preserve original feasibility. 
We performed a first step into this direction by also implementing a penalized version of the above two methods 
in which we augmented the $\hat{V}$/$\hat{V}_\alpha$ function by the additive term $ \frac{1}{m \cdot n}\sum_{i\in N}\sum_{j\in E} \sin (\pi \cdot x_{ij})^2$
which penalizes non-integrality, resulting in the functions $\hat{V}^{\mathrm{pen}}$/$\hat{V}_\alpha^{\mathrm{pen}}$.
The idea here is that local minima found by the solver should be more likely to be integral and hence originally feasible.
Yet, this penalty term must be viewed with caution as the computation of a single local minimum is likely to be more time consuming and 
new local minima with an objective value bigger than zero may be generated through this penalty term.

The \texttt{fmincon}  solver requests a starting point. Thus, we computed an ordered and common set of 2000 random starting points by projecting random vectors in $[0,\max(n,d_1,\ldots,d_n)]^k$ to the set of feasible strategy profiles of $I^\conv$. {This is done by solving for each random vector $r$ the quadratic program $\min_{x \in \hat{X}}{||r-x||^2}$ via the \texttt{quadprog} solver of \textsc{MATLAB}. Note that the corresponding computation time was negligible.} 
Beginning with the first starting point, a (local) minimum is then computed of the respective objective function. Each component of this local minimum is  then rounded to the nearest integer. The resulting integral vector is then checked for feasibility and whether or not it is a GNE of $I^\ext$ by evaluating the $\hat{V}$ function for $I^\ext$ at that point. If the rounded solution is not a GNE, the next (local) minimum is computed with the usage of the next starting point. This procedure is executed until either a GNE has been found, all  starting vectors were tried or a time limit of one minute is exceeded, in which case the current computation is exited and no further (local) minima are computed. 

\subsubsection{Best Response Algorithm}\label{subsec:br}
 The BR-algorithm is applied to the previously mentioned starting vectors until an original GNE was found or the time limit of one minute is exceeded.
Hereby, starting with the first player, in each iteration the current strategy profile is 
updated by the current player's best response which is computed by solving her corresponding integral linear program via \textsc{BARON}.    
If the current player's optimization problem is infeasible, 
the BR-algorithm stops and is applied to the next starting vector.  

}

{
\subsection{Results}\label{sec: results}
All methods have been implemented in \textsc{MATLAB}\textsuperscript{\tiny\textregistered} R2023a on Windows 10 Enterprise. The computations have been performed on a machine with Intel Core i5-12500  and 32 GB of memory.
An overview of the results can be seen in Table~\ref{table: overview}. 
The ``GNE'' column of a method displays how often an equilibrium was found while the ``Time'' column shows how long it took (in seconds) to compute the equilibrium on average.
The ``Non-Existence'' column shows how often \textsc{BARON} was able to give a lower bound larger zero on the objective, 
i.e.~giving a certificate for non existence of equilibria and the corresponding ``Time'' column shows how long it took (in seconds) to compute the lower bound.
In order to illustrate the 
behaviour of the methods with respect to different  player sets, 
we also present the results of the JCDFG and ICDFG subdivided into the three  possibilities $N \in \{2,4,10\}$. 
{To demonstrate the behaviour of the various methods for one instance-type, we also present in Figure~\ref{fig: WhiskerJCDFG} and Figure~\ref{fig: WhiskerICDFG} boxplots of the performance of all methods
 based on 100 randomly generated instances of the type (2,20,m,10) for the JCDFG and ICDFG.
 The diagrams show the distribution of the computation time (in seconds) of an integral GNE. The mark inside each box denotes the median,
boxes represent lower and upper quartiles, and the whisker ends show the minimum and maximum, respectively, apart from possible outliers
marked by a cycle. }

The results regarding the transportation markets demonstrate that 
a BR heuristic may fail completely for certain types of GNEPs as infeasible strategy profiles may lead to the non-existence of best responses for players. 
In contrast, the results of the quasi-linear approach for these transportation markets illustrate the advantage that comes with our quasi-linear reformulation. 
Namely the possibility to generate certificates for non-existence of equilibria. In this regard, \textsc{BARON} was able to find in all market instances either a GNE or such a certificate. 

Finally, remark that the overall weak performance of the $\hat{V}$ and   $\hat{V}^{\mathrm{pen}}$ approaches 
 can be mainly attributed to the numerical complexity of \texttt{fmincon} when using differentiation.  This task is  time-demanding, 
requiring numerous costly $\hat{V}$/$\hat{V}^{\mathrm{pen}}$ evaluations.
}

\begin{table}[h!]
    \centering
\resizebox{\columnwidth}{!}{%
\begin{tabular}{l*{14}{c}}\toprule
\multirow{2}{*}{Example} & \multicolumn{4}{c}{Quasi-Linear}&\multicolumn{2}{c}{$\hat V$} & \multicolumn{2}{c}{$\hat{V}^\pen$} & \multicolumn{2}{c}{$\hat V_\alpha$} & \multicolumn{2}{c}{$\hat{V}_\alpha^\pen$} &   \multicolumn{2}{c}{BR}  
\\\cmidrule(lr){2-5}  \cmidrule(lr){6-7} \cmidrule(lr){8-9}\cmidrule(lr){10-11}\cmidrule(lr){12-13} \cmidrule(lr){14-15} 
          & GNE & Time & Non-Existence & Time& GNE & Time & GNE & Time  & GNE & Time & GNE & Time  & GNE & Time   \\\midrule

        JCDFG, $|N|= 2$ & 120 & 0.25 & 0 & - & 113 & 18.59 & 114 & 17.26 & 120 & 0.65 & 120 & 0.54 & 113 & 0.64\\
        JCDFG, $|N|= 4$ & 118 & 0.97 & 0 & - & 58 & 23.00 & 62 & 23.16 & 119 & 4.31 & 119 & 3.63 & 103 & 2.40 \\ 
       JCDFG, $|N|=10$ & 103 & 11.65 & 0 &- & 15 & 29.21 & 15 & 27.23 & 80 & 12.84 & 80 & 11.23 & 96 & 3.57 \\
       \midrule
       JCDFG & 341 & 3.94 & 0 & - & 186 & 20.82 & 191 & 19.96 & 319 & 5.07 & 319 & 4.38 & 312 & 2.12 \\
       \midrule 
        ICDFG, $|N|= 2$ & 97 & 0.22 & 6 & 0.33 & 93 & 16.70 & 93 & 16.21 & \multicolumn{2}{c}{not applicable}  &  \multicolumn{2}{c}{not applicable}  & 97 & 1.88 \\
        ICDFG, $|N|= 4$  & 68 & 0.50 & 3 & 0.58 & 39 & 22.94 & 37 & 23.51 &  \multicolumn{2}{c}{not applicable}  &  \multicolumn{2}{c}{not applicable}  & 71 & 0.97 \\ 
         ICDFG, $|N|= 10$ & 39 & 10.87 & 5 & 2.43 & 6 &28.87 & 11 & 33.14 & \multicolumn{2}{c}{not applicable}  &  \multicolumn{2}{c}{not applicable} & 41 & 2.74 \\      \midrule
      ICDFG   & 204  & 2.35 & 14 &15.87 & 138 & 18.99 & 141 & 19.45  & \multicolumn{2}{c}{not applicable}  &  \multicolumn{2}{c}{not applicable}  & 209& 1.74  \\  \midrule
      Markets, $\mathrm{PB} = 20$   &6 & 0.10 & 68 & 0.12 & 4 & 18.53 & 4 & 22.36 & 4& 0.15&4&4.63 &   0&- \\   
      Markets, $\mathrm{PB} = 35$  & 39 & 0.12 & 35 & 0.13 & 16 & 29.73 & 17 & 30.57&27&4.35  & 27&4.30 &  0&-   \\
      Markets, $\mathrm{PB} = 50$ & 65 & 0.12 & 9 & 0.12 & 26 & 37.03 & 22 & 40.56& 40&6.30&43&8.04  &  0&-   \\
   \bottomrule
\end{tabular}}
\caption{The performances of the various methods applied to the different examples. 
The ``GNE'' column of a method displays how often an equilibrium was found while the ``Time'' column shows how long it took (in seconds) to compute the equilibrium on average. The ``Non-Existence'' column shows how often \textsc{BARON} was able to give a lower bound bigger zero on the objective, i.e.~giving a certificate for non existence of equilibria and the corresponding ``Time'' column shows how long it took (in seconds) to compute the lower bound.}
 \label{table: overview}
\end{table}

\begin{figure}[h!]
    \centering
    \scalebox{1}{
\subfigure[QL]{
\begin{tikzpicture}
\begin{axis}[boxplot/draw direction=y, x = 01cm,  xmajorticks=false,  height = \heightbp]
\addplot+ [boxplot, mark size = \marksize]
table [row sep =\\, y index=0] {
0.37   \\ 
0.76   \\ 
0.51   \\ 
0.39   \\ 
0.31   \\ 
0.47   \\ 
0.40   \\ 
0.35   \\ 
0.47   \\ 
0.38   \\ 
0.25   \\ 
0.40   \\ 
0.37   \\ 
0.50   \\ 
0.36   \\ 
0.24   \\ 
0.26   \\ 
0.72   \\ 
0.33   \\ 
0.33   \\ 
0.31   \\ 
0.68   \\ 
0.46   \\ 
0.47   \\ 
0.26   \\ 
0.28   \\ 
0.58   \\ 
0.49   \\ 
0.51   \\ 
0.35   \\ 
0.24   \\ 
0.41   \\ 
0.42   \\ 
0.23   \\ 
0.31   \\ 
0.45   \\ 
0.64   \\ 
0.42   \\ 
0.35   \\ 
0.49   \\ 
0.50   \\ 
0.43   \\ 
0.59   \\ 
0.40   \\ 
0.45   \\ 
0.29   \\ 
0.46   \\ 
0.72   \\ 
0.45   \\ 
0.51   \\ 
0.37   \\ 
0.33   \\ 
0.46   \\ 
0.63   \\ 
0.45   \\ 
0.51   \\ 
0.52   \\ 
0.56   \\ 
0.26   \\ 
0.97   \\ 
0.21   \\ 
0.20   \\ 
0.49   \\ 
0.35   \\ 
0.32   \\ 
0.44   \\ 
0.39   \\ 
0.33   \\ 
0.33   \\ 
0.32   \\ 
0.48   \\ 
0.65   \\ 
0.24   \\ 
0.42   \\ 
0.36   \\ 
0.29   \\ 
0.43   \\ 
0.44   \\ 
0.56   \\ 
0.37   \\ 
0.29   \\ 
0.41   \\ 
0.59   \\ 
0.39   \\ 
0.52   \\ 
0.46   \\ 
0.55   \\ 
0.42   \\ 
0.35   \\ 
0.41   \\ 
0.41   \\ 
0.31   \\ 
0.51   \\ 
0.26   \\ 
0.34   \\ 
0.28   \\ 
0.30   \\ 
0.51   \\ 
0.33   \\ 
0.62   \\ 
}node[above=-1pt] at
(boxplot box cs: \boxplotvalue{median},0.5)
{\scalebox{\scalebp}{\tiny\pgfmathprintnumber{\boxplotvalue{median}}}};;
\end{axis}
\end{tikzpicture}}\hspace{0.25cm}
\subfigure[$\hat{V}$]{
\begin{tikzpicture}
\begin{axis}[boxplot/draw direction=y, x = 1cm,  xmajorticks=false,  height = \heightbp]
\addplot+ [boxplot, mark size = \marksize]
table [row sep=\\,y index=0] {
9.37   \\ 
41.24   \\ 
40.40   \\ 
18.70   \\ 
33.33   \\ 
48.27   \\ 
   \\ 
18.77   \\ 
   \\ 
10.57   \\ 
9.21   \\ 
47.97   \\ 
13.23   \\ 
37.09   \\ 
17.16   \\ 
15.88   \\ 
16.95   \\ 
48.13   \\ 
25.67   \\ 
16.18   \\ 
23.85   \\ 
37.14   \\ 
57.91   \\ 
   \\ 
23.06   \\ 
49.69   \\ 
   \\ 
   \\ 
57.64   \\ 
20.71   \\ 
25.61   \\ 
42.26   \\ 
49.54   \\ 
15.91   \\ 
39.59   \\ 
   \\ 
36.59   \\ 
49.49   \\ 
45.03   \\ 
   \\ 
   \\ 
51.38   \\ 
23.40   \\ 
31.79   \\ 
   \\ 
7.09   \\ 
28.32   \\ 
   \\ 
34.63   \\ 
49.82   \\ 
16.93   \\ 
   \\ 
41.29   \\ 
   \\ 
49.52   \\ 
   \\ 
34.64   \\ 
   \\ 
42.37   \\ 
29.17   \\ 
40.47   \\ 
3.73   \\ 
31.57   \\ 
33.20   \\ 
53.38   \\ 
18.85   \\ 
36.93   \\ 
44.98   \\ 
57.45   \\ 
   \\ 
19.69   \\ 
47.76   \\ 
22.30   \\ 
36.23   \\ 
21.82   \\ 
11.92   \\ 
   \\ 
29.62   \\ 
   \\ 
37.19   \\ 
37.07   \\ 
45.28   \\ 
39.92   \\ 
50.01   \\ 
31.08   \\ 
32.43   \\ 
   \\ 
41.18   \\ 
   \\ 
44.16   \\ 
45.41   \\ 
47.55   \\ 
   \\ 
41.33   \\ 
18.89   \\ 
10.69   \\ 
27.29   \\ 
28.29   \\ 
34.15   \\ 
31.54   \\ 
}node[above=-1pt] at
(boxplot box cs: \boxplotvalue{median},0.5)
{\scalebox{\scalebp}{\tiny\pgfmathprintnumber{\boxplotvalue{median}}}};;
\end{axis}
\end{tikzpicture}}\hspace{0.25cm}

\subfigure[$\hat{V}^\pen$]{
\begin{tikzpicture}
\begin{axis}[boxplot/draw direction=y, x = 1cm,  xmajorticks=false,  height = \heightbp]
\addplot+ [boxplot, mark size = \marksize]
table [row sep=\\,y index=0] {
9.19   \\ 
44.01   \\ 
39.27   \\ 
18.67   \\ 
20.75   \\ 
34.62   \\ 
58.99   \\ 
15.87   \\ 
45.84   \\ 
10.98   \\ 
8.84   \\ 
   \\ 
13.85   \\ 
41.09   \\ 
10.96   \\ 
14.47   \\ 
17.89   \\ 
54.63   \\ 
27.33   \\ 
13.57   \\ 
41.56   \\ 
44.53   \\ 
   \\ 
   \\ 
31.86   \\ 
45.13   \\ 
   \\ 
   \\ 
   \\ 
14.77   \\ 
17.37   \\ 
   \\ 
   \\ 
21.02   \\ 
32.84   \\ 
   \\ 
55.75   \\ 
57.06   \\ 
49.64   \\ 
   \\ 
31.09   \\ 
53.39   \\ 
25.75   \\ 
37.41   \\ 
   \\ 
6.86   \\ 
23.39   \\ 
   \\ 
41.09   \\ 
47.86   \\ 
14.61   \\ 
59.27   \\ 
30.91   \\ 
   \\ 
46.76   \\ 
32.94   \\ 
40.90   \\ 
   \\ 
55.27   \\ 
39.50   \\ 
20.38   \\ 
2.88   \\ 
33.67   \\ 
27.40   \\ 
54.69   \\ 
20.36   \\ 
41.65   \\ 
45.86   \\ 
58.79   \\ 
52.96   \\ 
25.99   \\ 
33.98   \\ 
20.32   \\ 
25.34   \\ 
17.58   \\ 
12.74   \\ 
44.79   \\ 
43.41   \\ 
   \\ 
30.03   \\ 
43.21   \\ 
43.54   \\ 
37.38   \\ 
32.19   \\ 
27.83   \\ 
30.45   \\ 
   \\ 
43.47   \\ 
   \\ 
   \\ 
54.89   \\ 
50.52   \\ 
   \\ 
   \\ 
26.10   \\ 
9.90   \\ 
27.03   \\ 
34.47   \\ 
28.32   \\ 
27.59   \\ 
}node[above=-1pt] at
(boxplot box cs: \boxplotvalue{median},0.5)
{\scalebox{\scalebp}{\tiny\pgfmathprintnumber{\boxplotvalue{median}}}};;
\end{axis}
\end{tikzpicture}}\hspace{0.25cm}
\subfigure[$\hat{V}_{\alpha}$]{
\begin{tikzpicture}
\begin{axis}[boxplot/draw direction=y, x = 1cm,  xmajorticks=false,  height = \heightbp, ymin =  0.11]
\addplot+ [boxplot, mark size = \marksize]
table [row sep=\\,y index=0] {
0.30   \\ 
0.51   \\ 
1.64   \\ 
0.47   \\ 
0.16   \\ 
0.35   \\ 
0.37   \\ 
0.28   \\ 
0.79   \\ 
0.15   \\ 
0.05   \\ 
0.44   \\ 
0.27   \\ 
4.30   \\ 
0.15   \\ 
0.16   \\ 
0.41   \\ 
0.89   \\ 
0.59   \\ 
0.31   \\ 
0.38   \\ 
0.57   \\ 
0.59   \\ 
0.34   \\ 
0.45   \\ 
0.41   \\ 
0.91   \\ 
0.72   \\ 
0.86   \\ 
0.63   \\ 
0.23   \\ 
0.65   \\ 
0.60   \\ 
0.17   \\ 
0.62   \\ 
0.55   \\ 
2.22   \\ 
0.32   \\ 
0.36   \\ 
0.57   \\ 
0.61   \\ 
0.62   \\ 
0.45   \\ 
0.53   \\ 
0.79   \\ 
0.18   \\ 
0.48   \\ 
0.75   \\ 
0.96   \\ 
0.71   \\ 
0.46   \\ 
0.39   \\ 
0.54   \\ 
0.62   \\ 
1.24   \\ 
1.75   \\ 
0.46   \\ 
1.04   \\ 
0.23   \\ 
0.55   \\ 
0.09   \\ 
0.06   \\ 
0.78   \\ 
0.30   \\ 
0.48   \\ 
0.41   \\ 
0.26   \\ 
0.31   \\ 
0.26   \\ 
0.29   \\ 
0.68   \\ 
2.32   \\ 
0.22   \\ 
0.93   \\ 
0.29   \\ 
0.36   \\ 
0.96   \\ 
0.60   \\ 
1.05   \\ 
0.46   \\ 
0.37   \\ 
0.45   \\ 
0.35   \\ 
0.35   \\ 
0.29   \\ 
0.49   \\ 
0.60   \\ 
0.38   \\ 
0.40   \\ 
0.31   \\ 
0.47   \\ 
0.37   \\ 
0.95   \\ 
0.27   \\ 
0.83   \\ 
0.19   \\ 
0.38   \\ 
1.72   \\ 
0.50   \\ 
2.27   \\ 
}node[above=-1pt] at
(boxplot box cs: \boxplotvalue{median}-0.02,0.5)
{\scalebox{\scalebp}{\tiny\pgfmathprintnumber{\boxplotvalue{median}}}};;
\end{axis}
\end{tikzpicture}}\hspace{0.25cm}
\subfigure[$\hat{V}_{\alpha}^\pen$]{
\begin{tikzpicture}
\begin{axis}[boxplot/draw direction=y, x = 1cm,   xmajorticks=false,  height = \heightbp, ymin = 0.11]
\addplot+ [boxplot, mark size = \marksize]
table [row sep=\\,y index=0] {
0.29   \\ 
0.71   \\ 
1.86   \\ 
0.48   \\ 
0.17   \\ 
0.35   \\ 
0.38   \\ 
0.22   \\ 
0.86   \\ 
0.14   \\ 
0.05   \\ 
0.63   \\ 
0.29   \\ 
6.28   \\ 
0.45   \\ 
0.15   \\ 
0.29   \\ 
0.82   \\ 
0.52   \\ 
0.47   \\ 
0.43   \\ 
0.60   \\ 
0.55   \\ 
0.34   \\ 
0.31   \\ 
0.33   \\ 
0.81   \\ 
0.74   \\ 
1.08   \\ 
0.52   \\ 
0.22   \\ 
0.56   \\ 
0.57   \\ 
0.15   \\ 
0.42   \\ 
0.56   \\ 
4.37   \\ 
0.25   \\ 
0.36   \\ 
0.64   \\ 
0.82   \\ 
0.52   \\ 
0.42   \\ 
0.55   \\ 
0.73   \\ 
0.12   \\ 
0.45   \\ 
0.46   \\ 
1.07   \\ 
0.72   \\ 
0.51   \\ 
0.36   \\ 
0.49   \\ 
0.67   \\ 
1.24   \\ 
1.87   \\ 
0.41   \\ 
1.17   \\ 
0.25   \\ 
0.74   \\ 
0.12   \\ 
0.06   \\ 
0.61   \\ 
0.31   \\ 
0.42   \\ 
0.51   \\ 
0.37   \\ 
0.25   \\ 
0.26   \\ 
0.23   \\ 
0.59   \\ 
2.05   \\ 
0.17   \\ 
0.95   \\ 
0.34   \\ 
0.19   \\ 
0.54   \\ 
1.00   \\ 
0.85   \\ 
0.41   \\ 
0.29   \\ 
0.38   \\ 
0.30   \\ 
0.38   \\ 
0.28   \\ 
0.49   \\ 
0.82   \\ 
0.36   \\ 
0.45   \\ 
0.32   \\ 
0.54   \\ 
0.33   \\ 
1.02   \\ 
0.40   \\ 
0.64   \\ 
0.19   \\ 
0.40   \\ 
1.00   \\ 
0.35   \\ 
1.96   \\ 
}node[above=-1pt] at
(boxplot box cs: \boxplotvalue{median}-0.08,0.5)
{\scalebox{\scalebp}{\tiny\pgfmathprintnumber{\boxplotvalue{median}}}};;
\end{axis}
\end{tikzpicture}}\hspace{0.25cm}
\subfigure[BR]{
\begin{tikzpicture}
\begin{axis}[boxplot/draw direction=y, x = 1cm,  xmajorticks=false,  height = \heightbp]
\addplot+ [boxplot, mark size = \marksize]
table [row sep=\\,y index=0] {
0.18   \\ 
0.29   \\ 
0.29   \\ 
0.27   \\ 
0.18   \\ 
0.36   \\ 
0.27   \\ 
0.27   \\ 
0.29   \\ 
0.27   \\ 
0.19   \\ 
0.47   \\ 
0.23   \\ 
0.31   \\ 
0.23   \\ 
0.22   \\ 
0.24   \\ 
0.37   \\ 
0.22   \\ 
0.31   \\ 
0.21   \\ 
0.42   \\ 
0.22   \\ 
0.22   \\ 
0.34   \\ 
0.28   \\ 
0.22   \\ 
0.21   \\ 
0.55   \\ 
0.21   \\ 
0.21   \\ 
0.24   \\ 
0.29   \\ 
0.21   \\ 
0.22   \\ 
0.25   \\ 
0.36   \\ 
0.22   \\ 
0.24   \\ 
   \\ 
0.26   \\ 
0.26   \\ 
   \\ 
0.25   \\ 
0.24   \\ 
0.25   \\ 
   \\ 
0.59   \\ 
0.62   \\ 
0.26   \\ 
0.26   \\ 
0.35   \\ 
0.26   \\ 
0.24   \\ 
0.33   \\ 
0.26   \\ 
0.25   \\ 
0.24   \\ 
0.40   \\ 
0.20   \\ 
0.24   \\ 
0.25   \\ 
0.25   \\ 
0.25   \\ 
0.29   \\ 
0.34   \\ 
0.54   \\ 
0.23   \\ 
0.24   \\ 
0.34   \\ 
0.23   \\ 
0.30   \\ 
0.21   \\ 
0.23   \\ 
0.38   \\ 
0.27   \\ 
0.31   \\ 
0.23   \\ 
0.31   \\ 
0.22   \\ 
0.23   \\ 
0.23   \\ 
0.22   \\ 
0.23   \\ 
   \\ 
0.32   \\ 
0.24   \\ 
0.53   \\ 
0.30   \\ 
0.23   \\ 
0.26   \\ 
0.22   \\ 
0.24   \\ 
0.23   \\ 
0.23   \\ 
0.22   \\ 
0.23   \\ 
0.31   \\ 
}node[above=-1pt] at
(boxplot box cs: \boxplotvalue{median},0.5)
{\scalebox{\scalebp}{\tiny\pgfmathprintnumber{\boxplotvalue{median}}}};;
\end{axis}
\end{tikzpicture}}}
    \caption{Boxplots of the performance of all methods with respect to the instance type (2,20,m,10) in the JCDFG.  The diagrams show the distribution of the computation time (in seconds) of a original GNE. We did not include the time when no equilibrium was found. In this regard, the methods (a)-(f) found (100,80,80,100,100,96) equilibria respectively.}
    \label{fig: WhiskerJCDFG}

\end{figure}
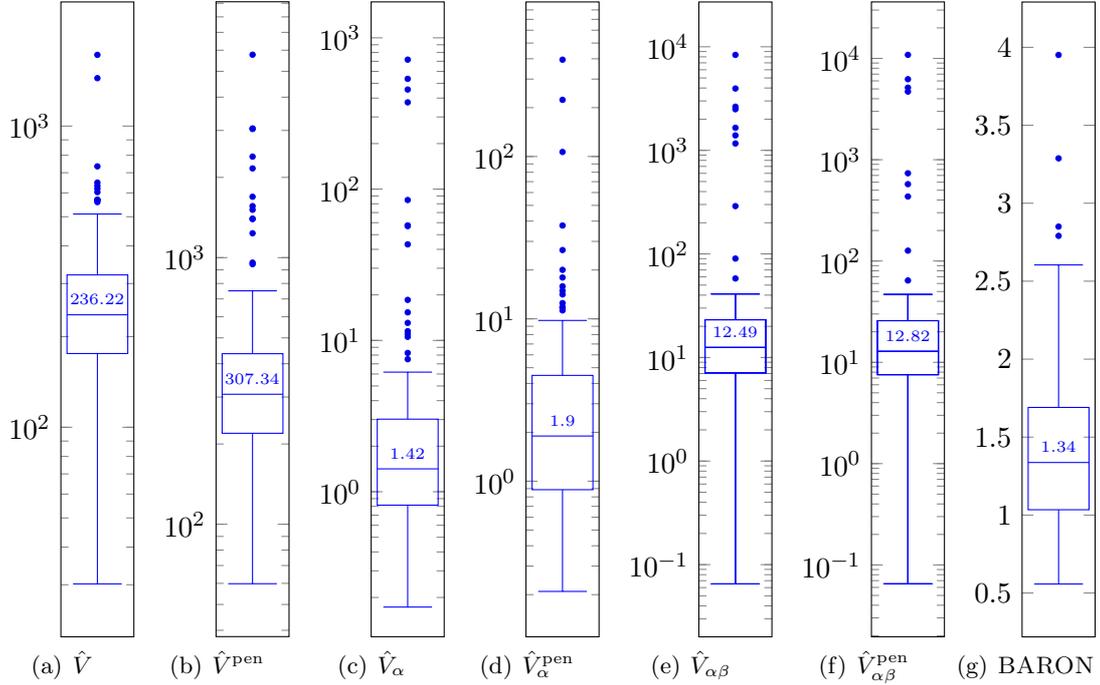
\clearpage

\begin{figure}[h!]
    \centering
    \scalebox{1}{
\subfigure[QL]{
\begin{tikzpicture}
\begin{axis}[boxplot/draw direction=y, x = 01cm,  xmajorticks=false,  height = \heightbp]
\addplot+ [boxplot, mark size = \marksize]
table [row sep =\\, y index=0] {
0.32   \\ 
0.70   \\ 
   \\ 
   \\ 
0.27   \\ 
   \\ 
0.37   \\ 
0.34   \\ 
   \\ 
   \\ 
0.35   \\ 
0.43   \\ 
   \\ 
0.51   \\ 
   \\ 
0.26   \\ 
   \\ 
   \\ 
   \\ 
0.39   \\ 
   \\ 
   \\ 
   \\ 
   \\ 
0.34   \\ 
0.32   \\ 
0.61   \\ 
   \\ 
   \\ 
0.31   \\ 
0.23   \\ 
0.40   \\ 
0.50   \\ 
0.22   \\ 
0.31   \\ 
   \\ 
0.61   \\ 
0.41   \\ 
0.35   \\ 
   \\ 
   \\ 
   \\ 
0.74   \\ 
0.39   \\ 
   \\ 
0.27   \\ 
0.43   \\ 
0.65   \\ 
   \\ 
0.53   \\ 
   \\ 
0.38   \\ 
   \\ 
   \\ 
   \\ 
   \\ 
   \\ 
0.55   \\ 
   \\ 
   \\ 
0.21   \\ 
   \\ 
0.50   \\ 
   \\ 
0.31   \\ 
0.38   \\ 
0.38   \\ 
   \\ 
0.32   \\ 
0.33   \\ 
0.49   \\ 
0.62   \\ 
0.24   \\ 
   \\ 
   \\ 
0.30   \\ 
0.50   \\ 
   \\ 
0.56   \\ 
0.41   \\ 
0.28   \\ 
   \\ 
   \\ 
0.41   \\ 
   \\ 
   \\ 
   \\ 
0.42   \\ 
0.39   \\ 
0.42   \\ 
   \\ 
0.32   \\ 
   \\ 
0.63   \\ 
0.34   \\ 
0.30   \\ 
0.33   \\ 
0.49   \\ 
   \\ 
0.83   \\ 
}node[above=-1pt] at
(boxplot box cs: \boxplotvalue{median},0.5)
{\scalebox{\scalebp}{\tiny\pgfmathprintnumber{\boxplotvalue{median}}}};;
\end{axis}
\end{tikzpicture}}\hspace{0.25cm}
\subfigure[Non-Ex.]{
\begin{tikzpicture}
\begin{axis}[boxplot/draw direction=y, x = 01cm,  xmajorticks=false,  height = \heightbp]
\addplot+ [boxplot, mark size = \marksize]
table [row sep =\\, y index=0] {
   \\ 
   \\ 
   \\ 
0.44   \\ 
   \\ 
   \\ 
   \\ 
   \\ 
   \\ 
   \\ 
   \\ 
   \\ 
0.50   \\ 
   \\ 
   \\ 
   \\ 
0.34   \\ 
0.56   \\ 
0.38   \\ 
   \\ 
0.35   \\ 
   \\ 
   \\ 
0.47   \\ 
   \\ 
   \\ 
   \\ 
   \\ 
   \\ 
   \\ 
   \\ 
   \\ 
   \\ 
   \\ 
   \\ 
   \\ 
   \\ 
   \\ 
   \\ 
   \\ 
   \\ 
   \\ 
   \\ 
   \\ 
0.51   \\ 
   \\ 
   \\ 
   \\ 
   \\ 
   \\ 
   \\ 
   \\ 
   \\ 
0.75   \\ 
   \\ 
   \\ 
   \\ 
   \\ 
0.26   \\ 
   \\ 
   \\ 
0.21   \\ 
   \\ 
0.35   \\ 
   \\ 
   \\ 
   \\ 
   \\ 
   \\ 
   \\ 
   \\ 
   \\ 
   \\ 
0.55   \\ 
   \\ 
   \\ 
   \\ 
   \\ 
   \\ 
   \\ 
   \\ 
0.71   \\ 
0.55   \\ 
   \\ 
0.46   \\ 
0.45   \\ 
0.72   \\ 
   \\ 
   \\ 
   \\ 
0.68   \\ 
   \\ 
   \\ 
   \\ 
   \\ 
   \\ 
   \\ 
   \\ 
0.40   \\ 
   \\ 
}node[above=-1pt] at
(boxplot box cs: \boxplotvalue{median},0.5)
{\scalebox{\scalebp}{\tiny\pgfmathprintnumber{\boxplotvalue{median}}}};;
\end{axis}
\end{tikzpicture}}\hspace{0.25cm}
\subfigure[$\hat{V}$]{
\begin{tikzpicture}
\begin{axis}[boxplot/draw direction=y, x = 1cm,  xmajorticks=false,  height = \heightbp]
\addplot+ [boxplot, mark size = \marksize]
table [row sep=\\,y index=0] {
9.78   \\ 
38.61   \\ 
   \\ 
   \\ 
12.21   \\ 
   \\ 
10.61   \\ 
16.03   \\ 
   \\ 
   \\ 
15.24   \\ 
47.54   \\ 
   \\ 
51.05   \\ 
   \\ 
8.72   \\ 
   \\ 
   \\ 
   \\ 
30.75   \\ 
   \\ 
   \\ 
   \\ 
   \\ 
23.75   \\ 
49.72   \\ 
   \\ 
   \\ 
   \\ 
16.04   \\ 
10.26   \\ 
   \\ 
   \\ 
4.54   \\ 
37.32   \\ 
   \\ 
   \\ 
21.08   \\ 
45.07   \\ 
   \\ 
   \\ 
   \\ 
39.57   \\ 
29.00   \\ 
   \\ 
   \\ 
36.81   \\ 
45.07   \\ 
   \\ 
   \\ 
   \\ 
   \\ 
   \\ 
   \\ 
   \\ 
   \\ 
   \\ 
   \\ 
   \\ 
   \\ 
12.20   \\ 
   \\ 
36.33   \\ 
   \\ 
44.39   \\ 
28.10   \\ 
25.80   \\ 
   \\ 
21.14   \\ 
9.33   \\ 
   \\ 
53.34   \\ 
12.01   \\ 
   \\ 
   \\ 
15.16   \\ 
48.12   \\ 
   \\ 
38.16   \\ 
   \\ 
14.44   \\ 
   \\ 
   \\ 
51.51   \\ 
   \\ 
   \\ 
   \\ 
48.48   \\ 
18.08   \\ 
46.98   \\ 
   \\ 
53.64   \\ 
   \\ 
10.42   \\ 
   \\ 
18.62   \\ 
29.54   \\ 
25.00   \\ 
   \\ 
43.19   \\ 
}node[above=-1pt] at
(boxplot box cs: \boxplotvalue{median},0.5)
{\scalebox{\scalebp}{\tiny\pgfmathprintnumber{\boxplotvalue{median}}}};;
\end{axis}
\end{tikzpicture}}\hspace{0.25cm}

\subfigure[$\hat{V}^\pen$]{
\begin{tikzpicture}
\begin{axis}[boxplot/draw direction=y, x = 1cm,  xmajorticks=false,  height = \heightbp]
\addplot+ [boxplot, mark size = \marksize]
table [row sep=\\,y index=0] {
9.73   \\ 
42.83   \\ 
   \\ 
   \\ 
13.48   \\ 
   \\ 
11.40   \\ 
14.53   \\ 
   \\ 
   \\ 
17.75   \\ 
47.99   \\ 
   \\ 
40.16   \\ 
   \\ 
9.25   \\ 
   \\ 
   \\ 
   \\ 
28.08   \\ 
   \\ 
   \\ 
   \\ 
   \\ 
21.87   \\ 
50.32   \\ 
   \\ 
   \\ 
   \\ 
18.68   \\ 
10.80   \\ 
50.48   \\ 
32.82   \\ 
4.71   \\ 
38.09   \\ 
   \\ 
52.85   \\ 
17.01   \\ 
53.59   \\ 
   \\ 
   \\ 
   \\ 
33.45   \\ 
37.17   \\ 
   \\ 
10.86   \\ 
29.88   \\ 
27.09   \\ 
   \\ 
   \\ 
   \\ 
   \\ 
   \\ 
   \\ 
   \\ 
   \\ 
   \\ 
   \\ 
   \\ 
   \\ 
15.60   \\ 
   \\ 
35.94   \\ 
   \\ 
59.49   \\ 
31.82   \\ 
21.63   \\ 
   \\ 
17.32   \\ 
10.42   \\ 
   \\ 
   \\ 
12.78   \\ 
   \\ 
   \\ 
17.66   \\ 
44.21   \\ 
   \\ 
33.95   \\ 
50.99   \\ 
15.82   \\ 
   \\ 
   \\ 
45.82   \\ 
   \\ 
   \\ 
   \\ 
54.69   \\ 
27.47   \\ 
51.13   \\ 
   \\ 
48.52   \\ 
   \\ 
15.15   \\ 
23.93   \\ 
20.56   \\ 
29.78   \\ 
23.24   \\ 
   \\ 
56.86   \\ 
}node[above=-1pt] at
(boxplot box cs: \boxplotvalue{median},0.5)
{\scalebox{\scalebp}{\tiny\pgfmathprintnumber{\boxplotvalue{median}}}};;
\end{axis}
\end{tikzpicture}}\hspace{0.25cm}
\subfigure[BR]{
\begin{tikzpicture}
\begin{axis}[boxplot/draw direction=y, x = 1cm,  xmajorticks=false,  height = \heightbp]
\addplot+ [boxplot, mark size = \marksize]
table [row sep=\\,y index=0] {
0.25   \\ 
0.26   \\ 
   \\ 
   \\ 
0.26   \\ 
   \\ 
0.25   \\ 
0.35   \\ 
   \\ 
   \\ 
0.25   \\ 
0.32   \\ 
   \\ 
0.26   \\ 
0.41   \\ 
0.26   \\ 
   \\ 
   \\ 
   \\ 
0.35   \\ 
   \\ 
   \\ 
   \\ 
   \\ 
0.27   \\ 
0.31   \\ 
0.24   \\ 
   \\ 
   \\ 
0.24   \\ 
0.24   \\ 
0.24   \\ 
0.31   \\ 
0.22   \\ 
0.22   \\ 
   \\ 
0.24   \\ 
0.23   \\ 
0.23   \\ 
   \\ 
   \\ 
   \\ 
0.40   \\ 
0.23   \\ 
   \\ 
0.23   \\ 
0.23   \\ 
0.31   \\ 
   \\ 
0.32   \\ 
   \\ 
   \\ 
   \\ 
   \\ 
   \\ 
   \\ 
0.24   \\ 
   \\ 
   \\ 
0.15   \\ 
   \\ 
0.23   \\ 
   \\ 
0.24   \\ 
0.24   \\ 
0.24   \\ 
   \\ 
0.24   \\ 
0.22   \\ 
0.24   \\ 
0.33   \\ 
0.23   \\ 
   \\ 
   \\ 
0.23   \\ 
0.25   \\ 
   \\ 
0.25   \\ 
0.32   \\ 
0.24   \\ 
   \\ 
   \\ 
0.24   \\ 
   \\ 
   \\ 
   \\ 
0.23   \\ 
0.23   \\ 
0.31   \\ 
   \\ 
0.24   \\ 
   \\ 
0.24   \\ 
0.24   \\ 
0.24   \\ 
0.23   \\ 
0.32   \\ 
   \\ 
0.25   \\ 
}node[above=-1pt] at
(boxplot box cs: \boxplotvalue{median},0.5)
{\scalebox{\scalebp}{\tiny\pgfmathprintnumber{\boxplotvalue{median}}}};;
\end{axis}
\end{tikzpicture}}}
    \caption{Boxplots of the performance of all methods with respect to the instance type (2,20,m,10) in the ICDFG.  The diagrams show the distribution of the computation time (in seconds) of a original GNE. We did not include the time when no equilibrium was found. In this regard, the quasi-linear approach found 56 equilibria and provided a non-existence certificate in 20 cases. The methods (c)-(e) found (45,50,57) equilibria respectively.}
    \label{fig: WhiskerICDFG}

\end{figure}

\section{Conclusions}
We derived a new characterization of generalized Nash equilibria by convexifying the original instance $I$, leading to a set of more structured convexified instances $\mathcal{I}^\conv$ of the GNEP. 
This convexification approach is very general and thus its relevance is relying on the identification of classes of original instances and corresponding well-behaved convexified instances. We illustrated this by deriving  
for the three problem classes 
of quasi-linear, $k$-restrictive-closed and restrictive-closed GNEPs, respectively, 
new characterizations of the existence and computability of generalized Nash equilibria. We demonstrated the applicability of the latter by presenting various methods and corresponding numerical results for the computation of equilibria in the CDFG {and transportation markets}.
In this regard, our convexification offers an approach to systematically tackle the poorly understood class of non-convex and discrete GNEPs via identifying original and corresponding well-behaved convexified instances in order to then draw conclusions for the original instance from the convexified one via our main Theorem~\ref{thm:main}. Therefore we 
believe that there is still  untapped potential in our convexification method in order to obtain structural insights into the problem as well as pave the way for a more tractable computational approach.

\bibliographystyle{plain}
\bibliography{master-bib}

\clearpage
{
\appendix
\section{Appendix}
\subsection{Detailed Numerical Results}
In Table~\ref{tab: JCDFG} and~\ref{tab: ICDFG} we present the numerical results in a more detailed fashion.
That is, we subdivided the numerical results w.r.t.~each instance type. As described above, for each instance type there are 10 different instances.   
}

\begin{table}[h!]
    \centering
    \resizebox{\columnwidth}{!}{
\begin{tabular}{lcccccccccccccc}
\toprule
& \multicolumn{4}{c}{Quasi-Linear}&\multicolumn{2}{c}{$\hat V$} & \multicolumn{2}{c}{$\hat{V}^\pen$} & \multicolumn{2}{c}{$\hat V_\alpha$} & \multicolumn{2}{c}{$\hat{V}_\alpha^\pen$} &   \multicolumn{2}{c}{BR}  
\\\cmidrule(lr){2-5}  \cmidrule(lr){6-7} \cmidrule(lr){8-9}\cmidrule(lr){10-11}\cmidrule(lr){12-13} \cmidrule(lr){14-15} 
     Instance-type       & GNE & Time & Non-Existence & Time& GNE & Time & GNE & Time  & GNE & Time & GNE & Time  & GNE & Time   \\\midrule
(2,10,s,1) & 10 & 0.14 & 0 & - & 9 & 4.66 & 10 & 7.47 & 10 & 0.33 & 10 & 0.33 & 10 & 0.93 \\ 
(2,10,s,10) & 10 & 0.18 & 0 & - & 10 & 6.70 & 10 & 5.83 & 10 & 0.08 & 10 & 0.08 & 9 & 0.26 \\ 
(2,10,m,1) & 10 & 0.14 & 0 & - & 10 & 10.12 & 10 & 9.13 & 10 & 0.15 & 10 & 0.14 & 10 & 0.28 \\ 
(2,10,m,10) & 10 & 0.24 & 0 & - & 10 & 15.32 & 10 & 13.79 & 10 & 0.12 & 10 & 0.10 & 9 & 0.42 \\ 
(2,15,s,1) & 10 & 0.18 & 0 & - & 10 & 18.91 & 10 & 15.53 & 10 & 0.71 & 10 & 1.13 & 9 & 0.31 \\ 
(2,15,s,10) & 10 & 0.33 & 0 & - & 9 & 17.06 & 9 & 15.11 & 10 & 0.53 & 10 & 0.24 & 8 & 0.33 \\ 
(2,15,m,1) & 10 & 0.17 & 0 & - & 10 & 19.69 & 10 & 18.72 & 10 & 0.24 & 10 & 0.20 & 10 & 0.90 \\ 
(2,15,m,10) & 10 & 0.37 & 0 & - & 10 & 33.74 & 10 & 30.78 & 10 & 0.53 & 10 & 0.32 & 10 & 0.32 \\ 
(2,20,s,1) & 10 & 0.20 & 0 & - & 9 & 15.98 & 9 & 13.02 & 10 & 2.02 & 10 & 1.68 & 9 & 0.90 \\ 
(2,20,s,10) & 10 & 0.44 & 0 & - & 9 & 26.75 & 7 & 23.28 & 10 & 0.63 & 10 & 1.44 & 9 & 0.32 \\ 
(2,20,m,1) & 10 & 0.20 & 0 & - & 9 & 28.36 & 9 & 26.75 & 10 & 1.93 & 10 & 0.32 & 10 & 2.26 \\ 
(2,20,m,10) & 10 & 0.44 & 0 & - & 8 & 27.58 & 10 & 29.82 & 10 & 0.50 & 10 & 0.55 & 10 & 0.27 \\ 
(4,10,s,1) & 10 & 0.22 & 0 & - & 9 & 13.74 & 10 & 14.66 & 10 & 0.32 & 10 & 0.20 & 10 & 2.00 \\ 
(4,10,s,10) & 10 & 0.68 & 0 & - & 9 & 10.77 & 10 & 14.96 & 10 & 0.40 & 10 & 0.34 & 10 & 1.67 \\ 
(4,10,m,1) & 10 & 0.21 & 0 & - & 9 & 14.08 & 9 & 12.76 & 10 & 1.44 & 10 & 2.55 & 9 & 0.75 \\ 
(4,10,m,10) & 10 & 0.62 & 0 & - & 8 & 24.38 & 8 & 21.75 & 10 & 0.80 & 10 & 0.64 & 8 & 4.93 \\ 
(4,15,s,1) & 10 & 0.37 & 0 & - & 5 & 36.69 & 4 & 39.25 & 10 & 6.24 & 10 & 4.67 & 8 & 1.61 \\ 
(4,15,s,10) & 10 & 2.08 & 0 & - & 0 & - & 0 & - & 10 & 5.71 & 10 & 5.75 & 9 & 3.41 \\ 
(4,15,m,1) & 9 & 0.37 & 0 & - & 4 & 32.73 & 6 & 33.66 & 9 & 7.03 & 9 & 2.95 & 10 & 1.44 \\ 
(4,15,m,10) & 10 & 1.64 & 0 & - & 2 & 32.53 & 2 & 29.52 & 10 & 2.34 & 10 & 4.21 & 9 & 5.00 \\ 
(4,20,s,1) & 10 & 0.47 & 0 & - & 4 & 34.85 & 4 & 33.88 & 10 & 4.96 & 10 & 1.84 & 6 & 0.91 \\ 
(4,20,s,10) & 10 & 2.35 & 0 & - & 2 & 17.67 & 2 & 16.03 & 10 & 13.06 & 10 & 11.34 & 9 & 0.74 \\ 
(4,20,m,1) & 10 & 0.42 & 0 & - & 5 & 42.39 & 6 & 39.76 & 10 & 4.02 & 10 & 2.74 & 8 & 0.72 \\ 
(4,20,m,10) & 9 & 2.31 & 0 & - & 1 & 25.39 & 1 & 26.66 & 10 & 5.64 & 10 & 6.30 & 7 & 6.24 \\ 
(10,10,s,1) & 8 & 0.73 & 0 & - & 4 & 29.08 & 5 & 37.21 & 9 & 4.71 & 9 & 4.51 & 9 & 2.04 \\ 
(10,10,s,10) & 10 & 7.34 & 0 & - & 2 & 10.92 & 2 & 11.14 & 8 & 17.21 & 8 & 11.96 & 8 & 3.12 \\ 
(10,10,m,1) & 10 & 0.70 & 0 & - & 7 & 36.48 & 6 & 25.91 & 10 & 5.40 & 10 & 7.92 & 9 & 2.13 \\ 
(10,10,m,10) & 10 & 7.07 & 0 & - & 1 & 5.41 & 1 & 4.74 & 8 & 14.77 & 7 & 6.55 & 7 & 6.49 \\ 
(10,15,s,1) & 10 & 1.69 & 0 & - & 0 & - & 0 & - & 7 & 9.17 & 8 & 8.50 & 5 & 9.35 \\ 
(10,15,s,10) & 10 & 15.08 & 0 & - & 0 & - & 0 & - & 9 & 17.27 & 8 & 13.07 & 8 & 1.89 \\ 
(10,15,m,1) & 7 & 1.60 & 0 & - & 0 & - & 0 & - & 6 & 14.20 & 6 & 9.33 & 8 & 2.25 \\ 
(10,15,m,10) & 10 & 33.70 & 0 & - & 0 & - & 0 & - & 2 & 6.79 & 2 & 20.22 & 7 & 3.86 \\ 
(10,20,s,1) & 6 & 1.98 & 0 & - & 1 & 39.26 & 1 & 39.97 & 6 & 13.15 & 6 & 14.75 & 9 & 2.19 \\ 
(10,20,s,10) & 8 & 29.80 & 0 & - & 0 & - & 0 & - & 4 & 25.34 & 4 & 28.61 & 9 & 7.49 \\ 
(10,20,m,1) & 4 & 1.81 & 0 & - & 0 & - & 0 & - & 6 & 12.52 & 7 & 9.05 & 9 & 2.49 \\ 
(10,20,m,10) & 10 & 26.98 & 0 & - & 0 & - & 0 & - & 5 & 20.31 & 5 & 20.32 & 8 & 2.35 \\ 
\hline 
\end{tabular}} \caption{The performances of the various methods applied to the JCDFG. 
The ``GNE'' column of a method displays how often an equilibrium was found while the ``Time'' column shows how long it took (in seconds) to compute the equilibrium on average. The ``Non-Existence'' column shows how often \textsc{BARON} was able to give a lower bound bigger zero on the objective, i.e.~giving a certificate for non existence of equilibria and the corresponding ``Time'' column shows how long it took (in seconds) to compute the lower bound.}
\label{tab: JCDFG}
\end{table}

 \begin{table}[h!]
     \centering
     \scalebox{0.92}{
\begin{tabular}{lccccccccccc}\toprule
 & \multicolumn{4}{c}{Quasi-Linear} &  \multicolumn{2}{c}{$\hat V$} & \multicolumn{2}{c}{$\hat{V}^\pen$} & \multicolumn{2}{c}{BR} 
\\\cmidrule(lr){2-5}  \cmidrule(lr){6-7}\cmidrule(lr){8-9} \cmidrule(lr){10-11}
     Instance-type & GNE & Time & Non-Existence & Time & GNE & Time & GNE & Time  & GNE & Time  \\\midrule
(2,10,s,1) & 10 & 0.19 & 0 & - & 10 & 6.91 & 10 & 7.91 & 10 & 4.08 \\ 
(2,10,s,10) & 7 & 0.18 & 0 & - & 7 & 9.34 & 7 & 5.61 & 7 & 0.24 \\ 
(2,10,m,1) & 9 & 0.14 & 0 & - & 9 & 11.39 & 9 & 10.09 & 9 & 0.35 \\ 
(2,10,m,10) & 4 & 0.18 & 3 & 0.23 & 4 & 13.07 & 4 & 13.28 & 4 & 0.25 \\ 
(2,15,s,1) & 10 & 0.18 & 0 & - & 9 & 18.74 & 9 & 18.70 & 9 & 1.77 \\ 
(2,15,s,10) & 9 & 0.30 & 0 & - & 7 & 23.70 & 7 & 24.66 & 8 & 2.73 \\ 
(2,15,m,1) & 10 & 0.16 & 0 & - & 10 & 20.20 & 10 & 18.68 & 10 & 3.02 \\ 
(2,15,m,10) & 7 & 0.31 & 1 & 0.38 & 6 & 16.95 & 6 & 16.32 & 7 & 0.31 \\ 
(2,20,s,1) & 10 & 0.20 & 0 & - & 9 & 16.40 & 9 & 14.82 & 10 & 5.09 \\ 
(2,20,s,10) & 6 & 0.32 & 1 & 0.48 & 7 & 22.20 & 7 & 19.62 & 8 & 0.27 \\ 
(2,20,m,1) & 10 & 0.20 & 0 & - & 10 & 23.56 & 10 & 25.71 & 10 & 1.10 \\ 
(2,20,m,10) & 5 & 0.40 & 1 & 0.44 & 5 & 17.45 & 5 & 18.39 & 5 & 0.28 \\ 
(4,10,s,1) & 9 & 0.23 & 0 & - & 9 & 21.20 & 9 & 26.56 & 8 & 0.66 \\ 
(4,10,s,10) & 8 & 0.35 & 0 & - & 7 & 9.08 & 7 & 8.63 & 9 & 0.55 \\ 
(4,10,m,1) & 7 & 0.20 & 2 & 0.54 & 7 & 16.16 & 5 & 15.15 & 7 & 0.54 \\ 
(4,10,m,10) & 5 & 0.49 & 0 & - & 4 & 12.02 & 5 & 21.45 & 5 & 0.52 \\ 
(4,15,s,1) & 6 & 0.35 & 0 & - & 2 & 42.98 & 2 & 43.80 & 7 & 2.09 \\ 
(4,15,s,10) & 2 & 1.35 & 0 & - & 0 & - & 0 & - & 5 & 0.85 \\ 
(4,15,m,1) & 8 & 0.42 & 0 & - & 3 & 47.19 & 2 & 27.11 & 7 & 2.09 \\ 
(4,15,m,10) & 1 & 1.02 & 1 & 0.67 & 1 & 5.40 & 1 & 5.71 & 1 & 0.67 \\ 
(4,20,s,1) & 7 & 0.46 & 0 & - & 3 & 36.39 & 3 & 37.67 & 8 & 1.11 \\ 
(4,20,s,10) & 4 & 1.76 & 0 & - & 0 & - & 0 & - & 4 & 0.61 \\ 
(4,20,m,1) & 9 & 0.42 & 0 & - & 3 & 45.70 & 2 & 35.60 & 8 & 0.70 \\ 
(4,20,m,10) & 2 & 1.10 & 0 & - & 0 & - & 1 & 55.62 & 2 & 0.59 \\ 
(10,10,s,1) & 3 & 0.83 & 1 & 5.21 & 2 & 19.11 & 4 & 32.08 & 4 & 1.93 \\ 
(10,10,s,10) & 4 & 8.24 & 0 & - & 1 & 21.72 & 2 & 35.24 & 4 & 1.73 \\ 
(10,10,m,1) & 3 & 0.73 & 3 & 0.61 & 1 & 12.12 & 2 & 23.98 & 3 & 1.93 \\ 
(10,10,m,10) & 2 & 3.89 & 0 & - & 0 & - & 1 & 43.77 & 2 & 2.04 \\ 
(10,15,s,1) & 4 & 1.82 & 0 & - & 0 & - & 0 & - & 4 & 4.17 \\ 
(10,15,s,10) & 5 & 11.89 & 0 & - & 0 & - & 0 & - & 5 & 1.80 \\ 
(10,15,m,1) & 2 & 2.15 & 0 & - & 0 & - & 0 & - & 2 & 14.36 \\ 
(10,15,m,10) & 1 & 39.69 & 1 & 5.11 & 0 & - & 0 & - & 1 & 1.80 \\ 
(10,20,s,1) & 5 & 2.08 & 0 & - & 1 & 54.64 & 0 & - & 5 & 2.01 \\ 
(10,20,s,10) & 5 & 39.86 & 0 & - & 0 & - & 1 & 40.44 & 6 & 2.01 \\ 
(10,20,m,1) & 1 & 1.73 & 0 & - & 1 & 45.32 & 1 & 33.58 & 1 & 2.26 \\ 
(10,20,m,10) & 4 & 14.06 & 0 & - & 0 & - & 0 & - & 4 & 1.78 \\ 
\hline
\end{tabular}}
     \caption{The performances of the various methods applied to the ICDFG. 
The ``GNE'' column of a method displays how often an equilibrium was found while the ``Time'' column shows how long it took (in seconds) to compute the equilibrium on average. The ``Non-Existence'' column shows how often \textsc{BARON} was able to give a lower bound bigger zero on the objective, i.e.~giving a certificate for non existence of equilibria and the corresponding ``Time'' column shows how long it took (in seconds) to compute the lower bound.}
     \label{tab: ICDFG}
 \end{table}

\end{document}